\newtheorem{theorem}{Theorem}
\newtheorem{lemma}{Lemma}
\newcommand*{\Rom}[1]{\uppercase\expandafter{\romannumeral #1\relax}}  
\DeclareMathOperator{\Reals}{\mathbb{R}}
\newcommand{\brac}[1]{\left(#1\right)}
\newcommand{\sbrac}[1]{\left[#1\right]}
\newcommand{\cbrac}[1]{\left\{#1\right\}}
\newcommand{\bvec}[1]{\boldsymbol{#1}}
\newcommand{\bvecgreek}[1]{\bm{#1}}
\newcommand{\expect}[1]{\mathbb{E}\sbrac{#1}}
\newcommand{\abs}[1]{\left| #1 \right|}
\newcommand{\rbartime}[3]{\overline{r}_{#1}^{\epsilon}\brac{#2,#3} }
\DeclareMathOperator*{\E}{\mathrm{E}}
\theoremstyle{assumption}
\newtheorem{assumption}{Assumption}
\theoremstyle{definition}
\newtheorem{definition}{Definition}
\theoremstyle{statement}
\theoremstyle{corollary}
\newtheorem{corollary}{Corollary}
\begin{document}

\title{Joint Scheduling of URLLC and eMBB Traffic in 5G Wireless Networks }


 \author[*]{Arjun Anand}
\author[*]{Gustavo de Veciana}
\author[*]{Sanjay Shakkottai}
\affil[*]{Department of Electrical and Computer Engineering, The University of Texas at Austin}

\maketitle


\thispagestyle{plain}
\pagestyle{plain}

\begin{abstract}
Emerging 5G systems will need to efficiently support both enhanced mobile broadband
traffic (eMBB) and ultra-low-latency communications (URLLC) traffic.  In these
systems, time is divided into slots which are further sub-divided into
minislots. From a scheduling perspective, eMBB resource allocations
occur at slot boundaries, whereas to reduce latency URLLC traffic is
pre-emptively overlapped at the minislot timescale, resulting in
selective superposition/puncturing of eMBB allocations. This approach
enables minimal URLLC latency at a potential rate loss to eMBB
traffic.

We study joint eMBB and URLLC schedulers for such systems, with the
dual objectives of maximizing utility for eMBB traffic while immediately
satisfying   URLLC demands. For a linear rate loss model
(loss to eMBB is linear in the amount of URLLC superposition/puncturing), we
derive an optimal joint scheduler. Somewhat counter-intuitively, our
results show that our dual objectives can be met by an iterative
gradient scheduler for eMBB traffic that anticipates the expected loss
from URLLC traffic, along with an URLLC demand scheduler that is
oblivious to eMBB channel states, utility functions and allocation
decisions of the eMBB scheduler.  Next we consider a more general
class of (convex/threshold) loss models and study optimal online joint
eMBB/URLLC schedulers within the broad class of channel state
dependent but minislot-homogeneous policies. A key observation is that unlike the linear rate loss model, for the convex and threshold rate loss models, optimal eMBB and URLLC scheduling decisions do not de-couple and joint optimization is necessary to satisfy the dual objectives.   We validate the
characteristics and benefits of our schedulers via simulation.

\end{abstract}

\begin{IEEEkeywords}
wireless scheduling, URLLC traffic, 5G systems
\end{IEEEkeywords}


\section{Introduction}
\label{sec:intro}

An important requirement for 5G wireless systems is its ability to
efficiently support both broadband and ultra reliable low-latency 
communications. On one hand enhanced Mobile
Broadband (eMBB) might require  gigabit per second data rates {
  (based on a bandwidth of several 100 MHz)} and a  moderate latency (a few
milliseconds). On the other hand, Ultra Reliable Low Latency
Communication (URLLC) traffic requires extremely low delays (0.25-0.3
msec/packet) with very high reliability
(99.999\%)~\cite{3gpp_ran1_87}. To satisfy these heterogenous
requirements, the 3GPP standards body has proposed an innovative
\emph{superposition/puncturing} framework for multiplexing URLLC and
eMBB traffic in 5G cellular systems\footnote{An earlier version of this work appears in the Proceedings of IEEE Infocom 2018, Honolulu, HI,~\cite{AndeVSh18}.}.

The proposed scheduling framework has the following structure
\cite{3gpp_ran1_87}.  As with current cellular systems, time is
divided into slots, with a  proposed one millisecond (msec) slot
duration. Within each slot, eMBB traffic can share the bandwidth over
the time-frequency plane (see
Figure~\ref{fig:time_frequency_plane}). The sharing mechanism can be
opportunistic (based on the channel states of various users); however,
the eMBB shares are decided by the beginning, and fixed for the
duration of a slot\footnote{The sharing granularity among various eMBB
  users is at the level of Resource Blocks (RB), which are small
  time-frequency rectangles within a slot. In LTE today, these are (1
  msec $\times$ 180 KHz), and could be smaller for 5G systems.}.
Further the new framework also allows aggregation of eMBB slots where transmissions to an eMBB user over consecutive slots are coded together  to achieve better coding gains resulting from long codewords while reducing  overheads due to control signals. This results in better spectral efficiency as compared to the OFDMA frame structure of LTE~\cite{pbfms16}. 

\begin{figure}
\centering
\includegraphics[width=2.5in]{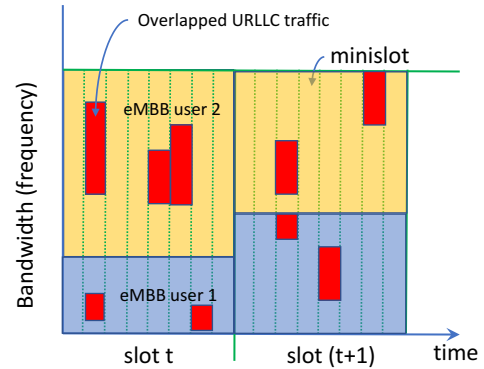}
\caption{Illustration of superposition/puncturing approach for
  multiplexing eMBB and URLLC: Time is divided into slots, and further
  subdivided into minislots. eMBB traffic is scheduled at the
  beginning of slots (sharing frequency across two eMBB users),
  whereas URLLC traffic can be dynamically overlapped
  (superpose/puncture) at any minislot.}
\label{fig:time_frequency_plane}
\end{figure} 

URLLC downlink packets may arrive during an ongoing eMBB transmission;
if tight latency constraints are to be satisfied, they cannot be
queued until the next slot.  Instead each eMBB slot is divided into
minislots, each of which has a 0.125 msec duration\footnote{In 3GPP,
  the formal term for a `slot' is eMBB TTI, and a `minislot' is a
  URLLC TTI, where TTI expands to Transmit Time Interval.}. Thus upon
arrival URLLC packets can be immediately scheduled in the next minislot
{\em on top of the ongoing eMBB transmissions.} If the Base Station (BS)
chooses non-zero transmission powers for both eMBB and overlapping
URLLC traffic, then this is referred to as {\em superposition.} If
eMBB transmissions are allocated zero power when URLLC traffic is
overlapped, then it is referred to as {\em puncturing} of eMBB
transmissions. To achieve high reliability URLLC transmissions are by design 
protected through coding and HARQ if necessary.
At the end of an eMBB slot, the BS can signal  eMBB users the
locations, if any, of URLLC superposition/puncturing.  eMBB users
can then use this information to decode transmissions, with some
possible loss of rate depending on the amount of URLLC overlap.
See~\cite{3gpp_ran1_87, 3gpp_ran1_88} for additional details.

A key problem in this setting is  the {\em joint scheduling of
  eMBB and URLLC traffic over two time-scales.}  At the slot boundary,
resources are allocated to eMBB users (with possible aggregation of slots) based on their channel states 
and utilities, in effect, allocating long term rates 
to optimize high-level goals (e.g. utility optimization). 
Meanwhile, at each minislot boundary, the (stochastic) URLLC
demands are placed  onto previously scheduled and ongoing
eMBB transmissions. Decisions on the placement of such overlaps 
across scheduled eMBB user(s) will impact the rates they will see
on that slot. Thus we have a coupled problem of jointly optimizing
the scheduling of eMBB users on slots with the placement of URLLC demands 
across minislots.

\subsection{Main Contributions}
\label{sec:main-contribs}

This paper is, to our knowledge, the first to formalize and solve the
joint eMBB/URLLC scheduling problem described above. We consider
various models for the eMBB rate loss associated with URLLC
superposition/puncturing, for which we characterize the associated
feasible throughput regions and propose online joint scheduling algorithms as
detailed below.


\noindent \textbf{Linear Model:} When the rate loss to eMBB is
directly proportional to the fraction of superposed/punctured
minislots, we show that the joint optimal scheduler has a nice
decomposition.
Despite having non-linear utility functions and time-varying channel
states, the stochastic URLLC traffic can be {\em uniform-randomly
  placed} in each minislot, while the eMBB scheduler can be scheduled
via a greedy iterative gradient algorithm that only accounts for the
{\em expected} rate loss due to the URLLC traffic.

\noindent \textbf{Convex Model:} For more general settings where the
rate loss can be modeled by  a convex function, the solution does
not have the decomposition property as in the linear model and hence,
the finding the  optimal solution is challenging. Therefore,  we
restrict to a simpler class of joint scheduling policies called as
\emph{minislot-homogeneous} joint scheduling policies where  the URLLC
placement policy does not change across the minislots in an eMBB
slot. In this setting, we characterize the capacity region and derive
concavity conditions under which we can 
derive the effective rate seen by eMBB users (post-puncturing by URLLC
traffic). We then develop a stochastic approximation algorithm which jointly
schedules eMBB and URLLC traffic, and show that it asymptotically
maximizes the utility for eMBB users while satisfying URLLC demands. We
also show that for convex  functions which are
\emph{homogeneous}, minislot-homogeneous joint scheduling
policies are optimal within  the larger class of \emph{causal} and
\emph{non-anticipative} joint scheduling policies.  
Further for the convex loss model, we show that it is better to
schedule eMBB users to share bandwidth (i.e. slice across frequency,
see also Fig.~\ref{fig:config_2}), and let  each user occupy the entire
slot duration  to mitigate rate loss due to URLLC puncturing.


\noindent \textbf{Threshold Model:} Finally we consider a  loss
model, where eMBB traffic is unaffected by puncturing until a
threshold is reached; beyond this threshold it suffers complete throughput loss
(a 0-1 rate loss model). We consider two broad classes of minislot
homogeneous policies, where the URLLC traffic is placed in minislots in
proportion to the eMBB resource allocations (Rate
Proportional (RP)) or eMBB loss thresholds (Threshold Proportional (TP)). We motivate these policies (e.g. TP minimizes the probability of any eMBB loss in an eMBB slot) and derive the associated throughput regions.  Finally, we 
utilize the additional structure underlying the RP and TP Placement
policies along with the shape of the threshold loss function to
derive fast gradient algorithms that converge and provably maximize
utility. 


\subsection{Related Work}
\label{sec:related}

Resource allocation, utility maximization and opportunistic scheduling
for downlink wireless systems have  been intensely  studied in the last two
decades, and have had a major impact on cellular standards. We refer to
\cite{sriyin14,GeoNeeTas06} for a survey of the key results. In this
paper, we focus on joint scheduling of URLLC and eMBB traffic. From an
application point of view, there have been several studies arguing for
the need to support URLLC services (e.g. for industrial automation)
\cite{hwwtahaa16,ywjbas15,gla17}.

With demand of both broadband and low-latency services growing, there
has been rapid developments in the 5G standardization efforts in 3GPP.
Of key relevance to this paper, the 3GPP RAN WG1 has focused on
standardizing slot structure for eMBB and URLLC, and have been
evaluating signaling and control channels to support superposition and
puncturing in recent meetings \cite{3gpp_ran1_87,3gpp_ran1_88}. We
specifically refer the reader to Sections~8.1.1.3.4 -- 8.1.1.3.6 in
\cite{3gpp_ran1_88} for current proposals.

Beyond standards, recent work has focused on system level design for
such systems (overheads, packet sizes, control channel structure,
etc.) \cite{pbfms16,ljcjs17,dkp16}. Of particular note, \cite{ljcjs17}
argues (based on system level simulation and queuing models) that
statically partitioning bandwidth between eMBB and URLLC is very
inefficient. There have also been several studies focusing on
physical layer aspects of URLLC (coding and modulation, fading, link
budget) \cite{dkopy16,sltum16}.

 Efficient  sharing of radio resources between eMBB and URLLC traffic has been discussed in literature, see~\cite{ylpy18, ptsd18, ksp18}. In~\cite{ylpy18}, the authors have considered joint optimization of  resource allocation for eMBB and URLLC traffic. However, they do not use puncturing/superposition mechanisms to share resources. Some works (\cite{ptsd18, ksp18}) use information theoretic results to obtain expressions for the average eMBB rates under URLLC puncturing for various decoding schemes for uplink eMBB traffic punctured/superposed by URLLC users. However, they do not consider the design of joint scheduling algorithms for eMBB and URLLC traffic.   To the best of our knowledge, our paper is the first to explore the resource allocation issues for joint scheduling of URLLC and eMBB traffic using puncturing/superposition based mechanisms.

\section{System Model}
\label{sec:sys-model}



{\em \bf  Traffic model:}
We consider a wireless system supporting a fixed set $ \mathcal{U} $ 
of backlogged eMBB users and a
stationary process of URLLC  demands. eMBB scheduling decisions are made across slots while URLLC demands arrive  and are immediately scheduled in the next minislot.  In this section we shall consider the case where eMBB all users receive resources for slots without using slot aggregation even though more flexible resource allocations which can possibly include slot aggregation and splitting are proposed in 5G standards~\cite{pbfms16}. We shall justify this choice in Sec.~\ref{sec:eMBB_slot_aggregation}. 
Each eMBB slot has an associated set of 
minislots where the set ${\cal M} = \{ 1,\dots |{\cal M}| \}$ denotes
their indices.   
URLLC demands across minislots are modeled as an independent and identically distributed (i.i.d.) 
random process. We let the random variables 
$(D(m), m \in {\cal M})$ denote the URLLC demands per minislot for a typical eMBB slot and 
let $D$ be a random variable whose distribution is that of the aggregate URLLC demand per eMBB slot, i.e.,
$D \sim \sum_{m \in {\cal M}} D(m)$ with, cumulative distribution function $F_D(\cdot)$ and  mean $E[D] = \rho.$ 
We assume demands have been normalized so the maximum URLLC demand per minislot is $f$ and the maximum aggregate demands per eMBB slot is 
$f \times |{\cal M}| = 1$ i.e., all the frequency-time resources are occupied.
URLLC demands per minislot exceeding the system capacity are blocked by URLLC scheduler thus $D \leq 1$ almost surely. 
The system is engineered so that blocked URLLC traffic on a minislot 
 is a rare event, i.e., satisfies the desired reliability on such traffic. 

{\em \bf Wireless channel variations:}
The wireless system experiences channel variations each eMBB slot which are modeled as an i.i.d. 
random process over a set of channel states ${\cal S} = \{1,\ldots, |{\cal S}|\}.$  
Let $S$ be a random variable modeling the distribution over the states in a typical eMBB slot 
with probability mass function $p_S(s) = P(S=s)$ for $s \in {\cal S}.$  
For each channel state $s$ eMBB user $u$ has a known peak rate $\hat{r}_{u}^s.$ 
The wireless system can choose what proportions of the frequency-time resources to allocate to
each eMBB user on each minislot for each channel state. This is modeled by a 
 matrix ${\bm \phi} \in \Sigma $ where
\begin{multline}
 \Sigma  :=  \left \{ \bvecgreek{\phi} \in \Reals_+^{|{\cal U}|\times|{\cal M}|\times|{\cal S}|}
 ~|~ \right. \\ \left.   \sum_{u \in {\cal U}}  \phi_{u,m}^{s} =f,  \forall m \in {\cal M}, s \in {\cal S} \right\} 
\end{multline}
and where the element $\phi_{u,m}^s$ represents the fraction of resources allocated to user $u$
in mini slot $m$ in channel state $s$. We also let $\phi_{u}^{s} = \sum_{m\in {\cal M}} \phi_{u,m}^{s}$, i.e., the total resources
allocated to user $u$ in an eMBB slot in channel state $s$.
Now assuming no superposition/puncturing if the system is in channel state $s$ and the eMBB scheduler chooses 
an allocation ${\bm \phi}$ the rate $r_u$ allocated to user $u$ would be given by  
$ r_u = \phi_{u}^s \hat{r}_{u}^{s}. $
The scheduler is assumed to know the channel state and can
thus opportunistically exploit such variations  in allocating resources 
to eMBB users. Note that for simplicity, we adopt a flat-fading model,
namely, the rate achieved by an user is directly proportional to 
the fraction of bandwidth allocated to it (the scaling factor is the
peak rate of the user for the current channel state).



{\em \bf Class of joint eMBB/URLLC schedulers:}  We consider a class
of stationary joint eMBB/URLLC schedulers denoted by $\Pi$ satisfying
the following properties.  A scheduling policy combines a possibly
state dependent eMBB {\em resource allocation} matrix $\bvec{ \phi}$ per slot
with a URLLC {\em demand placement} strategy across minislots.  The
placement strategy may impact the eMBB users' rates since it affects
the URLLC superposition/puncturing loads they will experience.  As
mentioned earlier in discussing the traffic model, in order to meet
low latency requirements URLLC traffic demands are scheduled
immediately upon arrival or blocked.  The scheduler is assumed to be
{\em causal} so it only knows the current (and past) channel states
and peak rates
$ \hat{r}_{u}^{s}$ for all  $u \in {\cal U}$ and  $s \in {\cal S}$ but does
not know the realization of future channels or URLLC traffic demands.
In making superposition/puncturing decisions across minislots, the
scheduler can use knowledge of the previous placement decisions that
were made.  In addition the scheduler is assumed to know (or able
measure over time) the channel state distribution across eMBB slots
and URLLC demand distributions per minislot i.e., that of $D(m)$, and
per eMBB slot, i.e., $D$, and thus in particular knows $\rho = E[D]$.

In summary a joint scheduling policy $\pi \in \Pi$ is thus characterized by the following:
\begin{itemize}
\item an eMBB resource allocation ${\bm \phi}^{\pi} \in \Sigma$ where  
${\phi}^{\pi,s}_{u,m}$ denotes the fraction of frequency-time slot resources allocated to eMBB 
user $u$ on minislot $m$ when the system is in state $s$.
\item the distributions of URLLC loads across eMBB resources induced by its URLLC placement
strategy, denoted by random variables
${\bf L}^{\pi} = (L^{\pi,s}_{u,m} | u \in {\cal U}, m \in {\cal M}, s \in {\cal S})$ where
$L^{\pi,s}_{u,m}$ denotes the URLLC load superposed/puncturing the resource
allocation of user $u$ on minislot $m$ when the channel is in state $s$.  
\end{itemize}

The distributions of $L^{\pi,s}_{u,m}$ and their associated means $\overline{l}_{u,m}^{\pi,s}$
depend on the joint scheduling policy $\pi$, but for all states, users and minislots
satisfy
$$
L^{\pi,s}_{u,m} \leq {\phi}^{\pi,s}_{u,m}~~~  \mbox{almost surely}.
$$
In the sequel we let $L^{\pi,s}_{u} = \sum_{m \in {\cal M}} L^{\pi,s}_{u,m}$,
i.e., the aggregate URLLC traffic superposed/puncturing user $u$ in channel state $s$,
and denote its mean by $\overline{l}_{u}^{\pi,s}$ and note that
$$
L^{\pi,s}_{u} \leq {\phi}^{\pi,s}_{u} \quad  \mbox{almost surely}.
$$
We also let $L^{\pi,s} := \sum_{u \in {\cal U}} L^{\pi,s}_{u}$
denote the aggregate induced load and note that 
any policy $\pi$ and for any state $s$ we have that
$$ 
\rho = \E[D] = \E[L^{\pi,s }] = \E[\sum_{u \in {\cal U}}  L_{u}^{\pi,s}] = \sum_{u \in {\cal U}}  \overline{l}_{u}^{\pi,s}.
$$


{\em \bf Modeling superposition/puncturing and eMBB capacity regions:}
Under a joint scheduling policy $\pi$ 
we model the rate achieved by an eMBB user $u$ in channel state $s$ 
by a random variable  
\begin{eqnarray}
R^{\pi,s}_{u} &=& f_{u}^{s} ( \phi^{\pi,s}_{u}, L_{u}^{\pi,s}), \label{eqn:tput-fn}
\end{eqnarray}
where the {\em rate allocation function} $f_{u}^{s}(\cdot ,\cdot )$ models the impact of URLLC
superposition/puncturing -- one would expect it to be  
increasing in the first argument (the allocated resources)
and decreasing  in the second argument (the amount superposition/puncturing by URLLC traffic). 
 Under our system model we have that 
$$
R^{\pi,s}_{u} \leq  f_{u}^{s} ( \phi^{\pi,s}_{u},0) = \phi^{\pi,s}_{u} \hat{r}_{u}^{s} ~\mbox{almost surely},
$$
with equality if there is no superposition/puncturing, i.e., when $l_{u}^{s}=0.$ 
Let  $\overline{r}^{\pi,s}_{u} = E[ R^{\pi,s}_{u}]$ denote the mean rates achieved by user $u$ in state $s$
under the URLLC superposition/puncturing distribution induced by scheduling policy $\pi$.

\noindent {\bf Models for Throughput Loss:} In the sequel we shall
consider specific forms of superposition/puncturing loss models: {\em (i)}
linear, {\em (ii)} convex, and {\em (iii)} threshold models.

We rewrite the rate allocation function in (\ref{eqn:tput-fn}) as the
difference between the peak throughput and the loss due to URLLC
traffic, and consider functions that can be decomposed as:
$$
f_{u}^{s}(\phi_{u}^{s},l_{u}^{s}) =  \hat{r}_{u}^{s} \phi_{u}^{s}
\left(1-  h^s_u \left( \frac{L_{u}^{\pi,s}}{\phi_{u}^{s}}\right)\right), 
$$
where $h^s_u :[0,1] \rightarrow [0,1]$ is the {\em rate loss function}
and captures the relative rate loss due to URLLC overlap on eMBB allocations. 
The puncturing models we study now map directly to structural
assumptions on the rate loss function $h^s_u(\cdot);$ namely it is a
non-decreasing function, and is one of {\em linear, convex, or
  threshold} as shown in 
Figure~\ref{fig:lossfunctions}.

\begin{figure}
\centering
\includegraphics[height=1.5in,width=3.5in]{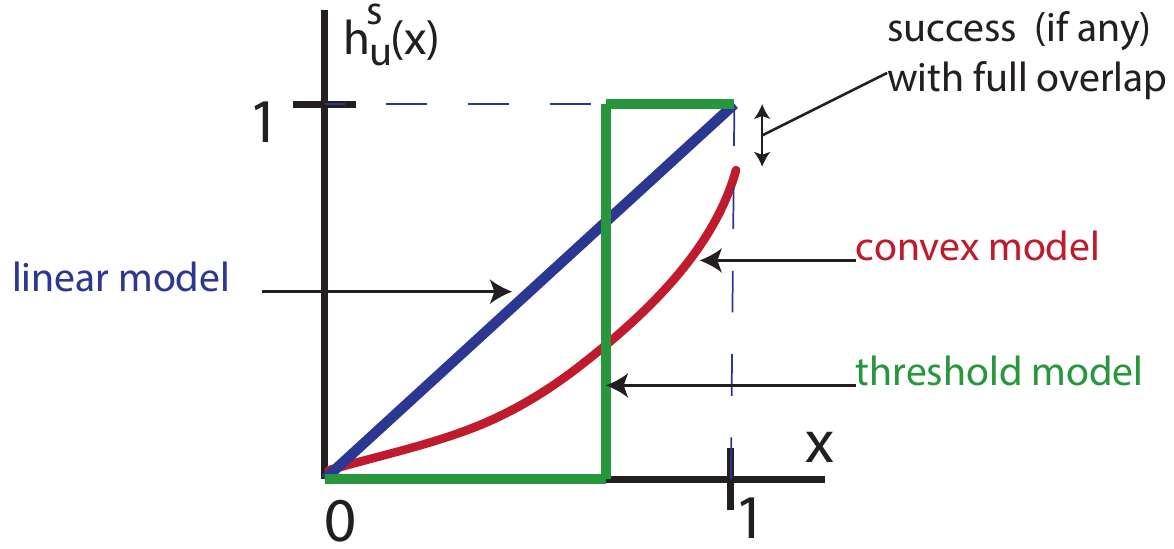}
\caption{The illustration exhibits the rate loss function for the
  various models considered in this paper, linear, convex and
  threshold.}
\label{fig:lossfunctions}
\end{figure}


\noindent {\bf Linear Model:} Under the linear model, the expected rate
for user $u$ in channel state $s$ for policy $\pi$ is given by
$$
r^{\pi,s}_{u} = \E [f_{u}^{s} ( \phi^{\pi,s}_{u}, L_{u}^{\pi,s})]=
\hat{r}_{u}^{s}( \phi^{\pi,s}_{u} - \overline{l}_{u}^{\pi,s}), 
$$
i.e., $h^s_u(x) = x,$ and the resulting rate to eMBB users is
a linear function of both the allocated resources and mean
induced URLLC loads.  This model is motivated by basic results for the
channel capacity of AWGN channel with erasures, see~\cite{Julian_2002}
for more details. Our system in a given network state can be
approximated as an AWGN channel with erasures, when the slot sizes are
long enough so that the physical layer error control coding of eMBB
users use long code-words. Further, there is a dedicated control
channel through which the scheduler can signal to the eMBB receiver
indicating the positions of URLLC overlap. Indeed such a control
channel has been proposed in the 3GPP
standards~\cite{3gpp_ran1_87}. Note that under this model the rate
achieved by a given user depends on the aggregate
superposition/puncturing it experiences, i.e., does not depend on
which minislots and frequency bands it occurs.  We discuss scheduling
policies for  linear loss models in Section~\ref{sec:erasure_channel}.

\noindent {\bf Convex Model:} In the convex model, the rate
loss function $h^s_u(\cdot)$ is convex (see
Figure~\ref{fig:lossfunctions}), and the resulting rate for eMBB user
$u$ in channel state $s$ under policy $\pi$ is given by
$$
r^{\pi,s}_{u} = \E [f_{u}^{s} ( \phi^{\pi,s}_{u}, L_{u}^{\pi,s})]= 
\hat{r}_{u}^{s} \phi^{\pi,s}_{u} \left(1- E\left[ h^s_u \left( \frac{L^{\pi,s}_{u}}{\phi_{u}^{\pi,s}}\right)\right]\right).
$$
This covers a broad class of models, and is discussed in
Section~\ref{sec:convex_model}. 

\noindent {\bf Threshold Model:} Finally the threshold model is designed to capture a simplified 
packet transmission and decoding process in an eMBB receiver. The data
is either received perfectly or it is lost depending on the amount of
superposition/puncturing.   
With slight abuse of notation we shall let $h^s_u$ also depend on both the relative URLLC load
and the eMBB user allocation, i.e., $h^s_u (x) = {\bf 1} (x \geq t_u^s(\phi^s_u))$ where 
the threshold in turn is an increasing function $t_{u}^{s}(\cdot)$ satisfying  $x \geq t_{u}^{s}(x) \geq 0.$
Such thresholds might reflect various engineering choices where codes are adapted when users are 
allocated more resources, so as to be more robust to interference/URLLC superposition/puncturing.
The resulting rate for eMBB user $u$ in channel state $s$ and policy $\pi$ is then given by 
$$
r^{\pi,s}_{u} = \hat{r}_{u}^{s} \phi^{\pi,s}_u P(L_{u}^{\pi,s} \leq  \phi_u^{\pi,s} t_{u}^{s}(\phi_{u}^{\pi,s})).
$$ 
While such a sharp falloff is somewhat extreme, it is nevertheless
useful for modeling short codes that are designed to tolerate a limited
amount of interference.  In practice one might expect a smoother fall
off, perhaps more akin to the convex model, e.g., when hybrid ARQ
(HARQ) is used.  We discuss polices under the threshold based model in
Section~\ref{sec:threshold_model}.



\noindent {\bf Capacity set for eMBB traffic:}
We define the capacity set ${\cal C} \subset \Reals_+^{|{\cal U}|}$
for eMBB traffic as the set of long term rates achievable under policies in $\Pi.$ 
Let ${\bf c}^\pi = ( c_u^\pi | u \in {\cal U} )$ where
$$
c_u^\pi = \sum_{s \in {\cal S}} r^{\pi,s}_{u} p_S(s).
$$
Then the capacity is given by 
$$
{\cal C} = 
\{ \mathbf{c} \in \Reals_+^{|{\cal U}|} ~|~ \exists~ \pi \in \Pi  ~\mbox{such that}~ \mathbf{c} \leq \mathbf{c}^{\pi} \}.
$$
Note that this capacity region depends on the scheduling policies under consideration
as well as the distributions of the channel states and URLLC demands. 

\noindent {\bf Scheduling objective: URLLC priority and eMBB utility
  maximization:} As mentioned earlier, URLLC traffic is immediately
scheduled upon arrival, in the next minislot, i.e, no queuing is
allowed. Thus if demands exceed the system capacity on a given
minislot  traffic would be lost. However, we assume that the system has been engineered so that such
URLLC overloads are extremely rare, and thus URLLC traffic can meet
extremely low latency requirements with high reliability\footnote{Note that since we allow URLLC traffic in the entire system bandwidth,  such overload events are very rare. }.  For eMBB
traffic we adopt a utility maximization framework wherein each eMBB
user $u$ has an associated utility function $U_u(\cdot)$ which is a
strictly concave, continuous and differentiable of the average rate
$c_u^\pi$ experienced by the user.  Our aim is to
characterize optimal rate allocations associated with the utility
maximization problem:
\begin{equation}
\max_{\mathbf{c}} \{  \sum_{u \in \mathcal{U}} U_u\brac{c_u}~|~ \mathbf{c} \in {\cal C}  \},
\end{equation} 
and determine a scheduling policy $\pi$ that will realize such allocations.

\section{Linear Model for Superposition/Puncturing}
\label{sec:erasure_channel}

In any state $ s $, the
optimal joint eMBB/URLLC scheduler may either 1) protect the user with
the lower channel rate by placing less URLLC traffic into its
frequency resources to ensure fairness or 2) opportunistically place
URLLC traffic so that the user with a better channel gets a higher
rate to improve the overall system throughput. The solution for any
state is complex function of network states and their distribution and
user utility functions and in general, eMBB scheduling and URLLC
puncturing may be dependent.
In this section, we show a surprising result -- despite having
non-linear utility functions, if the loss functions are linear and the 
eMBB scheduler is intelligent (i.e., takes into the degradation of
rates due to puncturing), then the URLLC scheduler can be {\em
  oblivious to the channel states, utility functions and the actual
  rate allocations of the eMBB scheduler.}


\subsection{Characterization of capacity region}

Let us consider the capacity region for a wireless system 
based on linear superposition/puncturing model
under a restricted class of policies $\Pi^{LR}$ that combine
feasible eMBB allocations ${\bm \phi} \in \Sigma$ 
with random placement of URLLC demands uniformly over the bandwidth across minislots. Note that the notation $ LR $ stands for linear loss model (L) with random (R) placement of URLLC traffic. 
For any $\pi \in \Pi^{LR}$ with eMBB allocation 
$\bm{\phi}^\pi$ the mean induced loads under such
randomization for each state $s \in {\cal S}$ and minislot $m \in {\cal M}$ 
will satisfy $\overline{l}^{\pi,s}_{u,m} = \rho \phi^{\pi,s}_{u,m}.$
Indeed randomization clearly 
leads to an induced loads that are proportional to the eMBB allocations
on a per mini-slot basis, but also per eMBB slot, i.e.,
$\overline{l}^{\pi,s}_{u} = \rho \phi^{\pi,s}_{u}.$ 
Thus for our linear loss model we have that
$$
r^{\pi,s}_u=   \hat{r}_{u}^{s} (\phi_{u}^{\pi,s}-\overline{l}_{u}^{\pi,s}) 
=   \hat{r}_{u}^{s} \phi_{u}^{\pi,s} (1 - \rho) .
$$
Hence the overall user rates achieved
under such a policy are given by
${\bf c}^{\pi} = ( c_u^{\pi} | u \in {\cal U} )$ 
where
$$
c_u^{\pi}  = \sum_{s\in {\cal S}}  \hat{r}_{u}^{s} \phi_{u}^{\pi,s}(1- \rho) p_S(s).
$$
The capacity region associated with policies that use URLLC uniformly randomized placement is thus given by 
\begin{eqnarray*}
{\cal C}^{LR}  & = &
\{ \mathbf{c} \in \Reals_+^{|{\cal U}|} 
~|~ \exists \pi \in \Pi^{LR} ~\mbox{s.t.}~ \mathbf{c} \leq {\bf c}^{\pi} \}  \\
& = & \{ \mathbf{c} \in \Reals_+^{|{\cal U}|} 
~|~ \exists {\bm{\phi}} \in \Sigma ~\mbox{s.t.}~ 
 \mathbf{c} \leq {\bf c}^{\bm{\phi}} \},
\end{eqnarray*}
where we have abused notation by using ${\bf c}^{\bm{\phi}}$ to
represent the throughput achieved under policy $\pi$ that uses eMBB
resource allocation ${\bm{\phi}}$ and uniformly randomized URLLC demand
placement. Finally note that for any fixed $\rho \in (0, 1),$
$ {\cal C}^{LR}$ is a closed and bounded convex region. This is
because an affine map of a convex region remains convex; hence
multiplying the constraints on the capacity region defined by
${\bm \phi}$ by a constant $(1-\rho)$ preserves convexity of the
rate region.

\begin{theorem} 
\label{thm:theorem_on_erasure_channel_based_model}
For a wireless system under the linear superposition/puncturing loss model 
we have that ${\cal C} = {\cal C}^{LR}.$ 
\end{theorem}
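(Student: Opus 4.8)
The plan is to prove the two inclusions separately. The inclusion $\mathcal{C}^{LR} \subseteq \mathcal{C}$ is immediate: since $\Pi^{LR} \subseteq \Pi$, every rate vector achievable under a uniformly-randomizing policy is achievable under some policy in the larger class. All the work lies in the reverse inclusion $\mathcal{C} \subseteq \mathcal{C}^{LR}$.

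For the reverse inclusion I would fix an arbitrary $\mathbf{c} \in \mathcal{C}$ together with a witnessing policy $\pi \in \Pi$ satisfying $\mathbf{c} \leq \mathbf{c}^\pi$. The key observation is that, under the linear loss model, the mean rate $c_u^\pi = \sum_{s} \hat{r}_u^s (\phi_u^{\pi,s} - \overline{l}_u^{\pi,s}) p_S(s)$ depends on the URLLC placement only through the scalar aggregate mean loads $\overline{l}_u^{\pi,s}$, and not on which minislots or bands are punctured. Moreover, two structural facts hold in every state $s$ irrespective of $\pi$: the per-user loads are dominated by the allocations, $\overline{l}_u^{\pi,s} \leq \phi_u^{\pi,s}$, and their sum is pinned to the offered load, $\sum_{u} \overline{l}_u^{\pi,s} = \rho$ (both established in the system-model section). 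These are precisely the facts that let one replace $\pi$'s placement by uniform randomization without loss.

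Concretely, I would define a candidate eMBB allocation at the per-slot level by $\tilde{\phi}_u^{s} := (\phi_u^{\pi,s} - \overline{l}_u^{\pi,s})/(1-\rho)$. Nonnegativity follows from $\overline{l}_u^{\pi,s} \leq \phi_u^{\pi,s}$ and $\rho \in (0,1)$, while $\sum_u \tilde\phi_u^s = \big(\sum_u \phi_u^{\pi,s} - \sum_u \overline{l}_u^{\pi,s}\big)/(1-\rho) = (1-\rho)/(1-\rho) = 1$, using $\sum_u \phi_u^{\pi,s} = 1$ and $\sum_u \overline{l}_u^{\pi,s} = \rho$. Spreading each $\tilde\phi_u^s$ uniformly across the $|\mathcal{M}|$ minislots (e.g. setting $\tilde\phi_{u,m}^s = f\,\tilde\phi_u^s$) then yields a genuine element of $\Sigma$. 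Pairing this allocation with uniform-random URLLC placement produces a policy in $\Pi^{LR}$ whose rate for user $u$ is $\sum_s \hat{r}_u^s \tilde\phi_u^s (1-\rho) p_S(s) = \sum_s \hat{r}_u^s(\phi_u^{\pi,s} - \overline{l}_u^{\pi,s}) p_S(s) = c_u^\pi \geq c_u$, so $\mathbf{c} \in \mathcal{C}^{LR}$.

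The conceptual heart, and the only real obstacle, is recognizing that linearity collapses the dependence on placement down to the single scalar $\overline{l}_u^{\pi,s}$, and that the load-conservation identity $\sum_u \overline{l}_u^{\pi,s} = \rho$ prevents any cleverly coordinated policy from escaping the uniform $(1-\rho)$ shrinkage of the feasible region. Everything else reduces to the feasibility check that the rescaled allocation lies in $\Sigma$. I would also note in passing that the $(1-\rho)$-scaling exactly reproduces the affine relation between $\Sigma$ and $\mathcal{C}^{LR}$ observed immediately before the theorem, so the rescaling map is in fact a bijection of the two regions.
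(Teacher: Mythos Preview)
Your proof is correct and follows essentially the same approach as the paper: both arguments define the new allocation $\tilde{\phi}_u^s = (\phi_u^{\pi,s} - \overline{l}_u^{\pi,s})/(1-\rho)$, verify feasibility via $\overline{l}_u^{\pi,s} \leq \phi_u^{\pi,s}$ and $\sum_u \overline{l}_u^{\pi,s} = \rho$, and pair it with uniform-random URLLC placement to reproduce exactly the same per-state rates. If anything, you articulate the feasibility checks a bit more explicitly than the paper does.
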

The proof is deferred to the Appendix~A. In other words the throughput
${\bf c}^\pi \in {\cal C}$ achieved by any feasible policy
$\pi \in \Pi$ can also be achieved by policy $\pi'$, with a possibly
different eMBB resource allocation policy than $\pi$ but utilizing uniform
random placement of URLLC demands across mini-slots.

\subsection{Utility maximizing joint scheduling}
\label{sec:util-linear}

Given the result in Theorem \ref{thm:theorem_on_erasure_channel_based_model}
we now restate the utility maximization problem as optimizing solely over  
joint scheduling policies that use URLLC random placement policies,
as follows:
\begin{eqnarray*}
\label{eq:optimization_prob_linear}
\max_{\bvec{\phi} \in \Sigma} & & \sum_{u \in \mathcal{U}} U_u ( c_u^{\bm \phi}), \\
\mbox{s.t.} & & c_u^{\bm \phi}= \sum_{s\in {\cal S}} \hat{r}_{u}^{s}
                \phi^{s}_{u} (1- \rho)   p_S(s),~~ \forall u \in {\cal
                U} .
\end{eqnarray*}
The above optimization problem has a strictly concave cost function
and convex constraints. Thus, at face-value, it appears that we can
 apply the gradient scheduler introduced in
\cite{Stolyar05}, which is an online algorithm designed to  converge 
to the solution of  similar optimization problem. This observation is approximately
correct, but subject to two modifications.

First, the setting in \cite{Stolyar05} has deterministic rates in each
channel state. However, in our case, in each channel state, the rates
are stochastic due to  puncturing by URLLC traffic (this results in the $(1 - \rho)$ correction). This can be easily
addressed by modifying the setting in \cite{Stolyar05}; the finite state
and i.i.d. nature of puncturing implies that the proofs in
\cite{Stolyar05} hold with minor modifications; we skip the details.

The second issue is somewhat more nuanced. In current wireless systems
(e.g. LTE) and proposals for 5G systems, a slot is partitioned into a
collection of Resource Blocks (RB), where each RB is a time-frequency
rectangle (1 msec $\times$ 180 KHz in LTE). Importantly, these RBs can
be individually allocated to different eMBB users. If we now apply the
gradient scheduler in \cite{Stolyar05} to our setting, the result will
be that all RBs in a slot will be allocated to the same user. While
this is no-doubt asymptotically optimal, it seems intuitive that
sharing RBs across users even within a slot will lead to better
short-term performance. Indeed this intuition has been explored in the
context of iterative MaxWeight algorithms to provide formal
guarantees, see \cite{BodShaYinSri_10,BodShaYinSri_11}. The high level
idea is that even within a slot, RB allocations are done iteratively, where
future RB allocations need to account for prior rate allocations even
within the same slot. This is formalized below, where we 
describe our  proposed joint eMBB-URLLC scheduler.

{\bf The URLLC scheduler:} As explained in the previous section, the
URLLC scheduler places the URLLC traffic uniformly at random in each
minislot.

    

{\bf The eMBB scheduler:}
Let there be $B$ resource blocks available for allocation every eMBB
slot, indexed by $1,2, \ldots, B$. Let $\overline{R}_u(t-1)$ be the random variable denoting the average
rate received by eMBB user up to eMBB slot $t-1$. Let $ \overline{r}_u(t-1) $ be a realization of $\overline{R}_u(t-1)$.  In any eMBB slot $t$ we schedule an
user $u(b)$ in RB $b$ such that  
\begin{equation}
 u(b) \in \text{argmax} \cbrac{\hat{r}_{u}^{s}
   U_u^{'}\brac{\rbartime{u}{b-1}{t}}, \, u=1,2, \ldots, \mathcal{U}},    \label{eqn:opt-search}
 \end{equation}
 where $\rbartime{u}{b-1}{t}$ is an \emph{estimate} of the average rate
 received by eMBB user $u$ till slot $t$ which is iteratively
 updated as follows: 
 \begin{multline}
 \rbartime{u}{b}{t} =
 \begin{cases}
 \overline{r}_u(t-1), & b=0, \\
  \brac{1-\epsilon}\rbartime{u}{b-1}{t} \quad &\\ +  \epsilon
  \brac{\hat{r}_{u}^{s}\frac{1}{B}(1-\rho) \mathbbm{1}\brac{i=u(b)}}, & b \neq 0. \label{eqn:lin-rate-update}
  \end{cases}
 \end{multline}
 In the above equation, $\epsilon$ is a small positive value. At the
 end of eMBB slot $t$, the eMBB scheduler receives feedback from the
 eMBB receivers indicating the actual rates received by the eMBB users
 due to allocations. We denote
 the rate received eMBB user $u$ in slot by the random variable
 $R_u(t)$ and its realization by $ r_u(t) $.  We finally update $\overline{r}_u(t)$ as follows:
\begin{equation}
\label{eq:update_for_rbar_lin}
\overline{r}_u(t)= \brac{1-\epsilon}\overline{r}_u(t-1) + \epsilon r_u(t).
\end{equation} 
This scheduler and update equations  are analogous to the gradient algorithm \cite{Stolyar05}
(see also iterative algorithms in
\cite{BodShaYinSri_10,BodShaYinSri_11}). The optimality proof of this
algorithm follows (with minor modifications) from the analysis in
\cite{Stolyar05}; we skip the details.

\noindent {\bf Remarks:} \textbf{(i)} A natural decomposition of the
joint eMBB+URLLC scheduling is now apparent. On one hand, the eMBB
scheduler maximizes utilities based on the {\em expected} channel
rates stemming from uniform random puncturing of minislots
(accounted for through the $(1-\rho)$ multiplicative factor), and does
so using the iterative gradient scheduler. The URLLC scheduler, on the
other-hand, is completely agnostic to either the channel state or the
actual eMBB allocations and simply punctures minislots based on the
current instantaneous demand.

\textbf{(ii)} The fact that the URLLC traffic placement is completely agnostic
to the channel state and eMBB utilities/allocation is surprising.
Intuitively it seems plausible that one could puncture an eMBB user with a
lower marginal utility with more URLLC traffic, while protecting an
eMBB user with a higher marginal utility and achieve a better sum
utility. Further, it seems reasonable that eMBB users with a worse
channel state (and thus lower rate) could be loaded with additional
URLLC traffic. However,
Theorem.~\ref{thm:theorem_on_erasure_channel_based_model} implies that
there exists an optimal solution that is achieved by channel and
utility oblivious and uniform random URLLC placement, thus providing
a very simple algorithm for URLLC scheduling.

\textbf{(iii)} We remark that the optimality of random puncturing for
linear loss models depends critically on the use of an opportunistic
scheduler for eMBB traffic.To see this, consider a simple system with
two symmetric eMBB users each with two possible channel states. The
associated channel rates are either $\{2, 4\}$ packets/slot with equal
probability, and independent across users and time slots. Suppose that
we use a static (non-opportunistic) scheduler, which equally splits
channel access between the users. It is easy to calculate that the
rate to each user is then 1.5 packets/slot. Next suppose that the
URLLC load is 50\%, and that this traffic {\em randomly punctures}
eMBB users. Then from symmetry, it follows that the rate per eMBB user
is $0.75$ packets/slot. In contrast, suppose that puncturing is
opportunistic, where the user with the currently lower rate is
punctured whenever possible (opportunistic puncturing of the currently
worse eMBB user), a straightforward calculation shows that the rate to
each eMBB user is $0.875$ packets/slot, which is a {\em strict
  improvement over random puncturing.} At a high-level, this follows
because opportunistic eMBB scheduling operates on the Pareto frontier
of two-user capacity region, and consequently there is no residual
opportunistic to be obtained by puncturing. However, with
non-opportunistic scheduling, the system is not pushed to the
boundary; thus, opportunistic puncturing can extract additional
throughput for eMBB users.

\section{Convex Model -- Minislot-Homogenous Policies}
\label{sec:convex_model}

In this section we shall consider joint scheduling for wireless
systems for convex superposition/puncturing  loss models.  This is a
somewhat complex problem, whence we will focus our attention on a
restricted, but still rich, class of scheduling policies which we
refer to as minislot-homogeneous eMBB/URLLC schedulers. We identify a key
concavity requirement in Assumption~\ref{condn:conc-g} (that is
satisfied by convex loss functions) that enables a stochastic
approximation approach for utility maximizing scheduling.


\subsection{Minislot-homogeneous eMBB/URLLC Scheduling policies} 
\label{subsec:minislot_homogeneous}

We shall define minislot-homogeneous eMBB/URLLC schedulers as
follows. First, feasible eMBB allocations ${\bm \phi} \in \Sigma$ will
be restricted such that for any eMBB slot in channel state
$s \in {\cal S}$ allocations are {\em minislot-homogeneous} across
minislots in an eMBB slot, i.e.,
$\phi_{u,1}^{s}=\phi_{u,m}^{s}, \forall m \in {\cal M}$ and its
overall allocation for the slot is given by
$\phi_{u}^{s} = |{\cal M}| \phi_{u,1}^{s}.$ The set of
minislot-homogeneous eMBB allocations is thus given by
\begin{multline} \nonumber \Sigma^H := \left \{ \bvecgreek{\phi} \in \Sigma
    ~|~  u\in {\cal U},~ \phi_{u,m}^{s} =\phi_{u,1}^s
    ~~ \forall m \in {\cal M},   \forall s \in {\cal S}\right\}.
\end{multline}

Second, URLLC demand placements per minislot are done proportionally based on
pre-specified weights, and these weights are assumed to be
time-homogeneous across minislots. In particular such policies are
parametrized by a weight matrix $\bvecgreek{\gamma} \in \Sigma^H$, where the induced
load on user $u$ under channel state $s$ and slot $m$ is given by
$$
L_{u,m}^{s} = \frac{\gamma_{u,m}^s}{\sum_{u' \in {\cal U}}
  \gamma_{u',m}^s} D(m) = \frac{\gamma_{u,1}^s}{f} D(m).
$$
We shall call $ \gamma_{u,1}^s$ the \emph{URLLC placement factor} for eMBB user $ u $ in state $ s $. 
The eMBB and URLLC allocations are coupled together since it
must be the case that for all $u\in {\cal U}$
$L_{u,m}^{s} \leq \phi_{u,m}^{s} = \phi_{u,1}^s$ almost surely, i.e.,
one can not induce more superposition/puncturing load on a user than the
resources it has been allocated on that slot. So the following
condition must be satisfied.  For all
$m \in {\cal M}$ we have that
$$
D(m) \leq \min_{u \in {\cal U}} \frac{\phi_{u,1}^{s}}{\gamma_{u,1}^s} f, \, \, \, \mbox{almost surely}.
$$
Recall that $ f $ denotes the maximum URLLC load per minislot so $D(m) \leq f$ almost surely, thus if
$\frac{\phi_{u,1}^{s}}{\gamma_{u,1}^s} \geq 1$ the above condition will always hold. Yet  if $ \phi_{u,1}^{s}\geq \gamma_{u,1}^s $ for all $ u $, then we have that $ \phi_{u,1}^{s}= \gamma_{u,1}^s $, i.e., there is not flexibility to exploit careful placement of URLLC demands. Hence, we introduce the following assumption: 
\begin{assumption}
\label{sharingfactor-assumption}
We say the system has a $(1-\delta)$ URLLC sharing factor per
minislot if  $D(m) \leq f (1-\delta)$ almost surely for all
$m \in {\cal M}$, where $ \delta \in \brac{0,1} $.  
\end{assumption}
For any $\delta$ the above assumption implies that the {\em peak
  URLLC demand} in an eMBB slot can be at most $ 1-\delta$ which
is lower than maximum possible value of one. Such an assumption is
reasonable as we consider shared resources  which are engineered to
meet the peak URLLC loads while also serving eMBB traffic. Under a
$(1-\delta)$ URLLC sharing factor a minislot-homogeneous eMBB resource
allocation ${\bm \phi}$ and URLLC allocation ${\bm \gamma}$ is will be
feasible if for all $s \in {\cal S}$ we have
$$
(1-\delta) \leq \min_{u \in {\cal U}} \frac{\phi_{u,1}^{s}}{\gamma_{u,1}^s},
$$
which is satisfied as long as
$(1-\delta) \gamma_{u,1}^s \leq \phi_{u,1}^{s}$ for all
$u \in {\cal U}$.  This motivates the following definition:
\begin{definition}
{\em   For a system with a $(1-\delta)$ sharing factor, the feasible minislot-homogeneous
  eMBB/URLLC scheduling policies are parameterized by
  ${\bm \phi},{\bm \gamma} \in \Sigma^H$ such that
  $(1-\delta)\bm{\gamma} \leq {\bm \phi}$. We shall denote the set of
  such policies as follows:
$$
\Pi^{H,\delta} := \{ ({\bm \phi},{\bm \gamma}) ~\mid~ {\bm \phi},{\bm
  \gamma} \in \Sigma^H ~\mbox{and}~ (1-\delta)\bm{\gamma} \leq {\bm
  \phi} \} , 
$$
where $\Pi^{H,\delta}$ is a convex set.}
\end{definition}

\subsection{Characterization of the throughput region} 

In this section we characterize the throughput regions achievable
under time-homogeneous scheduling.

\begin{theorem}
  \label{thm:main-theorem}
  For a system with a $(1-\delta)$ sharing factor and minislot-homogeneous scheduler
  ${\bm \pi} = ({\bm \phi}^{\bm \pi},{\bm \gamma}^{\bm \pi}) \in
  \Pi^{H,\delta}$ the average induced throughput for user 
  $u\in {\cal U}$ in channel state $s \in {\cal S}$ is given by
$$
r^{\bm{\pi},s}_{u} =  \E[ {f}^{s}_u ( \phi^{{\bm \pi},s}_{u}, \gamma^{\bm{\pi},s}_u D )],
$$
and the overall average user throughputs are given by
${\bf c}^{{\bm \pi}}= ( c^{{\bm \pi}}_{u} \mid u \in {\cal U})$ where
$ c^{{\bm \pi}}_{u} = \sum_{s \in {\cal S}} r^{{\bm \pi},s}_{u}
p_S(s).$
\end{theorem}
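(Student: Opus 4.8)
The plan is to compute the aggregate URLLC load that a minislot-homogeneous policy induces on a given eMBB user, show that it reduces to a \emph{deterministic} fraction of the total per-slot demand $D$, and then substitute this into the rate allocation function and average over channel states. First I would fix a channel state $s$ and a user $u$ and assemble the per-minislot induced loads. By the definition of minislot-homogeneous placement, in minislot $m$ the load is $L^{s}_{u,m} = \frac{\gamma^s_{u,m}}{\sum_{u'}\gamma^s_{u',m}}D(m)$; since $\bm{\gamma}\in\Sigma^H$ the placement weights are constant across minislots and the normalization in $\Sigma$ makes the denominator equal $f$, so $L^s_{u,m} = \frac{\gamma^s_{u,1}}{f}D(m)$.

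The crux of the argument is the next step. Summing over minislots gives
$$
L^{\bm{\pi},s}_u = \sum_{m\in\mathcal{M}} L^s_{u,m} = \frac{\gamma^s_{u,1}}{f}\sum_{m\in\mathcal{M}}D(m) = \frac{\gamma^s_{u,1}}{f}\,D,
$$
and then using $\gamma^{\bm{\pi},s}_u = |\mathcal{M}|\,\gamma^s_{u,1}$ together with the normalization $f|\mathcal{M}|=1$, I obtain $\frac{\gamma^s_{u,1}}{f} = \gamma^{\bm{\pi},s}_u$, hence $L^{\bm{\pi},s}_u = \gamma^{\bm{\pi},s}_u D$. The conceptual content worth stressing is that the \emph{time-homogeneity} of the placement weights is exactly what collapses the dependence on the individual (and random) per-minislot demands $D(m)$ into a dependence on the single aggregate demand $D$; with weights that varied across minislots, the induced load would depend on the joint law of $(D(m))_{m\in\mathcal{M}}$ rather than only on the law of $D$.

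With the load identified, I would substitute $L^{\bm{\pi},s}_u = \gamma^{\bm{\pi},s}_u D$ into $R^{\bm{\pi},s}_u = f^s_u(\phi^{\bm{\pi},s}_u, L^{\bm{\pi},s}_u)$ and take expectations to obtain the first claim $r^{\bm{\pi},s}_u = \E[f^s_u(\phi^{\bm{\pi},s}_u, \gamma^{\bm{\pi},s}_u D)]$. At this point I would also record that feasibility is preserved, so that $f^s_u$ is well defined: the $(1-\delta)$ sharing factor forces $D \leq (1-\delta)$ almost surely, and since the policy satisfies $(1-\delta)\bm{\gamma}\leq\bm{\phi}$ we get $L^{\bm{\pi},s}_u/\phi^{\bm{\pi},s}_u = \gamma^{\bm{\pi},s}_u D/\phi^{\bm{\pi},s}_u \leq (1-\delta)\gamma^{\bm{\pi},s}_u/\phi^{\bm{\pi},s}_u \leq 1$, so the argument of the rate loss function $h^s_u$ stays in $[0,1]$. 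Finally, averaging the per-state rates against the channel-state distribution, $c^{\bm{\pi}}_u = \sum_{s\in\mathcal{S}} r^{\bm{\pi},s}_u p_S(s)$, is immediate from the definition of the capacity set.

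The main obstacle is not analytical but a matter of careful bookkeeping: one must track the two conventions for the placement variable (the per-minislot weight $\gamma^s_{u,1}$ versus the per-slot total $\gamma^{\bm{\pi},s}_u$) alongside the normalization $f|\mathcal{M}|=1$, so that the fraction $\gamma^s_{u,1}/f$ correctly simplifies to $\gamma^{\bm{\pi},s}_u$. Beyond this, the theorem is essentially a direct consequence of the pooling identity $L^{\bm{\pi},s}_u = \gamma^{\bm{\pi},s}_u D$ and linearity of expectation.
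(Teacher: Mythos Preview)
Your proposal is correct and follows essentially the same argument as the paper: compute the per-minislot induced load $L^{s}_{u,m}=\frac{\gamma^{s}_{u,1}}{f}D(m)$, sum over minislots to get $L^{\bm{\pi},s}_u=\gamma^{\bm{\pi},s}_u D$ via the normalization $f|\mathcal{M}|=1$, and then substitute into $f^s_u$ and take expectations. Your additional feasibility check (that $L^{\bm{\pi},s}_u/\phi^{\bm{\pi},s}_u\le 1$) and the remark on why minislot-homogeneity collapses the dependence to the aggregate $D$ are nice touches that the paper's proof omits.
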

The proof is included in Appendix~B.
%
%
Based on the above we can define feasible throughput region
constrained to the time-homogeneous policies in $\Pi^{H,\delta}.$ 
First let us define 
\begin{eqnarray*} {\cal C}^{H,\delta} & = & \{ \mathbf{c} \in
  \Reals_+^{|{\cal U}|} ~|~ \exists {\bm \pi} \in \Pi^{H,\delta}
  ~\mbox{s.t.}~ \mathbf{c} \leq {\bf c}^{{\bm \pi}} \}
\end{eqnarray*}
and let $\hat{\cal C}^{H,\delta}$ denote the convex hull of
${\cal C}^{H,\delta}.$ Note that rates in the convex hull
are achievable through policies that do time sharing/randomization
amongst minislot-homogeneous scheduling policies in $\Pi^{H,\delta}.$

\begin{assumption} \label{condn:conc-g}
For all $s\in {\cal S}$ and $u \in {\cal U}$
the functions $g^s_u ( , ) $ given by  
\begin{equation}
\label{eqn:conc-condn}
g^s_u ( \phi^s_u,  \gamma^s_u  ) = \E[ {f}^{s}_u ( \phi^{s}_{u}, \gamma^{s}_u D )],
\end{equation}
are jointly concave on $\Pi^{H,\delta}.$ 
\end{assumption}

\begin{lemma} \label{lm:condn:examplesg} Assumption \ref{condn:conc-g} is
  satisfied for systems where superposition/puncturing of each user is
  modelled via either a
\begin{enumerate}
\item Convex loss function or 
\item Threshold loss function with fixed relative thresholds,
  i.e., $t^s_u(\phi^s_u) = \alpha^s_u$ for $\phi \in [0,1]$ and the
  URLLC demand distribution $F_D(\cdot)$ is such that $F_D(\frac{1}{x})$ is
  concave in $x$ (satisfied by the truncated Pareto distribution).
\end{enumerate}
\end{lemma}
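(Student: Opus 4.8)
The plan is to reduce everything to the classical fact that the \emph{perspective} of a function preserves convexity/concavity: for $\psi$ convex (resp.\ concave), the map $(x,t)\mapsto t\,\psi(x/t)$ is jointly convex (resp.\ concave) on $\{t>0\}$, and this property survives precomposition with an affine map as well as taking expectations. Since $g^s_u(\phi^s_u,\gamma^s_u)=\mathbb{E}[f^s_u(\phi^s_u,\gamma^s_u D)]$ is an average over $D$, it suffices to prove that for each fixed realization $d\ge 0$ the map $(\phi^s_u,\gamma^s_u)\mapsto f^s_u(\phi^s_u,\gamma^s_u d)$ is jointly concave on the region $\{\phi^s_u>0\}$ containing $\Pi^{H,\delta}$; concavity then survives the expectation, and the boundary $\phi^s_u=0$ is handled by the lower-semicontinuous extension. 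Recall the decomposition $f^s_u(\phi^s_u,l)=\hat{r}^s_u\,\phi^s_u\,(1-h^s_u(l/\phi^s_u))$, so with induced load $l=\gamma^s_u d$ the per-realization rate is $\hat{r}^s_u\,\phi^s_u\,(1-h^s_u(\gamma^s_u d/\phi^s_u))$.

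For the convex loss function (case 1), I would identify $\phi^s_u\,h^s_u(\gamma^s_u d/\phi^s_u)$ as exactly the perspective of the convex function $h^s_u$, evaluated at $t=\phi^s_u$ and $x=\gamma^s_u d$; since $x$ is linear in $\gamma^s_u$, this map is jointly convex in $(\phi^s_u,\gamma^s_u)$. Consequently $\phi^s_u\,(1-h^s_u(\gamma^s_u d/\phi^s_u))=\phi^s_u-\phi^s_u\,h^s_u(\gamma^s_u d/\phi^s_u)$ is a sum of a linear term and a concave term, hence jointly concave; multiplying by the constant $\hat{r}^s_u\ge 0$ and averaging over $D$ preserves concavity, which gives Assumption~\ref{condn:conc-g}.

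For the threshold loss function with fixed relative threshold $t^s_u(\phi^s_u)=\alpha^s_u$ (case 2), I would first evaluate the expectation explicitly. Using $h^s_u(x)=\mathbf{1}(x\ge\alpha^s_u)$ and $L^s_u=\gamma^s_u D$, one obtains $g^s_u(\phi^s_u,\gamma^s_u)=\hat{r}^s_u\,\phi^s_u\,P(\gamma^s_u D\le\alpha^s_u\phi^s_u)=\hat{r}^s_u\,\phi^s_u\,F_D(\alpha^s_u\phi^s_u/\gamma^s_u)$. The key move is a change of variables that turns this cdf-of-a-ratio into a perspective: writing $G(y):=F_D(1/y)$, we have $\phi^s_u\,F_D(\alpha^s_u\phi^s_u/\gamma^s_u)=\phi^s_u\,G(\gamma^s_u/(\alpha^s_u\phi^s_u))$, which is precisely the perspective of $G$ evaluated at $t=\phi^s_u$ and $x=\gamma^s_u/\alpha^s_u$. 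The hypothesis that $F_D(1/x)$ is concave in $x$ is exactly the statement that $G$ is concave, so its perspective is jointly concave; composing with the affine map $(\phi^s_u,\gamma^s_u)\mapsto(\gamma^s_u/\alpha^s_u,\phi^s_u)$ and scaling by $\hat{r}^s_u$ again preserves concavity.

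The routine ingredients are the perspective-preservation facts and the affine substitutions; the main obstacle is recognizing the correct perspective structure in each case -- in particular, seeing that for the threshold model the relevant object is the perspective of $y\mapsto F_D(1/y)$ rather than of $F_D$ itself, which is what isolates the precise hypothesis ``$F_D(1/x)$ concave.'' A secondary point requiring care is the behavior at $\phi^s_u=0$ (where the perspective is defined only through its closure) and checking that the argument $\alpha^s_u\phi^s_u/\gamma^s_u$ stays in the domain of $F_D$ on the feasible set $\{(1-\delta)\gamma^s_u\le\phi^s_u\}$; I would finally verify the stated example, namely that the truncated Pareto cdf indeed renders $F_D(1/x)$ concave.
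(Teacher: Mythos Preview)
Your proposal is correct and follows essentially the same approach as the paper: both cases are reduced to the perspective-of-a-convex/concave-function argument, with the threshold case handled by writing $\phi^s_u F_D(\alpha^s_u\phi^s_u/\gamma^s_u)$ as the perspective of $G(y)=F_D(1/y)$. In fact you spell out the perspective structure in case~2 more explicitly than the paper does (the paper simply says ``using the same result on the perspective functions of variables the result follows''), and your remarks on boundary behavior and domain checking go slightly beyond what the paper records.
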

The proof is included in Appendix~C.
%
%
%
%
%
With this condition in place, we now describe the throughput region.

\begin{theorem} \label{thm:capacity-theorem} Under 
  Assumption~\ref{condn:conc-g} we have that
  ${\cal C}^{H,\delta}= \hat{\cal{C}}^{H,\delta}$. 
\end{theorem}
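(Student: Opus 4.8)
The plan is to show that ${\cal C}^{H,\delta}$ is \emph{itself} a convex set, from which the claimed identity with its convex hull follows at once. One inclusion, ${\cal C}^{H,\delta} \subseteq \hat{\cal C}^{H,\delta}$, is automatic since every set is contained in its convex hull. All the content lies in the reverse inclusion, which reduces to verifying that ${\cal C}^{H,\delta}$ is convex: a convex set is the smallest convex set containing itself, so it coincides with its own convex hull, giving $\hat{\cal C}^{H,\delta} \subseteq {\cal C}^{H,\delta}$.

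To establish convexity I would take two rate vectors ${\bf c}^{(1)}, {\bf c}^{(2)} \in {\cal C}^{H,\delta}$ and a scalar $\lambda \in [0,1]$, and show $\lambda {\bf c}^{(1)} + (1-\lambda){\bf c}^{(2)} \in {\cal C}^{H,\delta}$. By definition there exist policies ${\bm \pi}^{(1)} = ({\bm \phi}^{(1)},{\bm \gamma}^{(1)})$ and ${\bm \pi}^{(2)} = ({\bm \phi}^{(2)},{\bm \gamma}^{(2)})$ in $\Pi^{H,\delta}$ with ${\bf c}^{(i)} \le {\bf c}^{{\bm \pi}^{(i)}}$. Form the affine-combination policy $\bar{\bm \pi} = \lambda {\bm \pi}^{(1)} + (1-\lambda){\bm \pi}^{(2)}$, i.e. $\bar{\bm \phi} = \lambda {\bm \phi}^{(1)} + (1-\lambda){\bm \phi}^{(2)}$ and $\bar{\bm \gamma} = \lambda {\bm \gamma}^{(1)} + (1-\lambda){\bm \gamma}^{(2)}$. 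Because $\Pi^{H,\delta}$ was already shown to be convex, $\bar{\bm \pi}$ is again a feasible minislot-homogeneous policy.

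The key step is then to invoke the joint concavity of $g^s_u$ granted by Assumption~\ref{condn:conc-g}. Applying it state by state gives, for every $u \in {\cal U}$ and $s \in {\cal S}$,
\begin{align*}
r^{\bar{\bm \pi},s}_u = g^s_u(\bar{\phi}^s_u, \bar{\gamma}^s_u)
&\ge \lambda\, g^s_u(\phi^{(1),s}_u, \gamma^{(1),s}_u) \\
&\quad + (1-\lambda)\, g^s_u(\phi^{(2),s}_u, \gamma^{(2),s}_u) \\
&= \lambda\, r^{{\bm \pi}^{(1)},s}_u + (1-\lambda)\, r^{{\bm \pi}^{(2)},s}_u .
\end{align*}
Multiplying by $p_S(s) \ge 0$, summing over $s$, and using Theorem~\ref{thm:main-theorem} to identify $c^{\bm \pi}_u = \sum_{s \in {\cal S}} r^{{\bm \pi},s}_u p_S(s)$ yields $c^{\bar{\bm \pi}}_u \ge \lambda\, c^{{\bm \pi}^{(1)}}_u + (1-\lambda)\, c^{{\bm \pi}^{(2)}}_u \ge \lambda\, c^{(1)}_u + (1-\lambda)\, c^{(2)}_u$ for each $u$. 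Hence the convex combination $\lambda {\bf c}^{(1)} + (1-\lambda){\bf c}^{(2)}$ is dominated componentwise by the achievable vector ${\bf c}^{\bar{\bm \pi}}$, so it lies in ${\cal C}^{H,\delta}$, proving convexity and therefore the theorem.

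I do not anticipate a serious obstacle: the whole argument hinges on pairing the convexity of the policy set $\Pi^{H,\delta}$ with the per-state concavity of $g^s_u$, together with the downward-closed ($\le$) structure of ${\cal C}^{H,\delta}$ that lets a dominated point inherit feasibility. The one point deserving care is the precise meaning of $\hat{\cal C}^{H,\delta}$: if it is intended as the \emph{closed} convex hull, I would additionally remark that ${\cal C}^{H,\delta}$ is itself closed, which follows from compactness of $\Pi^{H,\delta}$ and continuity of each $g^s_u$, so that taking closure adds nothing; otherwise the convexity argument alone suffices.
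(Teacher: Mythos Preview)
Your proof is correct and follows essentially the same approach as the paper: both arguments use the convexity of $\Pi^{H,\delta}$ to form the affine-combination policy and then invoke the joint concavity of $g^s_u$ from Assumption~\ref{condn:conc-g} to show that this single policy dominates the convex combination of throughputs. The only cosmetic difference is that you package this as ``${\cal C}^{H,\delta}$ is convex, hence equals its hull,'' whereas the paper starts from an arbitrary point of $\hat{\cal C}^{H,\delta}$ and exhibits a dominating policy; the underlying inequality and the use of the downward-closed structure are identical.
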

The proof is available in the  Appendix~D. The above theorem implies that we do not have to consider time-sharing/randomization amongst minislot-homogeneous joint scheduling policies.
 Thus, with minislot-homogeneous policies and under the concavity of $ g_u^{\pi, s}\brac{\cdot,\cdot} $
from Assumption~\ref{condn:conc-g}, the above result sets up a convex
optimization problem in $(\bvecgreek{\phi}, \bvecgreek{\gamma)},$ i..e, we have a concave cost
function with convex constraints. Thus, by iteratively updating
$(\bvecgreek{\phi}, \bvecgreek{\gamma)},$ we can develop an online scheduling algorithm that
asymptotically maximizes eMBB users' utility. This is descried next.

\subsection{Stochastic approximation based online algorithm}
\label{sec:th-stoch-approx}



We first restate the utility maximization problem for minislot-homogeneous
URLLC/eMBB scheduling policies:
\begin{align}
\underset{\bvec{\phi}, \bvec{\gamma} \in \Pi^{U,\delta}}{\max}  \quad \sum_{u \in
  \mathcal{U}}& U_u\brac{\sum_{s \in \mathcal{S}} p_{\mathcal{S}}
                (s)g^s_u \brac{ \phi^s_u,  \gamma^s_u } }.
\end{align}
Observe that the objective function is concave because it consists of a sum of compositions
of non-decreasing concave functions ($U_u(\cdot)$), and  concave functions
($g^s_u \brac{ \cdot, \cdot }$) in $\bvec{\phi}$ and
$\bvec{\gamma}$ (if Assumption~\ref{condn:conc-g} holds). Further, the constraint set is convex. Therefore, the
above problem fits in the framework of standard convex optimization
problems. However, solving the above problem requires knowledge of
all possible network states and their probability distribution,
resulting in an {\em offline} optimization problem.  In this section,
we develop a stochastic approximation based online algorithm to solve
the above problem.

{\bf Online algorithm: }
Let $\bvec{\overline{R}}(t-1):=\brac{\overline{R}_1(t-1), \overline{R}_2(t-1), \ldots, \overline{R}_u(t-1), \ldots, \overline{R}_{\abs{\mathcal{U}}}(t-1)}$ be the random vector denoting the average
rates received by eMBB users up to eMBB slot $t-1$ under our online algorithm. Let $ \overline{\bvec{r}}(t-1) $ denote a realization of $  \bvec{\overline{R}}(t-1)$.  Let $s$ be the
network state in slot $t$. Define vectors $\bvecgreek{\phi}^s:=\brac{\phi^s_u,
  \mid u \in \mathcal{U}}$ and $\bvecgreek{\gamma}^s:=\brac{\gamma^s_u \mid u \in
  \mathcal{U} }$.  At the beginning of eMBB slot $t$, we compute
 vectors $\brac{\bvecgreek{{\tilde{\phi}}} (t), \bvecgreek{{\tilde{\gamma}}} (t) }$ as the
solution to the following optimization problem: 
\begin{align}
    \underset{{\phi^s}, {\gamma^s}}{\max} \quad \sum_{u \in \mathcal{U}}&
                                                                      U_u^{'}\brac{\overline{r}_u(t-1)}
                                                                          g^s_u
                                                                          (
                                                                          \phi^s_u,
                                                                          \gamma^s_u
                                                                          )
                                                                          , \label{eqn:opt-stoch-approx}\\ 
\text{s.t.} \quad 
   \bvecgreek{\phi}^s & \geq \brac{1-\delta}\bvecgreek{\gamma}^s, \\
\sum_{u \in \mathcal{U}} \phi^s_u&=1  \mbox{  and  }  \sum_{u \in \mathcal{U}} \gamma^s_u=1, \\
\bvecgreek{\phi}^s &\in \sbrac{0,1}^{\abs{\mathcal{U}}} \mbox{  and  } \bvecgreek{\gamma}^s \in \sbrac{0,1}^{\abs{\mathcal{U}}}.
\end{align}
This optimization problem is a convex optimization problem and can be
solved numerically using standard convex optimization techniques.   
Using $\brac{\bvecgreek{{\tilde{\phi}}} (t), \bvecgreek{{\tilde{\gamma}}} (t) }$, we schedule
URLLC and eMBB traffic as follows: 

{\bf The eMBB scheduler:} For notational ease, we fluidize the
bandwidth.  Specifically, we assume that the bandwidth of a resource
block is very small when compared to the total bandwidth available.
Hence, the bandwidth can be split into arbitrary fractions and we
allocate fraction $\tilde{\phi}_u (t)$  of the total bandwidth to eMBB
user $u$.


{\bf The URLLC Scheduler:} We load different eMBB users with
URLLC traffic according to  the vector ${\bvecgreek{\tilde{\gamma}}}(t)$.   

At the end of eMBB slot $t$, the eMBB scheduler receives feedback from
the eMBB receivers indicating the rates received by the eMBB
users. Let us denote the rate received eMBB user $u$ in the slot by the
random variable $R_u(t)$.  We update $\overline{R}_u(t)$ as
follows: 
\begin{equation}
\label{eq:update_for_rbar}
\overline{R}_u(t)= \brac{1-\epsilon_t}\overline{R}_u(t-1) + \epsilon_t R_u(t), 
\end{equation}
where $\cbrac{\epsilon_t \mid t=1,2,3, \ldots}$ is a sequence of
positive numbers which satisfy the following (standard) assumption: 
\begin{assumption} The averaging sequence $\{\epsilon_t \}$ satisfies:
\label{eq:condition_on_epsilon}
$$\sum_{t=1}^{\infty} \epsilon_t = \infty \quad \mbox{and} \quad
\sum_{t=1}^{\infty} \epsilon_t^2 < \infty.$$ 
\end{assumption}

Finally, we state the main result of this section, which is the
optimality of the stochastic approximation based online algorithm. 
\begin{theorem}
\label{th:optimality_of_stoch_approx}
Let $\bvec{r}^*$ be the optimal average rate vector received by eMBB
users under the solution to the offline optimization problem. Suppose
that Assumptions~\ref{eq:condition_on_epsilon} and~\ref{condn:conc-g}  
hold, then we have that: 
\begin{equation}
 \underset{t \rightarrow \infty}{\lim} \bvec{\overline{R}}(t) = \bvec{r}^* \quad \text{almost surely}. 
 \end{equation} 

\end{theorem}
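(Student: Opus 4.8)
The plan is to recognize the averaging recursion \eqref{eq:update_for_rbar} as a Robbins--Monro stochastic approximation and to analyze it via the ODE method, closely following \cite{Stolyar05}. Writing the update as $\bvec{\overline{R}}(t) = \bvec{\overline{R}}(t-1) + \epsilon_t ( \bvec{R}(t) - \bvec{\overline{R}}(t-1) )$, the first step is to split the increment into a mean-field term and a martingale difference. Let $\mathcal{F}_{t-1}$ be the history up to the start of slot $t$. Since the per-slot program \eqref{eqn:opt-stoch-approx} is solved using only $\overline{\bvec{r}}(t-1)$ and the realized state $s$, and since $\expect{R_u(t) \mid \mathcal{F}_{t-1}, S=s} = g^s_u(\tilde{\phi}^s_u, \tilde{\gamma}^s_u)$ by the definition of $g^s_u$ in \eqref{eqn:conc-condn}, averaging over the i.i.d.\ state yields the conditional mean field
$$
v_u(\overline{\bvec{r}}) := \expect{R_u(t) \mid \mathcal{F}_{t-1}} = \sum_{s \in \mathcal{S}} p_S(s)\, g^s_u\brac{\tilde{\phi}^s_u(\overline{\bvec{r}}), \tilde{\gamma}^s_u(\overline{\bvec{r}})}.
$$
The residual $\bvec{R}(t) - \bvec{v}(\overline{\bvec{r}}(t-1))$ is then a martingale difference, and since all rates are bounded by the peak rates $\hat{r}^s_u$ it has uniformly bounded second moments. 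Combined with the step-size conditions in Assumption~\ref{eq:condition_on_epsilon} and the compactness of the iterates (rates stay in the bounded region $\hat{\cal C}^{H,\delta}$), these are exactly the hypotheses needed to invoke the ODE method.

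The second step is to identify and analyze the limiting dynamics $\dot{\bvec{r}} = \bvec{v}(\bvec{r}) - \bvec{r}$. The key structural observation is that $\bvec{v}(\overline{\bvec{r}})$ is precisely the maximizer of the \emph{linearized} utility over the throughput region: because the per-slot objective in \eqref{eqn:opt-stoch-approx} is separable across states and the states are sampled independently, maximizing $\sum_u U_u'(\overline{r}_u)\, g^s_u(\phi^s_u,\gamma^s_u)$ state-by-state is equivalent to maximizing $\sum_u U_u'(\overline{r}_u)\, c_u$ over $\bvec{c} \in \hat{\cal C}^{H,\delta}$, whose convexity is guaranteed by Theorem~\ref{thm:capacity-theorem} under Assumption~\ref{condn:conc-g}. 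Hence $\bvec{v}(\bvec{r})$ satisfies the variational inequality $\sum_u U_u'(r_u)\brac{v_u(\bvec{r}) - c_u} \geq 0$ for every $\bvec{c} \in \hat{\cal C}^{H,\delta}$.

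The third step is the Lyapunov argument. I would take the aggregate utility $L(\bvec{r}) = \sum_{u \in \mathcal{U}} U_u(r_u)$ as the candidate Lyapunov function. Since $\bvec{v}(\bvec{r}) \in \hat{\cal C}^{H,\delta}$ for all $\bvec{r}$ and the region is convex and compact, it is forward-invariant and globally attracting for the flow, so along trajectories inside it one has $\dot{L} = \sum_u U_u'(r_u)\brac{v_u(\bvec{r}) - r_u} \geq 0$ by the variational inequality taken at $\bvec{c}=\bvec{r}$. Moreover $\dot{L} = 0$ forces $\bvec{r}$ itself to maximize the linearized objective, which is exactly the first-order optimality condition for $\max_{\bvec{c} \in \hat{\cal C}^{H,\delta}} \sum_u U_u(c_u)$; strict concavity of the $U_u$ makes this optimizer unique and equal to $\bvec{r}^*$. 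Thus $L$ is a strict Lyapunov function whose only equilibrium is the globally asymptotically stable point $\bvec{r}^*$, and the ODE method delivers $\bvec{\overline{R}}(t) \to \bvec{r}^*$ almost surely.

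The main obstacle I anticipate is that the program \eqref{eqn:opt-stoch-approx} need not have a unique maximizer, so the induced mean-rate vector $\bvec{v}(\cdot)$ may be set-valued and discontinuous in $\overline{\bvec{r}}$, and the single-valued ODE method does not apply verbatim. I would handle this by replacing the ODE with the differential inclusion $\dot{\bvec{r}} \in \overline{\mathrm{co}}\, V(\bvec{r}) - \bvec{r}$, where $V(\bvec{r})$ collects all achievable optimal mean-rate vectors, and invoking the stochastic-approximation theory for differential inclusions. The Lyapunov computation above is robust to this change because \emph{every} selection $v(\bvec{r}) \in V(\bvec{r})$ obeys the same variational inequality, so $L$ remains a strict Lyapunov function for the whole inclusion and the limit set is still the singleton $\{\bvec{r}^*\}$. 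A secondary technical point is the boundary behaviour of $U_u'$ (e.g.\ when $U_u'(0)=\infty$), which requires showing the iterates stay bounded away from the problematic part of the boundary; this follows from the standard self-correcting argument that small rates produce large marginal utilities under the gradient dynamics.
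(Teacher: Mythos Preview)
Your proposal is correct and follows essentially the same route as the paper: identify the conditional mean of $\bvec{R}(t)$ with the maximizer of the linearized utility over ${\cal C}^{H,\delta}$ (the paper's Lemma on unbiasedness), pass to the limiting ODE $\dot{\bvec{r}} = \bvec{v}(\bvec{r}) - \bvec{r}$, and use the aggregate utility as a Lyapunov function (the paper takes $L(\bvec{x}) = U(\bvec{r}^*) - U(\bvec{x})$, the negative of your choice, and invokes Kushner--Yin rather than Stolyar, but this is cosmetic). Your treatment of the set-valued maximizer via a differential inclusion and of the boundary behaviour of $U_u'$ is in fact more careful than the paper, which tacitly assumes a single-valued $q(\bvec{x})$.
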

The proof is available in the Appendix~E. 

\subsection{Optimality of Minislot-Homogeneous Policies}
 In the previous section we restricted ourselves to  minislot-homogeneous policies. In this section will justify this choice. Let us consider a  generalization of minislot-homogeneous policies where  the URLLC placement in each minislot can depend on the history of URLLC arrivals prior to that minislot. Such a policy will obviously perform better than minislot-homogeneous URLLC placement policies since in a minislot-homogeneous policy we decide the URLLC placement at the beginning of an eMBB slot  based on the expected loss due to puncturing/superposition and do not adapt it based on the realization of URLLC demands per minislot. However, finding an optimal scheduling policy under this generalization can be computationally expensive as compared to minislot-homogeneous policies which are attractive due to their simplicity.  In this section we identify conditions under which minislot-homogeneous URLLC placement polices perform as well as the general class of  \emph{causal}  and \emph{minislot-dependent} policies.  These terms are defined below.   
\begin{definition}
	A scheduler is said to be   \emph{causal} if at the beginning of a mini-slot $m$  the scheduler knows the realizations of $D(1)$, $D(2), \ldots, D(m-1)$ and is unaware of the realizations of $ D(m), D(m+1), \ldots, D(\abs{\mathcal{M}}) $.
\end{definition}


\begin{definition}
	A scheduling policy is said to be \emph{minislot-dependent} if the URLLC placement policy can vary with the minislot index $m$ and previous URLLC demands in the eMBB slot.  
\end{definition}

The decision variables in  a   causal and  minislot-dependent joint scheduling  policy $\pi$ can be described as follows: 
\begin{enumerate}
	\item At the beginning of an eMBB slot, the  scheduler chooses $\phi_u^{\pi, s}, u \in \mathcal{U}$ such that
	\begin{equation}
	\label{eq:constraint_on_phi}
	\sum_{u \in \mathcal{U}} \phi^{\pi, s}_u =1  \mbox{  and  } \phi^{\pi, s}_u \in \sbrac{0,1} \quad \forall u \in \mathcal{U}. 
	\end{equation} 
	\item In each mini-slot $m$, the total puncturing placed on eMBB user $u$ is given by $\gamma_{u,m}^{\pi, s}\brac{ \bvec{d}^{\brac{1:m-1}}} D_m$, where $\gamma_{u,m}^{\pi, s}\brac{ \cdot}$ characterizes the URLLC placement in minislot $ m $ as function of the previously seen URLLC demands   $\bvec{D}^{\brac{1:m-1}}:= \brac{D(1), D(2), \ldots, D(m-1)}$. Let $ \bvec{d}^{\brac{1:m-1}} $ is a realization of $ \bvec{D}^{\brac{1:m-1}}$. For any $m$ and $\bvec{d}^{\brac{1:m-1}}$, $\gamma_{u,m}^{\pi, s}\brac{ \bvec{d}^{\brac{1:m-1}}}$  has to satisfy the following constraints. 
	\begin{align}
	\sum_{u \in \mathcal{U}} \gamma^{s,\pi }_{u,m} ( \bvec{d}^{\brac{1:m-1}})&=1,   \label{eq:constraint_on_gamma_1}\\
	\gamma^{s,\pi }_{u,m} ( \bvec{d}^{\brac{1:m-1}}) &\leq \frac{ \phi_u^{\pi, s}}{\abs{\mathcal{M}}\brac{1-\delta}} \quad \forall u \in \mathcal{U}, \label{eq:constraint_on_gamma_2}\\
	\gamma^{s,\pi }_{u,m} ( \bvec{d}^{\brac{1:m-1}}) &\in \sbrac{0,1} \quad \forall u \in \mathcal{U}. \label{eq:constraint_on_gamma_3}
	\end{align}  
\end{enumerate}
 Observe that the URLLC placement factor for causal and minislot-dependent  scheduling  policy is not just dependent on the user and network state but it also depends on the mini-slot index and  past URLLC demands. 
  Let $\tilde{\Pi}$ be the set of all  causal and mini-slot dependent  scheduling  policies. In our online algorithm~\eqref{eqn:opt-stoch-approx}, for any eMBB slot $t$, we  find the policy which solves the following optimization problem with non-negative weights $ w_u  $. 
\begin{equation}
\mathcal{OP}_1 : \quad
\underset{\pi \in \tilde{\Pi}}{\max}: \sum_{u \in \mathcal{U}} w_u g_u^{\pi, s}\brac{\phi_u^{\pi, s}, \bvecgreek{\gamma}_{u}^{\pi, s}},
\end{equation} 
where $s$ is the current network state, $ \bvecgreek{\gamma}_{u}^{\pi, s}:=\brac{{\gamma}_{u,1}^{\pi, s}\brac{ \cdot}, {\gamma}_{u,2}^{\pi, s}\brac{ \cdot}, \ldots, {\gamma}_{u,\abs{\mathcal{M}}}^{\pi, s}\brac{ \cdot}} $ is the vector of URLLC placement factors of all minislots (with slight abuse of notation)  and $g_u^{\pi, s}\brac{\cdot, \cdot}$ is the average rate experienced by eMBB user $u$ under policy $\pi$. $g_u^{\pi, s}\brac{\cdot, \cdot}$ is given by the following expression:  
 \begin{multline}
 g_u^{\pi, s} \brac{\phi_{u}^{\pi, s}, \bvecgreek{\gamma}_{u}^{\pi, s}} := \\ r_u^s \phi_u^{\pi, s}\expect{{1-   h_u^s \brac{\frac{\sum_{m=1}^{\abs{\mathcal{M}}}{\gamma_{u,m}^{\pi, s}\brac{{ \bvec{D}^{\brac{1:m-1}}} }D_m}}{{ \phi_u^{\pi, s}}}}}},
 \end{multline}
 where the expectation is computed with respect to the joint distribution of $D(1)$, $D(2), \ldots$, $D(\abs{\mathcal{M}})$. One can formulate the above optimization problem as a Markov Decision Problem (MDP), however the state space for such an MDP is prohibitively large.  Furthermore we note that minislot-homogeneous policies are attractive in terms of it computational complexity. In general, one cannot expect optimal minislot-homogeneous policies to perform as well as  optimal minislot dependent policies, however, if we restrict ourselves to convex \emph{homogeneous} loss functions, then we can show that minislot-homogeneous policies are in fact optimal over $ \tilde{\Pi} $.

 \begin{definition}
 	\label{asm:assumption_on_convex_cost_functions}
 	A loss function $ h_u^s(\cdot) $ is said to be homogeneous if there exists a  real number $p$ such that $ \forall $ $ x \in \sbrac{0,1} $ and $ \kappa \geq 0$ we have that  
 	\begin{equation}
 	h_u^s(\kappa x) = \kappa^p h_u^s( x).
 	\end{equation}
 \end{definition} 
 
 Even with this restriction we can model useful loss functions which could possibly be user and network state dependent. Some examples are given below. 
 \begin{enumerate}
 	\item {\bf Linear:} $h_u^s(x) = k_u^s \brac{x}$, where $k_u^s\geq 0$.
 	\item {\bf Monomial:} $h_u^s(x)= k_u^s \brac{x}^q$ where $k_u^s\geq 0$ and $q \geq 1$. 
 \end{enumerate}
 
 Our main result on the optimality of minislot-homogeneous policies  is proved in Appendix~F  and stated next. 
 \begin{theorem}
 	\label{thm:optimality_of_minislot_homogeneous_policy}
If the support of URLLC demands $ D $ is a finite discrete set and eMBB loss functions are homogeneous and convex, then there exists an optimal solution  $\brac{\bvec{\phi}^{s, *}, \bvec{\gamma}^{s, *}\brac{ \cdot}}$ for $\mathcal{OP}_1$ with a minislot-homogeneous URLLC placement policy $ \bvec{\gamma}^{s, *} $. 
\end{theorem}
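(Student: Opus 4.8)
The plan is to use homogeneity to pin down the shape of the loss functions, and then to reduce $\mathcal{OP}_1$ to a finite convex program in which the dependence on the minislot index and on the past demands can be stripped away. Since each $h_u^s$ is homogeneous of some degree $p$ and convex on $[0,1]$, the identity $h_u^s(x)=h_u^s(x\cdot 1)=x^{p}h_u^s(1)$ (and $h_u^s(0)=0$, forced by homogeneity) shows $h_u^s(x)=k_u^s x^{p_u^s}$ with $k_u^s=h_u^s(1)\geq 0$ and, by convexity, $p_u^s\geq 1$. Substituting this together with $h_u^s(L_u/\phi_u)=\phi_u^{-p_u^s}k_u^s L_u^{p_u^s}$, the objective of $\mathcal{OP}_1$ becomes $\sum_u w_u r_u^s\phi_u-\sum_u b_u\,\mathbb{E}[L_u^{\,p_u^s}]$, where $b_u:=w_u r_u^s k_u^s \phi_u^{1-p_u^s}\geq 0$ and $L_u=\sum_{m}\gamma_{u,m}^{\pi,s}(\bm{D}^{(1:m-1)})D(m)$ is the user's total induced load. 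For a fixed allocation $\bm\phi$ this is a convex functional of the placement variables, since $L_u$ is affine and nonnegative in the $\gamma$'s and $t\mapsto t^{p_u^s}$ is convex and increasing.

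Because $D$ (hence each $D(m)$) has finite support, there are finitely many demand histories, so a causal policy in $\tilde\Pi$ is described by finitely many numbers and $\mathcal{OP}_1$ is a finite-dimensional convex program over a convex polytope. The core of my argument is a \emph{relaxation and pointwise optimization}. First relax $\tilde\Pi$ to non-causal placements in which each total load $L_u$ may be an arbitrary nonnegative function of the full demand vector subject to $\sum_u L_u=D$; this enlarges the feasible set and upper bounds the value. For each realization the relaxed problem is $\min\{\,\sum_u b_u l_u^{p_u^s}: l_u\geq 0,\ \sum_u l_u=D\,\}$, a separable convex program with KKT stationarity $b_u p_u^s (l_u)^{p_u^s-1}=\lambda$. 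When the degrees coincide this gives $l_u=\beta_u D$ with $\beta_u\propto b_u^{-1/(p-1)}$, i.e. a load that is a \emph{fixed fraction of the realized total demand} — exactly what the minislot-homogeneous policy $\gamma_{u,m}\equiv\beta_u$ produces. Such a policy is causal and, after checking the caps $(1-\delta)\bm\gamma\leq\bm\phi$, feasible, so it attains the relaxed upper bound; the sandwich (homogeneous value) $\leq$ (causal value) $\leq$ (non-causal value) $=$ (homogeneous value) then forces equality and a minislot-homogeneous minimizer.

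I would then reinstate the outer maximization over $\bm\phi$ and the weights $w_u$: having shown that for every fixed $\bm\phi$ some optimal placement is minislot-homogeneous, the jointly optimal $(\bm\phi^{s,*},\bm\gamma^{s,*})$ can be chosen with $\bm\gamma^{s,*}$ minislot-homogeneous, which is the claim. The finite support is used both to guarantee attainment and to make the pointwise KKT analysis rigorous. An alternative route I would keep in reserve exploits exchangeability of the i.i.d. demands across minislots: symmetrize an optimal policy over permutations of the minislot indices and invoke convexity (Jensen) to argue the index-independent symmetrization does no worse.

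The step I expect to be the main obstacle is the pointwise optimization when the homogeneity degrees $p_u^s$ differ across users. Then stationarity no longer yields loads proportional to $D$, the non-causal minimizer is genuinely nonlinear in $D$, and — crucially — adaptivity to past demands can \emph{strictly} reduce a user's expected loss at a \emph{fixed} allocation (a user loaded heavily in early minislots can be spared later, shrinking the relevant moment of $L_u$). Hence the argument cannot be run at fixed $\bm\phi$, and the homogeneity of $\bm\gamma^{s,*}$ must be recovered jointly, using the allocation freedom and the per-minislot caps to show that at an optimum each user is driven into a regime where proportional placement is optimal. Making this coupling precise, rather than decoupling the $\bm\phi$ and $\bm\gamma$ optimizations, is the delicate part of the proof.
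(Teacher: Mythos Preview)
Your plan is essentially the paper's: relax to a non-causal scheduler, show the relaxed optimum depends only on the total demand $D$ and is in fact a $D$-independent proportional split, observe that a minislot-homogeneous policy reproduces exactly that split, and close with the sandwich inequality. Two remarks are worth making.

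First, your pointwise relaxation discards the per-user caps $(1-\delta)\gamma_u\le\phi_u$ and the box constraints, then hopes to ``check the caps'' afterward; if the unconstrained proportional split violates a cap for some user this step stalls, because the constrained minimizer may then change shape with $D$. The paper avoids this by carrying \emph{all} constraints into the KKT system for the non-causal problem $\mathcal{OP}_2$ and observing that, by homogeneity, the full set of multipliers at demand level $\tilde d$ can be taken as $(\tilde d/d)^{p}$ times those at level $d$; hence the \emph{same} $\gamma^*$ satisfies the complete KKT conditions (including complementary slackness for the caps) at every $d$. This is the cleaner way to handle the boundary and is the only substantive fix your argument needs.

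Second, your worry about user-dependent exponents $p_u^s$ is well placed, and the paper's proof shares it implicitly: the simplex multiplier $\beta(d)$ for the constraint $\sum_u\gamma_u(d)=1$ is common across users and cannot be rescaled by different powers $(\tilde d/d)^{p_u}$ simultaneously. The paper's stated examples (linear, and monomial with a single exponent $q$) and its KKT scaling both take a single common degree $p$; with genuinely distinct $p_u^s$, neither your pointwise analysis nor the paper's scaling argument goes through as written, so the obstacle you flag is real rather than a defect of your route.
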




\subsection{Optimal eMBB Slot Slicing}
\label{sec:eMBB_slot_aggregation}
In Section~\ref{sec:sys-model} we have used uniform slot sizes for eMBB users, i.e. the allocated minislots to all users span the entire width of the slot (see Figure~\ref{fig:config_2}; henceforth referred to as frequency slices). However, new proposals allow greater flexibility in slot allocation, e.g., the capability to choose different slices over both time and frequency for different eMBB users~\cite{3gpp_ran1_87}. 
In this section we will show that while it is possible to slice eMBB users' resources flexibly, it is preferable to slice frequency (see~\ref{fig:config_2}) than time from the point of view of puncturing losses for convex loss functions.  

 \begin{figure}
\centering
\includegraphics[scale=0.33]{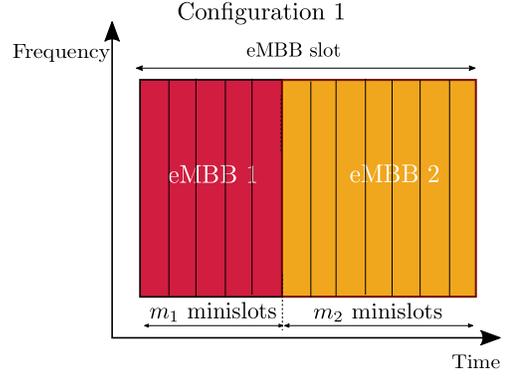}
\caption{Time Slices: In this configuration, eMBB users share  resources over time in an eMBB slot.}
\label{fig:config_1}
\end{figure}

\begin{figure}
\centering
\includegraphics[scale=0.35]{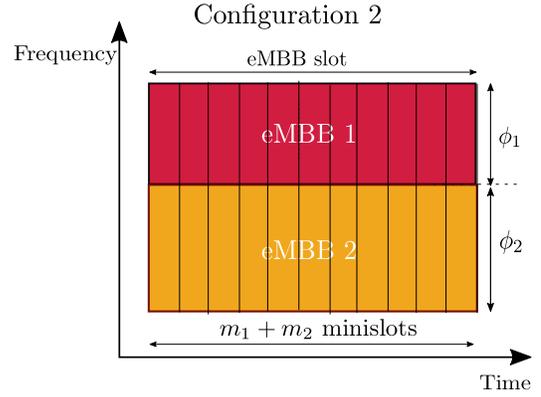}
\caption{Frequency Slices: In this configuration, eMBB users homogeneously share frequency in an eMBB slot.}
\label{fig:config_2}
\end{figure}

The essence of the discussion can be captured by comparing the two resource allocation configurations  shown in Figures~\ref{fig:config_1} and~\ref{fig:config_2}. In Configuration~1 (time slices),  eMBB user 1 is allocated the entire frequency band for a subset of $ m_1 $ minislots.
Similarly  eMBB user 2 is allocated the entire frequency band for its subset of $ m_2 $ minislots.  The network state $s$ is assumed to be the same for the entire $ m_1 + m_2$ minislots. This implies that the loss functions of eMBB users $ (h_u^s\brac{\cdot}) $ do not change throughout the $ m_1 + m_2 $ minislots. In Configuration~2 (frequency slices) we allocate an eMBB user $1 $ a fraction $\phi_1$ of the bandwidth for a duration of $ m_1 + m_2 $ minislots, where $ \phi_1:=\frac{m_1}{m_1 +m_2}$ and similarly for eMBB user $ 2 $.  Note that the total resources allocated to eMBB users, which is represented by the area allocated  in the time-frequency plane is same in both configurations.

In Configuration~1, the total puncturing observed by eMBB user $1$ is given by $\sum_{m=1}^{m_1} D(m) $ and similarly for eMBB user $ 2 $.  Whereas in Configuration~2, under uniform URLLC placement, the total puncturing observed by eMBB user $1$ is given by $\sum_{m=1}^{m_1+m_2} \phi_1 D(m)$. Note that the mean total puncturing is same in both the configurations. 

The main result of this section is given below:
\begin{theorem}
	\label{thm:horizontal_vs_vertical}
Under the assumption of  i.i.d.  URLLC demands\footnote{This result can be extended to exchangeable URLLC demands. We use i.i.d. assumption to maintain consistency with other sections. } ($D(m), \, m=1,2, \ldots, m_1+m_2$)  and convex loss functions $(h_u^s \brac{\cdot})$,   for any eMBB user, e.g., eMBB user $ 1 $, we have that
\begin{equation}
\label{eq:result_on_vertical_vs_horizontal}
 \expect{h_1^s \brac{\sum_{m=1}^{m_1} D(m)}}  \geq \expect{h_1^s \brac{\sum_{m=1}^{m_1+m_2 } \phi_1 D(m)}}.  
 \end{equation} 
\end{theorem}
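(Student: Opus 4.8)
The plan is to recognize both sides of \eqref{eq:result_on_vertical_vs_horizontal} as the expectation of a convex function evaluated at $m_1$ times a sample mean, and then exploit the fact that averaging over more i.i.d.\ demands produces a more concentrated random variable. Write $n := m_1 + m_2$, let $T_n := \sum_{m=1}^{n} D(m)$ denote the total demand over the $n$ minislots, and introduce the partial sample means $\bar D_{m_1} := \frac{1}{m_1}\sum_{m=1}^{m_1} D(m)$ and $\bar D_n := \frac{1}{n} T_n$. Since $\phi_1 = \frac{m_1}{n}$, the right-hand argument satisfies $\sum_{m=1}^{n}\phi_1 D(m) = \frac{m_1}{n} T_n = m_1 \bar D_n$, while the left-hand argument is exactly $m_1 \bar D_{m_1}$. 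Thus \eqref{eq:result_on_vertical_vs_horizontal} is equivalent to $\mathbb{E}\left[ g\left( \bar D_{m_1}\right)\right] \geq \mathbb{E}\left[ g\left( \bar D_{n}\right)\right]$, where $g(x) := h_1^s(m_1 x)$. Because $h_1^s$ is convex and $x \mapsto m_1 x$ is affine, $g$ is convex; note also that since the aggregate URLLC demand per slot is at most one, $T_n \le 1$ almost surely, so all arguments lie in $[0,1]$ where $h_1^s$ is defined.

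Next I would establish the key identity $\mathbb{E}\left[ \bar D_{m_1} \mid T_n \right] = \bar D_n$ using the i.i.d.\ (indeed, merely exchangeable) structure of the demands. By exchangeability the conditional expectations $\mathbb{E}[D(m)\mid T_n]$ are identical across $m = 1,\dots,n$, and since they must sum to $\mathbb{E}[T_n \mid T_n] = T_n$, each equals $T_n/n = \bar D_n$. Averaging the first $m_1$ of these gives $\mathbb{E}[\bar D_{m_1}\mid T_n] = \bar D_n$, so the coarser average $\bar D_n$ is precisely the conditional mean of the finer average $\bar D_{m_1}$ given the total.

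With this identity in hand, the conclusion follows from the conditional Jensen inequality: conditioning on $T_n$ and using convexity of $g$,
\begin{equation*}
\mathbb{E}\left[ g\left( \bar D_{m_1}\right) \,\middle|\, T_n \right] \;\geq\; g\!\left( \mathbb{E}\left[ \bar D_{m_1} \mid T_n \right]\right) \;=\; g\!\left( \bar D_n\right).
\end{equation*}
Taking expectations over $T_n$ on both sides yields $\mathbb{E}[g(\bar D_{m_1})] \geq \mathbb{E}[g(\bar D_n)]$, which is the desired inequality; this also transparently explains the footnote's remark that exchangeability of the demands suffices.

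The main obstacle---really the single conceptual step that makes everything work---is choosing the right quantity to condition on. Conditioning on the total $T_n$ turns the coarser average $\bar D_n$ into the conditional expectation of the finer average $\bar D_{m_1}$, so that Jensen's inequality points in the favorable direction; the symmetry argument giving $\mathbb{E}[D(m)\mid T_n] = \bar D_n$ is the crux. The remaining verifications (convexity of $g$ and that the arguments stay in $[0,1]$) are routine.
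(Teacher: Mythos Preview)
Your proof is correct, but it follows a different route from the paper's. The paper exploits the combinatorial identity
\[
\phi_1 \sum_{m=1}^{m_1+m_2} D(m) \;=\; \frac{1}{\binom{m_1+m_2}{m_1}} \sum_{q \in \mathcal{S}_{m_1}} \sum_{m \in q} D(m),
\]
where $\mathcal{S}_{m_1}$ is the collection of all $m_1$-element subsets of $\{1,\dots,m_1+m_2\}$, then applies the \emph{finite} form of Jensen's inequality pointwise (before taking expectations), and finally uses the i.i.d.\ assumption to observe that every subset-sum $\sum_{m\in q} D(m)$ has the same distribution as $\sum_{m=1}^{m_1} D(m)$. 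You instead condition on the total $T_n$, use exchangeability to identify $\bar D_n$ as the conditional mean of $\bar D_{m_1}$, and invoke the conditional Jensen inequality.

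Both proofs are short and hinge on the same two ingredients (convexity plus symmetry), just packaged differently. The paper's argument is slightly more elementary---no conditional expectations, and it in fact yields the almost-sure inequality $h_1^s(\phi_1 T_n) \le \binom{m_1+m_2}{m_1}^{-1}\sum_{q} h_1^s(\sum_{m\in q} D(m))$ before averaging. Your argument is more conceptual: it makes the footnote about exchangeability immediate and ties the result to the familiar Rao--Blackwell/reverse-martingale picture in which $\bar D_n = \mathbb{E}[\bar D_{m_1}\mid T_n]$.
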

Proof of this result is given in Appendix~G.

{\bf Remarks:} The above theorem shows that the expected  loss suffered by an eMBB user due to URLLC puncturing in  Configuration~1 (time slicing) is higher than in Configuration~2 (frequency slicing).
This implies that it is preferable for eMBB users to spread their resource allocation over time from the perspective of reducing their loss due to puncturing.  The underlying reason is that Configuration~2 results in smaller variability in the total puncturing even though both the configurations have the same mean total puncturing. Since the loss functions are convex, a lower variability leads to a lower expected loss.  Finally, for more complex (rectangular) slices, we can now apply Thm.~\ref{thm:horizontal_vs_vertical} iteratively and show that using frequency slices with appropriate scaling of the bandwidth allocation results in a higher average rate for eMBB users.

%


\section{Threshold Model and Placement Policies}
\label{sec:threshold_model}

In the previous section, we developed a stochastic approximation based
algorithm for minislot-homogeneous policies. This algorithm
iteratively solves the optimization problem given in
(\ref{eqn:opt-stoch-approx}). This optimization problem jointly
optimizes over a pair of row vectors $(\bvecgreek{\phi}^s, \bvecgreek{\gamma}^s).$ While
this convex optimization problem can be solved using standard methods,
it could become computationally challenging as the number of users
increases.

In this section, we shall restrict our attention to a threshold model
for superposition/puncturing, and look at policies that impose
structural conditions on the puncturing matrix $\bvecgreek{\gamma}.$ We
will show that the resulting class of policies have nice theoretical
properties that lead to simpler online algorithms (solving
(\ref{eqn:opt-search}), which is an one-dimensional search).
%



We consider two types of structural conditions on  $\bvec{\gamma}$:

\noindent \textbf{(i) Resource Proportional (RP) Placement:} The first
is based on allocating URLLC demands in proportion to eMBB user slot
allocations, i.e., $\gamma^s_{u} = \phi^s_{u}.$ We refer to this as
Resource Proportional (RP) Placement and denote such policies by
$$
\Pi^{RP,\delta} := \{ ({\bm \phi},{\bm \gamma}) \in  \Pi^{H, \delta} ~|~ 
 {\bm \gamma} ={\bm \phi} \},
$$
and define the associated achievable throughput region 
\begin{eqnarray*} {\cal C}^{RP,\delta} &=& \{ \mathbf{c} \in
  \Reals_+^{|{\cal U}|} ~|~ \exists {\bm \pi} \in \Pi^{RP,\delta}
  ~\mbox{s.t.}~ \mathbf{c} \leq {\bf c}^{{\bm \pi}} \}.
\end{eqnarray*}
The motivation for RP Placement comes from the optimality of random
placement for the linear model in
Section~\ref{sec:erasure_channel}. Observe that if puncturing occurs
uniformly randomly, then the expected number of punctures is directly
proportional to the fraction of bandwidth allocated to an eMBB
user. Thus, RP Placement can be viewed as a {\em determinized
  version} of the random placement strategy which ensures that the proportions of puncturing satisfy  resource proportional ratios. 

\noindent \textbf{(ii) Threshold Proportional (TP) Placement:} The
second policy allocates URLLC demands in proportion to the eMBB users
associated loss thresholds so as to avoid losses,
$$
\gamma^s_{u} =  \frac{\phi^s_{u} t^s_{u}(\phi^s_{u})}{\sum_{u' \in {\cal U}} \phi^s_{u'} t^s_{u'}(\phi^s_{u'})}.
$$   
We refer to this as Threshold Proportional (TP) Placement and denote such policies by
\begin{eqnarray*}
    \Pi^{TP,\delta}  := && \\
&&\hspace*{-50pt} \{ ({\bm \phi},{\bm \gamma}) \in \Pi^{H, \delta} ~|~ \gamma^s_{u}
     =  \frac{\phi^s_{u} t^s_{u}(\phi^s_{u})}{\sum_{u' \in {\cal U}}
     \phi^s_{u'} t^s_{u'}(\phi^s_{u'})} 
\forall s \in {\cal S}, u \in {\cal U} \}.
\end{eqnarray*}
The associated achievable throughput region is denoted  
\begin{eqnarray*}
{\cal C}^{TP,\delta}  & = & \{ \mathbf{c} \in \Reals_+^{|{\cal U}|}
~|~ \exists {\bm \pi} \in \Pi^{TP,\delta} ~\mbox{s.t.}~ \mathbf{c} \leq {\bf c}^{{\bm \pi}} \}.  
\end{eqnarray*}

First we state  a corollary to Theorem~\ref{thm:main-theorem} which characterizes the rates under different URLLC placement policies for systems having threshold loss model for superposition/puncturing.

\begin{corollary} \label{cor:main-theorem-th} Under a $(1-\delta)$
	sharing factor and time-homogeneous scheduler
	${\bm \pi} = ({\bm \phi}^{\bm \pi},{\bm \gamma}^{\bm \pi}) \in
	\Pi^{H,\delta}$ the probability of induced eMBB loss for user
	$u\in {\cal U}$ in channel state $s \in {\cal S}$ is given by
	$$
	\epsilon^{{\bm \pi},s}_u = 1-F_ D( \frac{\phi^{{\bm \pi},s}_u
		t^{s}_u(\phi^{{\bm \pi},s}_u)}{\gamma^{{\bm \pi},s}_u} ).
	$$
	where $F_D$ denotes the cumulative distribution function of the URLLC
	demands on a typical eMBB slot. 
	Then the associated user throughput is given by
	$$
	r^{{\bm \pi},s}_{u} = \hat{r}^{s}_u \phi^{{\bm \pi},s}_{u} F_
	D(\frac{\phi^{{\bm \pi},s}_u t^{s}_u(\phi^{{\bm
				\pi},s}_u)}{\gamma^{{\bm \pi},s}_u} ).
	$$
	and the overall user throughputs are given by ${\bf c}^{{\bm \pi}}= (
	c^{{\bm \pi}}_{u} : u \in {\cal U})$ where 
	$$
	c^{{\bm \pi}}_{u} = \sum_{u \in {\cal U}} \hat{r}^{s}_u \phi^{s}_{u}
	F_ D(\frac{\phi^{{\bm \pi},s}_u t^{s}_u(\phi^{{\bm
				\pi},s}_u)}{\gamma^{{\bm \pi},s}_u} ) p_S(s).
	$$
\end{corollary}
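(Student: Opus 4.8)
The plan is to treat this as a direct specialization of Theorem~\ref{thm:main-theorem} to the threshold form of the rate allocation function, so that the only real work is a substitution followed by an elementary probability rearrangement. Theorem~\ref{thm:main-theorem} already establishes that for a minislot-homogeneous policy ${\bm \pi} \in \Pi^{H,\delta}$ the aggregate URLLC load superposed on user $u$ in state $s$ is the scaled per-slot demand $L^{{\bm \pi},s}_u = \gamma^{{\bm \pi},s}_u D$, and that the mean rate is $r^{{\bm \pi},s}_u = \E[f^s_u(\phi^{{\bm \pi},s}_u, \gamma^{{\bm \pi},s}_u D)]$. I would begin by recalling this identity; the crucial simplification it provides is that the randomness seen by user $u$ is captured entirely by the single scalar random variable $D$ with CDF $F_D$, rather than by the vector of per-minislot demands $(D(m))$.

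Next I would substitute the threshold rate allocation function. Recalling $f^s_u(\phi^s_u, l^s_u) = \hat{r}^s_u \phi^s_u (1 - h^s_u(l^s_u/\phi^s_u))$ with $h^s_u(x) = \mathbf{1}(x \geq t^s_u(\phi^s_u))$, user $u$ receives its full peak rate exactly on the no-loss event $\{ L^{{\bm \pi},s}_u \leq \phi^{{\bm \pi},s}_u t^s_u(\phi^{{\bm \pi},s}_u) \}$ and zero rate otherwise. Taking expectations gives $r^{{\bm \pi},s}_u = \hat{r}^s_u \phi^{{\bm \pi},s}_u P(L^{{\bm \pi},s}_u \leq \phi^{{\bm \pi},s}_u t^s_u(\phi^{{\bm \pi},s}_u))$. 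Substituting $L^{{\bm \pi},s}_u = \gamma^{{\bm \pi},s}_u D$ and dividing through by $\gamma^{{\bm \pi},s}_u$ inside the event yields $P(D \leq \phi^{{\bm \pi},s}_u t^s_u(\phi^{{\bm \pi},s}_u)/\gamma^{{\bm \pi},s}_u) = F_D(\phi^{{\bm \pi},s}_u t^s_u(\phi^{{\bm \pi},s}_u)/\gamma^{{\bm \pi},s}_u)$, which is the stated formula for $r^{{\bm \pi},s}_u$; the induced loss probability $\epsilon^{{\bm \pi},s}_u$ is then its complement $1 - F_D(\cdot)$. The overall throughput $c^{{\bm \pi}}_u$ follows at once by averaging $r^{{\bm \pi},s}_u$ over the channel-state distribution, using $c^{{\bm \pi}}_u = \sum_{s \in \mathcal{S}} r^{{\bm \pi},s}_u p_S(s)$ from the definition of the capacity set.

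I expect no genuine obstacle, since once Theorem~\ref{thm:main-theorem} is in hand the argument is mechanical. The only point requiring a little care is that the division inside the probability presumes $\gamma^{{\bm \pi},s}_u > 0$; if $\gamma^{{\bm \pi},s}_u = 0$ then user $u$ experiences no puncturing and the no-loss event has probability one, so the expression should be read with the convention $F_D(+\infty) = 1$, which is consistent with the stated result. A secondary bookkeeping point is to confirm that the threshold $t^s_u$ is evaluated at the slot allocation $\phi^{{\bm \pi},s}_u$ throughout, a dependence inherited directly from the definition of the threshold loss model.
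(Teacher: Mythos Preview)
Your proposal is correct and matches the paper's approach: the paper states this result as a direct corollary of Theorem~\ref{thm:main-theorem} without giving a separate proof, and your argument---substituting the threshold loss function into the expression $r^{{\bm \pi},s}_u = \E[f^s_u(\phi^{{\bm \pi},s}_u,\gamma^{{\bm \pi},s}_u D)]$ and rewriting the resulting indicator probability via $F_D$---is exactly the intended specialization. Your remark on the $\gamma^{{\bm \pi},s}_u = 0$ edge case is a reasonable addition that the paper does not explicitly address.
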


The following two corollaries are direct consequences of Corollary
\ref{cor:main-theorem-th} and Theorem~\ref{thm:capacity-theorem}
restricted to RP and TP Placement strategies, and characterize the
capacity regions under the two policies.

\begin{corollary}
	\label{cor:rp} 
	Consider a wireless system with full sharing factor and
	time-homogeneous scheduler based on the RP URLLC Placement policy
	${\bm \pi} = ({\bm \phi}^{\bm \pi},{\bm \gamma}^{\bm \pi}) \in
	\Pi^{RP,\delta}.$
	Then any eMBB resource allocation $\bm{\phi}$ combined with a RP URLLC
	demand placement policy, $\bm{\gamma}= \bm{\phi}$ is feasible.  The
	probability of loss for user $u\in {\cal U}$ in channel state
	$s \in {\cal S}$ is given by
	$$
	\epsilon^{{\bm \pi},s}_u = 1-F_ D(
	{t^{s}_u(\phi^{{\bm \pi},s}_u)} ),
	$$
	with associated user throughput 
	\begin{eqnarray}
	r^{{\bm \pi},s}_{u} &=&  \hat{r}^{s}_u \phi^{s}_{u} F_
	D({t^{s}_u(\phi^{{\bm \pi},s}_u)}
	). \label{eqn:RP-cor-rate}
	\end{eqnarray}
	Further if for all $s\in {\cal S}$ and $u \in {\cal U}$
	the functions $g^s_u ( , ) $ given by  
	\begin{eqnarray}
	g^s_u ( \phi^s_u ) &=&  \phi^s_u  F_
	D({t^{s}_u(\phi^{{\bm \pi},s}_u)} ), \label{eqn:RP-cor-g}
	\end{eqnarray}
	are concave then ${\cal C}^{RP,\delta}= \hat{\cal C}^{RP,\delta}.$ 
\end{corollary}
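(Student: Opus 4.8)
The statement splits into a closed-form evaluation of the per-state loss probability and throughput under RP placement, and the set identity ${\cal C}^{RP,\delta}=\hat{\cal C}^{RP,\delta}$; both follow by specializing earlier results to the RP constraint ${\bm \gamma}={\bm \phi}$. For the first part, my plan is to substitute ${\bm \gamma}={\bm \phi}$ directly into Corollary~\ref{cor:main-theorem-th}. I would first verify that this substitution always yields an admissible policy: membership in $\Pi^{H,\delta}$ requires $(1-\delta){\bm \gamma}\leq{\bm \phi}$, which under ${\bm \gamma}={\bm \phi}$ collapses to $(1-\delta){\bm \phi}\leq{\bm \phi}$ and holds for every feasible sharing factor and every ${\bm \phi}\in\Sigma^H$; moreover ${\bm \gamma}={\bm \phi}\in\Sigma^H$ automatically, so $({\bm \phi},{\bm \phi})\in\Pi^{RP,\delta}$ for any ${\bm \phi}\in\Sigma^H$. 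This justifies the assertion that any eMBB allocation is feasible under RP placement. Substituting $\gamma^{{\bm \pi},s}_u=\phi^{{\bm \pi},s}_u$ into the argument $\frac{\phi^{{\bm \pi},s}_u t^s_u(\phi^{{\bm \pi},s}_u)}{\gamma^{{\bm \pi},s}_u}$ of $F_D$ cancels the $\phi^{{\bm \pi},s}_u$ factor, giving $\epsilon^{{\bm \pi},s}_u=1-F_D(t^s_u(\phi^{{\bm \pi},s}_u))$ and the throughput~\eqref{eqn:RP-cor-rate}; weighting by $p_S(s)$ and summing over $s\in{\cal S}$ then produces $c^{{\bm \pi}}_u$.

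For the set identity I would mirror the argument behind Theorem~\ref{thm:capacity-theorem}, now invoking the one-variable concavity hypothesis on the functions $g^s_u(\phi^s_u)$ in~\eqref{eqn:RP-cor-g} rather than the joint concavity of Assumption~\ref{condn:conc-g}. Since $\hat{\cal C}^{RP,\delta}\supseteq{\cal C}^{RP,\delta}$ trivially, it suffices to prove that ${\cal C}^{RP,\delta}$ is itself convex. Under RP an admissible policy is parametrized solely by ${\bm \phi}\in\Sigma^H$, with achieved rate $c^{{\bm \pi}}_u=\sum_{s\in{\cal S}}\hat{r}^s_u g^s_u(\phi^s_u)p_S(s)$. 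Given achievable vectors ${\bf c}^{(1)},{\bf c}^{(2)}$ realized by ${\bm \phi}^{(1)},{\bm \phi}^{(2)}$, I would take the interpolated allocation ${\bm \phi}=\lambda{\bm \phi}^{(1)}+(1-\lambda){\bm \phi}^{(2)}$, which stays in $\Sigma^H$ by convexity of that set. Concavity gives $g^s_u(\phi^s_u)\geq\lambda g^s_u(\phi^{(1),s}_u)+(1-\lambda)g^s_u(\phi^{(2),s}_u)$ for each $s,u$; multiplying by $\hat{r}^s_u p_S(s)\geq 0$ and summing over $s$ yields $c^{{\bm \pi}}_u\geq\lambda c^{(1)}_u+(1-\lambda)c^{(2)}_u$ componentwise. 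Because ${\cal C}^{RP,\delta}$ is downward-closed by definition, the dominated point $\lambda{\bf c}^{(1)}+(1-\lambda){\bf c}^{(2)}$ then belongs to ${\cal C}^{RP,\delta}$, establishing convexity and hence the claimed identity.

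The main obstacle is essentially bookkeeping rather than analysis: since the rate formula is already supplied by Corollary~\ref{cor:main-theorem-th} and concavity of $g^s_u$ is an explicit hypothesis, the only care needed is to confirm that RP placement is unconditionally feasible (so the one-parameter reduction to ${\bm \phi}$ is legitimate), that $\Sigma^H$ is convex, and that the downward-closure of the capacity region is precisely what upgrades the componentwise domination into genuine membership. In effect the argument is a single-variable restriction of the joint-concavity proof of Theorem~\ref{thm:capacity-theorem}, and I would expect no new analytic difficulty beyond tracking these reductions.
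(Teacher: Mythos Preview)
Your proposal is correct and follows precisely the route the paper takes: it states that the corollary is a direct consequence of Corollary~\ref{cor:main-theorem-th} (yielding the loss probability and throughput formulas upon substituting $\gamma^s_u=\phi^s_u$) and Theorem~\ref{thm:capacity-theorem} (whose concavity-based convexity argument, restricted to the single variable $\phi^s_u$, gives ${\cal C}^{RP,\delta}=\hat{\cal C}^{RP,\delta}$). Your care in verifying unconditional feasibility of RP placement and the downward-closure step is appropriate and matches the intended specialization.
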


\begin{corollary}
	\label{cor:tp}
	Under a  $(1-\delta)$ sharing factor and jointly uniform scheduler 
	based on the TP URLLC Placement policy ${\bm \pi} = ({\bm \phi}^{\bm
		\pi},{\bm \gamma}^{\bm \pi}) \in \Pi^{TP,\delta},$  
	the probability of induced eMBB loss
	user $u\in {\cal U}$ in channel state $s \in {\cal S}$ is given by  
	\begin{eqnarray}
	\epsilon^{{\bm \pi},s}_u &=& 1-F_ D( \sum_{u \in {\cal U}}
	\phi^{{\bm \pi},s}_u t^{s}_u(\phi^{{\bm \pi},s}_u) ), \label{eqn:tp-bound}
	\end{eqnarray}
	with associated user throughput 
	\begin{eqnarray}
	r^{{\bm \pi},s}_{u} =  \hat{r}^{s}_u \phi^{s}_{u} F_ D( \sum_{u \in {\cal
			U}} \phi^{{\bm \pi},s}_u t^{s}_u(\phi^{{\bm \pi},s}_u) ). \label{eqn:TP-cor-rate}
	\end{eqnarray}
	Further if for all $s\in {\cal S}$ and $u \in {\cal U}$
	the functions $g^s_u ( , ) $ given by  
	\begin{eqnarray}
	g^s_u ( \phi^s_u,  \gamma^s_u  ) &=&  \phi^s_u F_ D( \sum_{u \in
		{\cal U}} \phi^{{\bm \pi},s}_u
	t^{s}_u(\phi^{{\bm \pi},s}_u)
	), \label{eqn:TP-cor} 
	\end{eqnarray}
	are jointly concave then ${\cal C}^{TP,\delta}= \hat{\cal C}^{TP,\delta}.$ 
\end{corollary}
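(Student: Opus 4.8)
The plan is to split the corollary into two independent claims: the two explicit formulas (loss probability and throughput), which follow by pure substitution, and the region identity ${\cal C}^{TP,\delta}=\hat{\cal C}^{TP,\delta}$, which follows by replaying the argument behind Theorem~\ref{thm:capacity-theorem}. For the formulas, I would observe that $\Pi^{TP,\delta}\subset\Pi^{H,\delta}$, so Corollary~\ref{cor:main-theorem-th} applies verbatim and gives $\epsilon^{{\bm \pi},s}_u = 1 - F_D\brac{\phi^{{\bm \pi},s}_u t^s_u(\phi^{{\bm \pi},s}_u)/\gamma^{{\bm \pi},s}_u}$. The only step is to substitute the TP placement factor $\gamma^s_u = \phi^s_u t^s_u(\phi^s_u)/\sum_{u'\in{\cal U}}\phi^s_{u'}t^s_{u'}(\phi^s_{u'})$: the factor $\phi^s_u t^s_u(\phi^s_u)$ cancels, leaving the argument of $F_D$ equal to $\sum_{u'\in{\cal U}}\phi^s_{u'}t^s_{u'}(\phi^s_{u'})$. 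This yields~\eqref{eqn:tp-bound} directly, and feeding the same cancellation into the throughput expression of Corollary~\ref{cor:main-theorem-th} gives~\eqref{eqn:TP-cor-rate}. This also makes transparent why the loss probability is common to all users in a slot: TP placement equalizes the relative per-user loads so that every user effectively sees the aggregate $\sum_{u'}\phi^s_{u'}t^s_{u'}(\phi^s_{u'})$.

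For the region identity, I would note that under TP placement ${\bm \gamma}$ is completely determined by ${\bm \phi}$, so a policy ${\bm \pi}\in\Pi^{TP,\delta}$ is parametrized by ${\bm \phi}\in\Sigma^H$ alone, with rate vector ${\bf c}({\bm \phi})=(c_u({\bm \phi}))_{u}$ and $c_u({\bm \phi}) = \sum_{s\in{\cal S}} p_S(s)\,\hat{r}^s_u\, g^s_u(\phi^s_u,\gamma^s_u)$, where $g^s_u$ is as in~\eqref{eqn:TP-cor}. Under the stated hypothesis each $g^s_u$ is jointly concave, and since $c_u({\bm \phi})$ is a non-negative weighted sum of such terms it is concave in ${\bm \phi}$. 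I would then reproduce the convex-combination argument of Theorem~\ref{thm:capacity-theorem}: given achievable points ${\bf c}^1\le {\bf c}({\bm \phi}^1)$ and ${\bf c}^2\le {\bf c}({\bm \phi}^2)$ and $\lambda\in[0,1]$, set ${\bm \phi}^\lambda=\lambda{\bm \phi}^1+(1-\lambda){\bm \phi}^2$; concavity gives $c_u({\bm \phi}^\lambda)\ge \lambda c_u({\bm \phi}^1)+(1-\lambda)c_u({\bm \phi}^2)\ge \lambda c^1_u+(1-\lambda)c^2_u$ for every $u$, so $\lambda{\bf c}^1+(1-\lambda){\bf c}^2\in{\cal C}^{TP,\delta}$. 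Hence ${\cal C}^{TP,\delta}$ is convex, and since a convex set coincides with its own convex hull we conclude ${\cal C}^{TP,\delta}=\hat{\cal C}^{TP,\delta}$.

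The main obstacle, and the one genuine point of departure from Theorem~\ref{thm:capacity-theorem}, is feasibility of the combined allocation. For $\Pi^{H,\delta}$ the constraint $(1-\delta){\bm \gamma}\le{\bm \phi}$ is linear in the free variables $({\bm \phi},{\bm \gamma})$, so the parameter set is convex by construction; under TP placement ${\bm \gamma}$ is a \emph{nonlinear} function of ${\bm \phi}$, and the constraint reduces to $(1-\delta)\,t^s_u(\phi^s_u)\le \sum_{u'\in{\cal U}}\phi^s_{u'}t^s_{u'}(\phi^s_{u'})$, which need not carve out a convex subset of $\Sigma^H$. Thus the step ``${\bm \phi}^\lambda$ is feasible whenever ${\bm \phi}^1,{\bm \phi}^2$ are'' is not automatic and must be verified. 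I would close this either by restricting to regimes of the sharing factor $\delta$ and threshold profiles $t^s_u$ for which this constraint set is convex, or by checking that the operating points produced by the scheduler of Section~\ref{sec:th-stoch-approx} stay feasible; the concavity supplied by the hypothesis and by Lemma~\ref{lm:condn:examplesg} then does the remaining work.
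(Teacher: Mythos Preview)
Your approach matches the paper's exactly: the paper states that Corollary~\ref{cor:tp} is a ``direct consequence'' of Corollary~\ref{cor:main-theorem-th} and Theorem~\ref{thm:capacity-theorem} restricted to TP placement, which is precisely the substitution-plus-convex-combination argument you outline.

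The feasibility issue you flag is a fair observation, and the paper does not address it explicitly. However, you can close it more cleanly than you suggest. In the regime the paper actually works with for the threshold model---fixed relative thresholds $t^s_u(\phi)=\alpha^s_u$, as in Lemma~\ref{lm:condn:examplesg} and the application at the end of Section~\ref{sec:threshold_model}---the TP feasibility constraint $(1-\delta)\,t^s_u(\phi^s_u)\le \sum_{u'\in{\cal U}}\phi^s_{u'}t^s_{u'}(\phi^s_{u'})$ becomes $(1-\delta)\alpha^s_u\le \sum_{u'}\alpha^s_{u'}\phi^s_{u'}$, which is \emph{linear} in ${\bm \phi}$ and hence cuts out a convex subset of $\Sigma^H$. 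So ${\bm \phi}^\lambda=\lambda{\bm \phi}^1+(1-\lambda){\bm \phi}^2$ is automatically feasible, and the Theorem~\ref{thm:capacity-theorem} argument goes through without the caveats you list. (When $\alpha^s_u=\alpha^s$ is user-independent, TP even collapses to RP and the constraint becomes vacuous.) For general nonlinear $t^s_u(\cdot)$ your concern stands, but that is outside the scope of what the corollary is used for in the paper.
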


The following theorem provides a formal motivation for TP Placement. 
The main takeaway here is that the {\em probability of any loss in an
  eMBB slot under TP Placement policy is a lower bound for all other
  strategies.} Note that minimizing the probability of any eMBB loss is not same as minimizing eMBB rate loss.  


\begin{theorem} \label{thm:tp-optimality-theorem}
Consider a system with $(1-\delta)$ sharing factor. Consider
a joint scheduling policy based on the TP URLLC placement i.e, 
${\bm \pi} = ({\bm \phi}^{\bm \pi},{\bm \gamma}^{\bm \pi}) \in \Pi^{TP,\delta}.$ 
Then ${\bm \pi}$ achieves the minimum probability of any eMBB loss amongst
all joint scheduling policies using the same eMBB resource allocation ${\bm \phi}^{\bm \pi}.$
\end{theorem}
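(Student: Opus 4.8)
The plan is to collapse the theorem to a single scalar comparison and exhibit a policy-independent upper bound on the no-loss probability that the TP rule meets with equality. First I would fix the state $s$ and the (common) eMBB allocation $\bm{\phi}$, and introduce for each user the absolute puncturing tolerance $\tau_u := \phi^{s}_u\, t^{s}_u(\phi^{s}_u)$. Under the threshold model the data of eMBB user $u$ is received iff its aggregate induced load satisfies $L^{s}_u \le \tau_u$; the point is that $\tau_u$ depends only on $\phi^{s}_u$, hence is identical across every policy in the comparison class. Since minimizing the probability of any eMBB loss is the same as maximizing $P(\forall u:\, L^{s}_u \le \tau_u)$, the entire statement reduces to a claim about this no-loss probability, which I would prove per state and then average against $p_S(s)$ (TP being optimal in each state, it is optimal for the state-averaged probability as well).

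Second I would establish a universal upper bound. For any causal, minislot-dependent placement the per-minislot factors satisfy $\sum_{u}\gamma^{s}_{u,m}(\cdot)=1$ by \eqref{eq:constraint_on_gamma_1}, so all demand is placed and the load-conservation identity $\sum_{u\in\mathcal U} L^{s}_u=\sum_{m\in\mathcal M}D(m)=D$ holds almost surely. Summing the no-loss constraints then yields, sample-path-wise, the event inclusion
\[
\{\,\forall u:\ L^{s}_u \le \tau_u\,\}\ \subseteq\ \Big\{\, D=\textstyle\sum_{u} L^{s}_u \le \sum_{u}\tau_u \,\Big\},
\]
and therefore $P(\text{no eMBB loss})\le F_D\!\big(\sum_{u}\phi^{s}_u t^{s}_u(\phi^{s}_u)\big)$ for every policy using allocation $\bm{\phi}$. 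The only ingredients are load conservation and the fact that threshold loss is governed by the \emph{aggregate} per-user load, so this bound applies across the full general class of causal, history-dependent, possibly randomized placements.

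Third I would show that TP attains the bound. Under TP, $\gamma^{s}_u=\tau_u/\sum_{u'}\tau_{u'}$ and $L^{s}_u=\gamma^{s}_u D$, so user $u$'s no-loss condition $\gamma^{s}_u D\le\tau_u$ collapses to the single, user-independent inequality $D\le\sum_{u'}\tau_{u'}$ (a user with $\tau_u=0$ receives $\gamma^{s}_u=0$ and hence $L^{s}_u=0$, so it never loses and is harmless). Consequently every user is spared or lost simultaneously, giving $P(\text{no eMBB loss under TP})=P(D\le\sum_{u}\tau_u)=F_D\!\big(\sum_{u}\phi^{s}_u t^{s}_u(\phi^{s}_u)\big)$, which is precisely the upper bound and matches Corollary~\ref{cor:tp}. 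Hence TP maximizes the no-loss probability and minimizes the probability of any eMBB loss.

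The main obstacle is not any hard estimate but making this reduction airtight over the entire policy class: I must verify that in the threshold model each eMBB user's loss is determined solely by its aggregate load $L^{s}_u$ (and not by how the puncturing is spread across minislots), and that the conservation identity $\sum_u L^{s}_u=D$ survives for arbitrary causal, minislot-dependent placement. Once these two structural facts are secured the argument is elementary and, notably, requires no concavity hypothesis. The conceptual content is simply that TP is the unique proportional rule that aligns every user's loss threshold with the \emph{single} event $\{D>\sum_u\tau_u\}$, so that no probability mass is ever wasted on configurations in which some users are spared while others are lost.
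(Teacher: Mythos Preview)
Your proposal is correct and is essentially the paper's own argument, carried out with more care: the paper obtains the lower bound on the loss probability by ``pooling'' the users' thresholds and comparing against the aggregate demand $D$, which is exactly your sample-path inclusion $\{\forall u:\,L^{s}_u\le\tau_u\}\subseteq\{D\le\sum_u\tau_u\}$ driven by load conservation $\sum_u L^{s}_u=D$, and then observes (as you do) that TP attains this bound via Corollary~\ref{cor:tp}. Your explicit handling of the per-state reduction, the $\tau_u=0$ edge case, and the two structural facts (aggregate-load dependence of threshold loss, conservation of $D$) simply fills in what the paper leaves implicit.
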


The proof is  included in Appendix~H. 
Next  we consider online algorithms that implement the RP
and TP Placement policies. While the stochastic approximation
algorithm developed in Section~\ref{sec:th-stoch-approx} can clearly
be used, the additional structure imposed by the RP and TP Placement
policies, and the shape of the threshold loss function (discussed
below) can result in much simpler algorithms (with optimality
guarantees).

We consider the case where $t^s_u(\phi)$ is a (state dependent but
$\phi$ independent) constant, i.e., $t^s_u(\phi) = \alpha^s,$
where $\alpha^s \in (0,1).$ Intuitively, this means that eMBB traffic
which has a higher share of the bandwidth is more resilient to losses
(e.g. through coding over larger fraction of resources).
Then, by substituting this loss function in (\ref{eqn:RP-cor-rate}) and
(\ref{eqn:TP-cor-rate}) (where we also use the fact that $\sum_{u
  \in  {\cal U}} \phi^{s}_u = 1$), we have that 
$$
r^{{\bm \pi},s}_{u} =  \hat{r}^{s}_u \phi^{s}_{u} F_D(\alpha^s).
$$
Comparing with the development in Section~\ref{sec:util-linear}, we
observe that the cost and constraints are identical if
$F_D(\alpha^s)$ replaces $(1 - \rho).$ Note that a small difference
is that $F_D(\alpha^s)$ is state dependent, whereas
$(1 - \rho)$ does not depend  on the state; however, it is easy to see
that the development in Section~\ref{sec:util-linear} immediately
generalizes to this setting. Hence, we can interpret
$F_D(\alpha^s)$ as the state dependent average rate loss
due to puncturing via the RP or TP Placement policies.

We can now employ the rate-based iterative gradient scheduler
developed in Section~\ref{sec:util-linear} (by replacing $(1 - \rho)$
in (\ref{eqn:lin-rate-update}) by a user-dependent $F_D(\alpha^s)$),
and the theoretical guarantees directly carry over. As this algorithm
only minimizes over users at each slot in (\ref{eqn:opt-search}), this
is easier to implement when compared to the stochastic approximation
algorithm developed in Section~\ref{sec:th-stoch-approx}.

\section{Simulations}
We consider a system with a total of 100 RBs available per eMBB slot, and
with 8 minislots per eMBB slot.
In an eMBB slot, $\hat{r}_u^s$ for an eMBB user is drawn from the
finite set $\cbrac{1, 2, 3, 4, 5, 6, 7, 8, 9, 10}$
Mbps  according to a probability distribution and i.i.d. across users and slots.  Our
system consists of 20 users, and with 100 channel states (all equally
likely).  The ($20$ users $\times$ $100$ states) rate matrix is
one-time synthesized by independently and uniformly sampling a rate
from the finite rate set for each matrix element. For $ 10$ eMBB users, we have chosen the probability distribution such that the average rate is $ 7 $ Mbps. For the rest, probability distribution is such that the average rate is $ 3 $ Mbps. This models two classes of users, one class with higher link rates which can tolerate a higher amount of puncturing and the other with lower link rates which can tolerate lesser amount of puncturing. This is reasonable as a user with a higher channel rate can code more robustly and protect its transmissions from URLLC puncturing  more than a user with  a lower channel rate. In this spirit we shall call users with $ 7 $ Mbps average rates as `robust' users and users with $ 3 $ Mbps average rates as `sensitive' users. We use the utility function $ U_u(r)=\log \brac{r} $ for all users.

We first show that joint scheduling is necessary to preserve eMBB throughputs. To that end we benchmark  our optimal online algorithm (stochastic approximation algorithm, see
Section~\ref{sec:th-stoch-approx}) for convex loss functions with a scheme which performs standard gradient based scheduling for eMBB users and Resource Proportional (RP)  URLLC placement. Note that for convex loss functions, RP placement strategy does not take into account the  eMBB user's sensitivity to delays.  For  users with average rate $7$ Mbps, we use the loss function $ h_u^s(x)=x^2 $. For users with average rate $ 3 $ Mpbs, we use the following loss function:
\begin{equation}
h_u^s(x)=
	\begin{cases}
	\brac{\frac{x}{0.7}}^2, & \mbox{if } x \leq 0.7, \\
	0,  & \mbox{if } 0.7< x \leq 1.
	\end{cases}
\end{equation}

URLLC demands in a minislot is drawn from a binomial distribution which can take values $ 0 $ with $ p $ and $ \frac{1-\delta}{8} $ with probability $ 1-p $. Note that this ensures that peak URLLC load in an eMBB slot is less than or equal to $ 1-\delta $.   

In Fig.~\ref{fig:convex_loss_utility}, we  compare the average sum utility under our optimal joint scheduler and the RP based policy as a function of the URLLC load. As the load increases,  RP performs poorly. To understand this phenomenon in detail, we have plotted the average rates of robust and sensitive users under the two policies in Fig.~\ref{fig:convex_loss_avg_rate}. As we increase the URLLC load, the average eMBB rates of both sensitive and robust users decrease rapidly. For example, when $ \rho=0.4 $, RP has $15$  \% lower throughput for robust users and almost similar performance for sensitive users as compared to optimal algorithm. Further as we increase $ \rho $ to $ 0.6 $, the throughput of robust and sensitive users in RP decrease by $ 35 $ \% and $ 26 $ \%, respectively. 

 Sensitive users are the most affected by URLLC puncturing. When the RP URLLC placement policy is combined with the standard gradient based algorithm for eMBB users,  it allocates more resources to sensitive users because they have higher marginal utility. Since sensitive users receive more bandwidth, under the RP URLLC placement strategy they receive more puncturing. This will lead to even more allocation of resources to sensitive users and this process continues until robust users have similar marginal utilities (due to reduced rates) as sensitive users. Hence, the robust users are resource starved. As we increase the URLLC load further, sensitive users receive even more URLLC puncturing and neither the robust nor sensitive users get good average rates when compared to the optimal joint scheduler. This shows that we require joint scheduling of eMBB and URLLC to exploit the  heterogeneity in sensitivities to URLLC puncturing in maximizing eMBB utilities. 

\begin{figure}
	\centering
	\includegraphics[height=2in,width=3in]{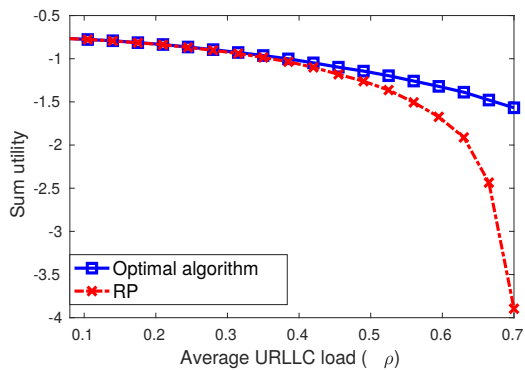}
	\caption{Sum utility as a function of URLLC load $\rho$ for the
		optimal and  RP  policies under convex model $ \delta=0.3 $.}
	\label{fig:convex_loss_utility}
\end{figure}

\begin{figure}
	\centering
	\includegraphics[height=2in,width=3in]{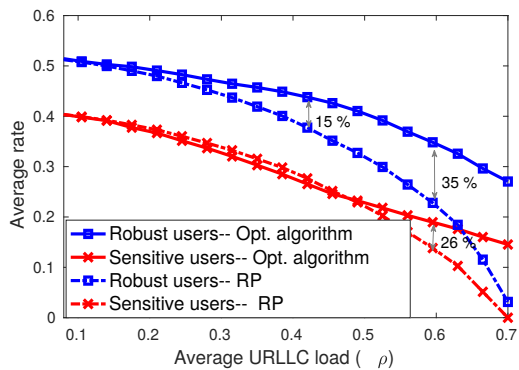}
	\caption{Average rates as a function of URLLC load $\rho$ for the
		optimal and  RP  policies under convex model $ \delta=0.3 $.}
	\label{fig:convex_loss_avg_rate}
\end{figure}

\begin{figure}
\centering
\includegraphics[height=2in,width=3in]{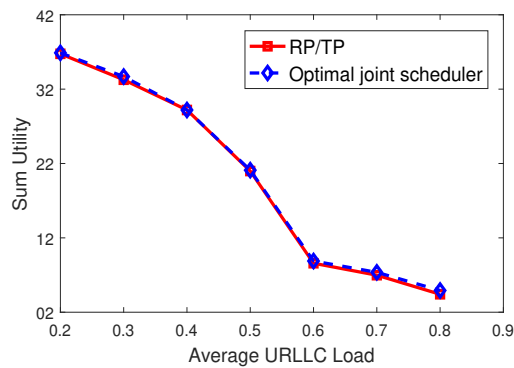}
\caption{Sum utility as a function of URLLC load $\rho$ for the
  optimal and TP Placement policies under threshold model
  ($\delta = 0.1$).}
\label{fig:threshold_model}
\end{figure}





Next we  consider a threshold based loss model with $\alpha^s =0.3$ for 50\% of
eMBB states and $\alpha^s =0.7$ for the rest. We use the utility
function $U_u(r)=\log(r) + 6.5$ for all eMBB users, where $r$ is
measured in Mbps (constant added to ensure non-negativity of the sum
utility). URLLC load in an eMBB slot ($D$) is generated based on
the truncated Pareto distribution with tail exponent $\eta=2$.
We compare the optimal policy (stochastic approximation algorithm, see
Section~\ref{sec:th-stoch-approx}) with that from the TP Placement
policy (the simpler gradient algorithm in
Section~\ref{sec:threshold_model}). In this case, since the threshold
functions are (state-dependent) constants, the RP and TP Placement
policies are the same. As we can see in
Figure~\ref{fig:threshold_model}, unlike the convex loss model the RP/TP Placement policy tracks the
optimal policy very well.

\begin{figure}
\centering
\includegraphics[height=2in,width=3in]{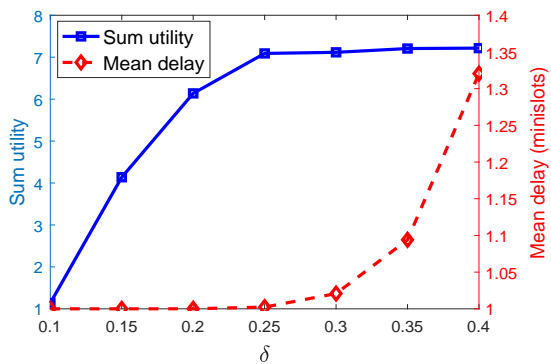}
\caption{Sum utility and mean URLLC delay  as a function of $\delta$.}
\label{fig:URLLC_cap_vs_emBB_util}
\end{figure}

\begin{figure}
\centering
\includegraphics[height=1.65in,width=3in]{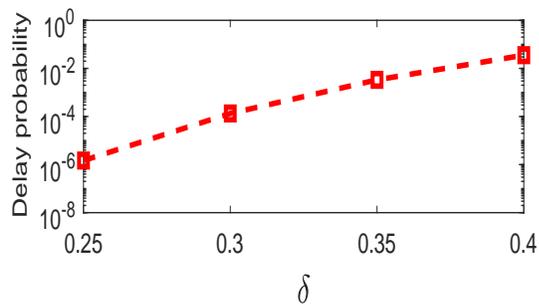}
\caption{Log-scale plot of the probability
  that URLLC traffic is delayed by more than two minislots (0.25 msec)
  for various values of $\delta.$ }
\label{fig:URLLC_cap_vs_emBB_util}
\end{figure}

In Figure~\ref{fig:URLLC_cap_vs_emBB_util}, we study the trade-off
between achieving a higher eMBB utility and lowering the mean delay of
URLLC traffic for different values of the sharing factor
$1-\delta$. Figure~\ref{fig:URLLC_cap_vs_emBB_util} plots the
corresponding probability that the URLLC traffic delay exceeds two
minislots ($0.125 \times 2 = 0.25$ msec).
To study this trade-off we generate URLLC arrivals in each minislot
from an uniform distribution between $\sbrac{0, 1/8}$ (recall there
are 8 minislots).
In each minislot, we can serve at most $\frac{1-\delta}{8}$ units of
URLLC traffic. If the URLLC load in a given minislot is more than
$\frac{1-\delta}{8}$, the remaining URLLC traffic is queued and served
in the next minislot on a FCFS basis.
For the eMBB users we use a convex model with
$h_u^s(x) = e^{{\kappa_u\brac{x-1}}}$ where $\kappa_u$ determines the
sensitivity of an eMBB user to an URLLC load. We have chosen
$\kappa=0.2$ for 50 \% of the users and $\kappa=0.7$ for the rest. We
also set $\forall u$ $U_u(x)=\log(x) + 4.2$ (constant added to ensure
positive sum utility).
In summary, a larger value of $\delta$ limits the amount of URLLC
traffic than can be served in a minislot. However, a larger $\delta$
enlarges the constraint set $\Pi^{H,\delta}$ in the eMBB utility
maximization problem, and hence we get higher eMBB utility.


\section{Conclusion}
\label{sec:concl}

In this paper, we have developed a framework and algorithms for joint
scheduling of URLLC (low latency) and eMBB (broadband) traffic in
emerging 5G systems. Our setting considers recent proposals where
URLLC traffic is dynamically multiplexed through
puncturing/superposition of eMBB traffic. Our results show that this
joint problem has structural properties that enable clean
decompositions, and corresponding algorithms with theoretical
guarantees.

\section*{Acknowledgements}
The work of Arjun Anand was partially supported by FutureWei Technologies and NSF grant CNS-1731658,
Gustavo de Veciana was partially supported by NSF grants CNS-1343383 and CNS-1731658, and Sanjay Shakkottai was partially
supported by NSF grants CNS-1343383 and CNS-1731658, and the US DoT D-STOP Tier 1
University Transportation Center.

\bibliographystyle{IEEEtran}
\bibliography{Bibtex/bibJournalList,Bibtex/arjun,Bibtex/ss-3}
\index{Bibliography@\emph{Bibliography}}

\newpage
\appendix

\subsection{Proof of Theorem~\ref{thm:theorem_on_erasure_channel_based_model}}
\label{pf:theorem_on_erasure_channel_based_model}
Clearly since $\Pi^{LR} \subset \Pi$ we have that ${\cal C}^{LR} \subset {\cal C}$ 

Now consider any policy $\pi \in \Pi$ with 
eMBB user allocations ${\bm \phi}^{\pi}$ and URLLC loads 
$\overline{{\bf l}}^{\pi}$ and associated long term throughput is $\mathbf{c}^{\pi}$ given by
$$
c_u^\pi = \sum_{s\in {\cal S}} \hat{r}_{u}^{s}( \phi^{\pi,s}_{u} - \overline{l}_{u}^{\pi,s})  p_S(s).
$$

Let us define a $\pi'$ based on $\pi$ to have per minislot eMBB user allocations given by  
$$
\phi_{u,m}^{\pi',s} = 
\frac{\phi_{u}^{\pi,s}- \overline{l}_{u}^{\pi,s}} {\sum_{u' \in {\cal U}} \phi_{u'}^{\pi,s}- l_{u'}^{\pi,s}} f
= \frac{\phi_{u}^{s}- \overline{l}_{u}^{\pi,s}}{ 1- \rho} f,
$$
for $s \in {\cal S},$ $u \in {\cal U}$ and $m \in {\cal M}.$ 
Since induced mean loads on an eMBB user can not exceed its allocation we have that
${\bm \phi}^\pi  \geq  \overline{{\bf l}}^\pi$ so the above allocations are positive. Note
also that this allocation is not minislot dependent, but normalized so that per minislot
they sum to $f$ and over the whole eMBB slot sum to $1$, i.e., ${\bm \phi}^{\pi'} \in \Sigma$.
Thus for such an allocation we have that 
$$
\phi_{u}^{\pi',s} = \frac{\phi_{u}^{s}-\overline{l}_{u}^{\pi,s}}{ 1- \rho}.
$$
Also suppose that $\pi'$ uses randomized URLLC placement across minislots which induces
mean URLLC loads proportional to the allocations, i.e.,  
$\overline{l}^{\pi',s}_{u} = \rho  \phi^{\pi',s}_{u}$. 
It follows that
\begin{eqnarray*}
\phi^{\pi',s}_{u}- \overline{l}^{\pi',s}_{u}  
& = & \phi^{\pi',s}_{u} - \rho \phi^{\pi',s}_{u} \\
& = & (1-\rho)  \phi^{\pi',s}_{u}   \\
&  =&    \phi_{u}^{\pi,s}-\overline{l}^{\pi,s}_{u},
\end{eqnarray*}
and so $c_u^{\pi,s}= c_u^{\pi',s}$ for all $s \in {\cal S}$ and $u \in {\cal U}.$ 
Thus for any policy $\pi$ there is a policy $\pi'$ which
uses randomized URLLC placement and achieves the same long term throughputs. 
It follows that ${\cal C} \subset {\cal C}^{LR}$ and so $ {\cal C} =  {\cal C}^{LR}.$ 

\subsection{Proof of Theorem~\ref{thm:main-theorem}}
\label{pf:main-theorem}

  Under a policy
  ${\bm \pi} = ({\bm \phi}^{\bm \pi},{\bm \gamma}^{\bm \pi}) \in
  \Pi^{H,\delta}$ we have that the induced loads are given by
$$
L_{u,m}^{{\bm \pi},s} =  \frac{\gamma_{u,1}^{{\bm \pi},s}}{f} D(m),
$$
so we have that
$$
L_{u}^{{\bm \pi},s} 
=  \sum_{u \in {\cal U}} L_{u,m}^{{\bm \pi},s} 
=  \frac{\gamma_{u,1}^{{\bm \pi},s}}{f} \sum_{u \in {\cal U}} D(m) 
=  \frac{\gamma_{u,1}^{{\bm \pi},s}}{f} D
=  \gamma_{u}^{{\bm \pi},s} D.
$$
where the last equality follows from the uniformity of URLLC splits 
and normalization it follows that
$$
r^{{\bm \pi},s}_{u} = \E[ {f}^{s}_u ( \phi^{{\bm \pi},s}_{u}, L^{{\bm
    \pi},s}_u)] = \E[ {f}^{s}_u ( \phi^{{\bm \pi},s}_{u}, \gamma^{{\bm
    \pi},s}_u D )].
$$

\subsection{Proof of Lemma~\ref{lm:condn:examplesg}}
\label{pf:condn:examplesg}
Recall that convex loss functions are specified as follows
$$
f_{u}^{s}(\phi_{u}^{s},l_{u}^{s}) = \hat{r}_{u}^{s} \phi_{u}^{s} (1-
h^s_u \brac{ \frac{\l_{u}^{s}}{\phi_{u}^{s}}}),
$$
with $h^s_u: [0,1] \rightarrow [0,1]$ a convex increasing
function. For time-homogenous policies we have defined
\begin{eqnarray*}
g^s_u ( \phi^s_u,  \gamma^s_u  ) & = & \E[ {f}^{s}_u ( \phi^{s}_{u}, \gamma^{s}_u D )] \\
& = & \hat{r}_{u}^{s} E[ \phi_{u}^{s}  - \phi_{u}^{s}  h^s_u ( \frac{\gamma^{s}_u}{\phi^{s}_u} D )].
\end{eqnarray*}
Recall that convex function $h(\cdot)$ one can define a function
$ l(\phi, \gamma) = \phi h (\frac{\gamma}{\phi})$ known as the
perspective of $h(\cdot)$ which is known to be jointly convex in its
arguments.  It follows that $ \phi-\phi h (\frac{\gamma}{\phi})$ is
jointly concave, and so is $g^s_u(\cdot)$ since it is a weighted
aggregation of jointly concave functions.

For threshold-based loss functions where $t^s_u(\phi^s_u) = \alpha^s_u$  we have that
\begin{eqnarray*}
g^s_u ( \phi^s_u,  \gamma^s_u  ) & = & \E[ {f}^{s}_u ( \phi^{s}_{u}, \gamma^{s}_u D )] \\
& = & \hat{r}_{u}^{s} \phi^{\pi,s}_u P(\gamma^s_u D  \leq  \phi_u^{\pi,s} \alpha^u_s )\\
& = & \hat{r}_{u}^{s} \phi^{\pi,s}_u F_D ( \frac{\phi_u^{\pi,s} \alpha^u_s}{\gamma^s_u}).
\end{eqnarray*}
Now using the same result on the perspective functions of variables
the result follows.  The truncated Pareto case can be easily verified
by taking derivatives.


\subsection{Proof of Theorem~\ref{thm:capacity-theorem}}
\label{pf:capacity-theorem}
Clearly  ${\cal C}^{H,\delta} \subset {\cal \hat{C}}^{H,\delta}.$ 
We will show that 
${\bf c} \in {\cal \hat{C}}^{H,\delta}$ then their exists   
${\bm \pi} = ({\bm \phi}^{\bm \pi},{\bm \gamma}^{\bm \pi}) \in \Pi^{H,\delta}$ such that
${\bf c} \leq {\bf c}^{{\bm \pi}}$ from which it follows that 
${\cal C}^{H,\delta} \subset {\cal C}^{H,\delta}.$ 

Suppose ${\bf c} \in {\cal \hat{C}}^{H,\delta}$, then it can be
represented as a convex combination of policies $\Pi^{H,\delta}$, in
each channel state. For example suppose for simplicity that for that
in channel state $s \in {\cal S}$ we have that $\lambda \in [0,1]$ one
time shares between two policies ${\bm \pi}_1$ and ${\bm \pi}_2$ to
achieve throughputs for $u \in {\cal U}$ given by
$$
r^s_u  = \lambda {r}^{{\bm \pi}_1,s}_u + (1- \lambda) {r}^{{\bm \pi}_2,s}_u.
$$
Consider $u$ we have 
\begin{eqnarray*}
\lefteqn{ r^{s}_{u}  =   \lambda r^{{\bm \pi}_1,s}_u + (1- \lambda) r^{{\bm \pi}_2,s}_u } \\ 
& =& 
\lambda g^s_u (\phi^{{\bm \pi}_1,s}_{u},{\gamma^{{\bm \pi}_1,s}_u}) + 
(1- \lambda) g^s_u (\phi^{{\bm \pi}_2,s}_{u},{\gamma^{{\bm \pi}_2,s}_u})  \\
& \leq & 
g^s_u (\lambda \phi^{{\bm \pi}_1,s}_{u} + (1-\lambda) \phi^{{\bm \pi}_2,s}_{u},~ 
\lambda \gamma^{{\bm \pi}_1,s}_{u} + (1-\lambda) \phi^{\gamma_2,s}_{u} ) \\
& = &  g^s_u (\phi^{{\bm \pi},s}_{u},  \gamma^{{\bm \pi},s}_{u} ),
\end{eqnarray*}
where
$\phi^{{\bm \pi},s}_{u}=  \lambda  \phi^{{\bm \pi}_1,s}_{u} + (1- \lambda) \phi^{{\bm \pi}_2,s}_{u}$ and
$\gamma^{{\bm \pi},s}_{u} =\lambda  \gamma^{{\bm \pi}_1,s}_{u} + (1- \lambda) \gamma^{{\bm \pi}_2,s}_{u}$.
Clearly ${\bm \phi}^{\bm \pi}, {\bm \gamma}^{\bm \pi}$ as given above 
correspond to a policy $\bm{\pi}$ such that $\bm{\pi} \in \Pi^{H,\delta}$ 
since the set is convex. It also follows that 
$ r^{s}_{u}  \leq r^{{\bm \pi},s}_u $, so 
${c^s_u}  \leq {c}^{{\bm \pi},s}_u$ and so ${\bf c} \leq {\bf c}^{{\bm \pi}}.$


\subsection{Proof of Theorem~\ref{th:optimality_of_stoch_approx}}
\label{pf:optimality_of_stoch_approx}
The proof requires intermediate lemmas, detailed below. For the ease
of exposition, let us define
$U(\bvec{r}):=\sum_{u \in \mathcal{U}} U_u(r_u)$ and
$\nabla U\brac{\bvec{r}}:= \brac{\frac{\partial U_1(x)}{\partial
    x}\Bigr|_{x_1=\substack{r_1}}, \frac{\partial U_2(x)}{\partial
    x}\Bigr|_{x_2=\substack{r_2}}, \ldots, \frac{\partial
    U_1(x)}{\partial
    x}\Bigr|_{\substack{{x_{\abs{\mathcal{U}}}=r_{\abs{\mathcal{U}}}}}}}^T$. First
we have the following important lemma regarding the stochastic
approximation algorithm.
\begin{lemma}
\label{lm:unbiasedness}
$\bvec{R}(t)=\brac{R_1(t), R_2(t), \ldots, R_{\abs{\mathcal{U}}}}^T$
is an unbiased estimator of
$\underset{\bvec{c} \in \mathcal{C}^{H,
    \delta}}{\text{argmax: }} \nabla U\brac{\bvec{\overline{R}(t)}}^T
\bvec{c}$, i.e.,
\begin{equation}
\label{eq:unbiasedness}
\expect{\bvec{R}(t)}= \underset{\bvec{c} \in \mathcal{C}^{H,
    \delta}}{\text{argmax: }} \nabla U\brac{\bvec{\overline{R}(t)}}^T
\bvec{c}.  
\end{equation}
\end{lemma}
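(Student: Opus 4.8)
The plan is to compute the conditional expectation $\expect{\bvec{R}(t)}$ given the history up to the start of slot $t$ and show it coincides with the maximizer of the linear functional $\bvec{w}^{T}\bvec{c}$ over the capacity region, where $\bvec{w} := \nabla U\brac{\bvecgreek{\overline{R}}(t-1)}$ is the weight vector used by the online program~\eqref{eqn:opt-stoch-approx} (the statement's $\bvecgreek{\overline{R}}(t)$ should be read as $\bvecgreek{\overline{R}}(t-1)$, the averaged rate available at the beginning of slot $t$). Conditioned on the history, $\bvec{w}$ is deterministic and, since each $U_u$ is strictly concave and increasing, $w_u = U_u'\brac{\overline{r}_u(t-1)} \geq 0$; this nonnegativity is what will later push the maximization onto the Pareto boundary of $\mathcal{C}^{H,\delta}$. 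Both the fresh channel state $s$ drawn in slot $t$ (with law $p_S$) and the URLLC demand $D$ are independent of this history.

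First I would condition further on the realized state $s$. Given $s$, the algorithm applies $\brac{\tilde{\bm \phi}^{s}(t),\tilde{\bm \gamma}^{s}(t)}$, the maximizer of~\eqref{eqn:opt-stoch-approx}, which is a fixed minislot-homogeneous policy in $\Pi^{H,\delta}$. Hence Theorem~\ref{thm:main-theorem} applies conditionally and the URLLC-averaged rate of user $u$ in that slot is exactly $g^s_u\brac{\tilde\phi^s_u(t),\tilde\gamma^s_u(t)}$. Averaging over the channel-state distribution gives
$$
\expect{R_u(t)} = \sum_{s \in \mathcal{S}} p_S(s)\, g^s_u\brac{\tilde\phi^s_u(t),\tilde\gamma^s_u(t)} =: c^*_u .
$$

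Next I would identify $\bvec{c}^{*}$ with $\argmax_{\bvec{c}\in\mathcal{C}^{H,\delta}} \bvec{w}^{T}\bvec{c}$. By Theorem~\ref{thm:capacity-theorem} the region $\mathcal{C}^{H,\delta}=\hat{\mathcal{C}}^{H,\delta}$ is convex, and by construction it is down-closed, with every maximal point of the form $c^{\bm\pi}_u = \sum_s p_S(s) g^s_u\brac{\phi^s_u,\gamma^s_u}$ for $\brac{{\bm \phi},{\bm \gamma}}\in\Pi^{H,\delta}$. Because $\bvec{w}\geq 0$, the linear program is attained at such a maximal point, so
$$
\max_{\bvec{c}\in\mathcal{C}^{H,\delta}} \bvec{w}^{T}\bvec{c}
= \max_{\brac{{\bm \phi},{\bm \gamma}}\in\Pi^{H,\delta}} \sum_{s\in\mathcal{S}} p_S(s)\sbrac{\sum_{u\in\mathcal{U}} w_u\, g^s_u\brac{\phi^s_u,\gamma^s_u}}.
$$
The crux is that both this objective and the feasible set $\Pi^{H,\delta}$ are \emph{separable across states}: the allocation $\brac{\phi^s,\gamma^s}$ for each $s$ is constrained independently. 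The joint maximization therefore decomposes into the per-state programs $\max \sum_u w_u g^s_u\brac{\phi^s_u,\gamma^s_u}$, which are precisely the convex problems~\eqref{eqn:opt-stoch-approx} solved online (convex by Assumption~\ref{condn:conc-g}). Hence the state-$s$ optimizer is exactly $\brac{\tilde\phi^s(t),\tilde\gamma^s(t)}$, and substituting back yields $\max_{\bvec{c}}\bvec{w}^{T}\bvec{c}=\bvec{w}^{T}\bvec{c}^{*}$, so $\bvec{c}^{*}=\expect{\bvec{R}(t)}$ attains the maximum.

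The main obstacle is making this decomposition–domination step airtight: one must justify (i) using $\bvec{w}\geq 0$ together with the explicit form of $\bvec{c}^{\bm\pi}$ from Theorem~\ref{thm:main-theorem} to replace the maximization over the down-closed region $\mathcal{C}^{H,\delta}$ by a maximization over policies in $\Pi^{H,\delta}$, and (ii) that the per-state constraint sets are genuinely independent, so the per-state argmaxes assemble into the global one. A secondary point is well-definedness of the argmax: since $\mathcal{C}^{H,\delta}$ is convex and compact and the functional is linear, the set of maximizers is a nonempty face containing $\expect{\bvec{R}(t)}$; under Assumption~\ref{condn:conc-g} the strict concavity of the $U_u$-weighted objective in the achieved-rate coordinates renders this maximizer unique, so the equality in~\eqref{eq:unbiasedness} is unambiguous.
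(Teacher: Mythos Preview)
Your proposal is correct and follows essentially the same route as the paper: condition on the channel state to identify $\expect{R_u(t)\mid S(t)=s}$ with $g^s_u\brac{\tilde\phi^s_u,\tilde\gamma^s_u}$, average over $p_S$, and then use the state-separability of both the objective and the constraint set $\Pi^{H,\delta}$ to reduce the linear program over $\mathcal{C}^{H,\delta}$ to the per-state problems~\eqref{eqn:opt-stoch-approx}. You are a bit more careful than the paper about the down-closedness/$\bvec{w}\geq 0$ reduction and about well-posedness of the argmax, and your remark that $\bvecgreek{\overline{R}}(t)$ should be read as $\bvecgreek{\overline{R}}(t-1)$ correctly flags a notational slip in the statement.
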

\begin{proof}
Based on the definition of $\mathcal{C}^{H, \delta}$ we can
re-write $\underset{\bvec{c} \in \mathcal{C}^{H,
    \delta}}{\text{max: }} \nabla U\brac{\bvec{\overline{R}(t)}}^T
\bvec{c}$ as follows: 
\begin{align}
\underset{\bvec{\phi}, \bvec{\gamma}}{\max}  \quad \sum_{u \in
  \mathcal{U}}& U^{'}_u\brac{\overline{R}_u(t)}\brac{\sum_{s \in \mathcal{S}} p_{\mathcal{S}}
                (s)g^s_u \brac{ \phi^s_u,  \gamma^s_u } }, \\ 
\text{s.t.} \quad 
   \bvec{\phi} & \geq \brac{1-\delta}\bvec{\gamma}, \\
\bvec{\phi},\ \bvec{\gamma} & \in \Pi^{H,\delta}.
\end{align}
Observe that the above optimization problem can be solved separately
for each network state $s \in \mathcal{S}$. The de-coupled problem for
any state $s$ is same as the optimization
problem~\eqref{eqn:opt-stoch-approx} in our online algorithm.  With a
slight abuse of notation, let $\brac{\bvecgreek{\tilde{\phi}}(s),
  \bvecgreek{\tilde{\gamma}}(s)}$ be the optimal solution to the online problem
when $S(t)=s$. Conditioned on $S(t)=s$, we have that: 
\begin{multline}
\expect{R_u(t) \mid S(t)=s}=\expect{f_u^s\brac{\tilde{\phi}_u^s,
    \tilde{\gamma}_u^s D} \mid S(t)=s}\\ =g^s_u \brac{
  \tilde{\phi}^s_u,  \tilde{\gamma}^s_u }  \quad \forall u \in
\mathcal{U}.   
\end{multline}
Computing $\expect{\expect{R_u(t) \mid S(t)}}$ gives the desired
result~\eqref{eq:unbiasedness}.  
\end{proof}

The main intuition behind the proof of optimality is that for large
$t$, the trajectories of $\overline{\bvec{R}}(t)$ can be approximated
by the solution to the following differential equation in
$\bvec{x}(t)$ with continuous time $t$: 
\begin{equation}
\label{eq:differential_equation}
\frac{d \bvec{x}(t)}{dt}= \underset{\bvec{c} \in
  \mathcal{C}^{\mathcal{H}, \delta}}{\text{argmax: }} \nabla
U\brac{\bvec{x}(t)}^T \bvec{c} - \bvec{x}(t).  
\end{equation}
Let us  define $q(\bvec{x}):= \underset{\bvec{c} \in
  \mathcal{C}^{\mathcal{H}, \delta}}{\text{argmax: }} \nabla
U\brac{\bvec{x}}^T \bvec{c}$. To show the optimality of our online
algorithm, we shall also require the following result on the above
differential equation. 
\begin{lemma}
\label{lm:lemma_on_differetial_equation}
The differential equation~\eqref{eq:differential_equation} is globally
asymptotically stable. Furthermore, for any initial condition
$\bvec{x}(0) \in C^{\mathcal{H}, \delta}$, we have that $\lim_{t
  \rightarrow \infty} \bvec{x}(t)= \bvec{r}^*$. 
\end{lemma}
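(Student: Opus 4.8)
The plan is to exhibit the aggregate eMBB utility as a global Lyapunov function and invoke Lyapunov's direct method. Writing $q(\bvec{x}) = \argmax_{\bvec{c} \in \mathcal{C}^{H,\delta}} \nabla U(\bvec{x})^T \bvec{c}$, the dynamics read $\dot{\bvec{x}} = q(\bvec{x}) - \bvec{x}$. First I would record two structural facts. By Theorem~\ref{thm:capacity-theorem} the set $\mathcal{C}^{H,\delta} = \hat{\mathcal{C}}^{H,\delta}$ is compact and convex; and the flow is forward-invariant on it, since for small $\tau \in [0,1]$ the Euler increment $\bvec{x} + \tau(q(\bvec{x}) - \bvec{x}) = (1-\tau)\bvec{x} + \tau\, q(\bvec{x})$ is a convex combination of the two points $\bvec{x}, q(\bvec{x}) \in \mathcal{C}^{H,\delta}$ and hence remains in $\mathcal{C}^{H,\delta}$. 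Thus any trajectory started at $\bvec{x}(0) \in \mathcal{C}^{H,\delta}$ stays there for all time.

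Next I would set $W(\bvec{x}) := U(\bvec{r}^*) - U(\bvec{x})$. Because $U = \sum_u U_u$ is strictly concave and $\bvec{r}^*$ is the unique maximizer of $U$ over the convex set $\mathcal{C}^{H,\delta}$, we have $W(\bvec{x}) \geq 0$ with equality iff $\bvec{x} = \bvec{r}^*$. Differentiating along the flow,
$$\dot{W}(\bvec{x}(t)) = -\nabla U(\bvec{x})^T(q(\bvec{x}) - \bvec{x}) = -\left(\max_{\bvec{c} \in \mathcal{C}^{H,\delta}} \nabla U(\bvec{x})^T \bvec{c} - \nabla U(\bvec{x})^T \bvec{x}\right) \leq 0,$$
where the inequality holds because $\bvec{x} \in \mathcal{C}^{H,\delta}$ is itself a feasible competitor in the maximization. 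Hence $W$ is non-increasing, i.e. $U$ climbs along every trajectory. The equilibria are exactly the points where this derivative vanishes, i.e. where $\nabla U(\bvec{x})^T(\bvec{c} - \bvec{x}) \leq 0$ for all $\bvec{c} \in \mathcal{C}^{H,\delta}$; this is precisely the first-order variational-inequality condition characterizing the maximizer of the concave $U$ over the convex $\mathcal{C}^{H,\delta}$, so by strict concavity it holds only at $\bvec{r}^*$. Thus $\bvec{r}^*$ is the unique equilibrium and $W$ is a strict Lyapunov function, and by LaSalle's invariance principle every trajectory converges to $\bvec{r}^*$, which is global asymptotic stability.

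The main obstacle I anticipate is the regularity of the vector field: $q(\cdot)$ is the argmax of a linear functional over a polytope and can be set-valued or discontinuous where the supporting face changes, so the ODE is not a priori well-posed in the classical Lipschitz sense, and LaSalle must be applied with care. I would handle this exactly as in \cite{Stolyar05}: interpret the dynamics as a differential inclusion with $q$ replaced by the (convex, upper-hemicontinuous) set of maximizers, observe that the Lyapunov computation above uses only $\nabla U(\bvec{x})^T q(\bvec{x}) = \max_{\bvec{c}} \nabla U(\bvec{x})^T \bvec{c}$, a quantity identical for every selection, so the strict decrease of $W$ away from $\bvec{r}^*$ is selection-independent, and then appeal to the invariance-principle version for differential inclusions. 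The strict concavity of $U$ (hence continuity of $\nabla U$) is what makes the Lyapunov drift well-defined and robust to the choice of selection, closing the argument.
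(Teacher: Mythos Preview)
Your proposal is correct and follows essentially the same approach as the paper: both use the Lyapunov function $W(\bvec{x})=U(\bvec{r}^*)-U(\bvec{x})$, compute $\dot{W}=-\nabla U(\bvec{x})^T(q(\bvec{x})-\bvec{x})\leq 0$, and identify the unique equilibrium via the first-order optimality condition $q(\bvec{x})=\bvec{x}$. Your version is in fact more careful, since you explicitly verify forward invariance of $\mathcal{C}^{H,\delta}$ and address the well-posedness issue arising from the possible discontinuity of the argmax selection $q(\cdot)$, points the paper's proof leaves implicit.
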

\begin{proof}
  To prove this lemma it is enough to show that there exists a
  Lyapunov function $L(\bvec{x}(t))$ such that it has a negative drift
  when $x(t)\neq \bvec{r}^*$ and has zero drift when
  $x(t)=\bvec{r}^*$.  Define
  $L(\bvec{x})= U(\bvec{r}^*)- U(\bvec{x})$. Observe that under our
  assumption of strictly concave $U_u(\cdot)$, the offline
  optimization problem is guaranteed to have an unique optimal
  solution, which is $\bvec{r}^*$. Therefore,
  $\forall \bvec{x} \in C^{\mathcal{H}, \delta}$ and
  $\bvec{x} \neq \bvec{r}^*$ $L(\bvec{x}) > 0 $. Next we will compute
  the drift of $L(\bvec{x}(t))$ with respect to time.
\begin{align}
\frac{d L(\bvec{x}(t)) }{dt} &= - \nabla U\brac{\bvec{x}(t)}^T \frac{d \bvec{x}(t)}{dt}, \\
							&= -
                                                   q\brac{\bvec{x}(t)}
                                                   + \nabla
                                                   U\brac{\bvec{x}(t)}^T\bvec{x}(t), \label{eq:inequality_of_lyapunov1}\\  
							& < 0 \quad \quad \forall \bvec{x}(t) \neq \bvec{r}^*. 
							\label{eq:inequality_of_lyapunov}
\end{align}
To get inequality~\eqref{eq:inequality_of_lyapunov}, first observe
that from the definition of $q(\bvec{x(t)})$
and~\eqref{eq:inequality_of_lyapunov1}, we get that $\frac{d
  L(\bvec{x}(t)) }{dt}  \leq 0$. However, we have to show that this
inequality is strict for $\bvec{x}(t) \neq \bvec{r}^*$. Observe that
$q(\bvec{x})=\bvec{x}$ is a necessary and sufficient condition for
optimality of the offline optimization problem, see~\cite{boyd_book}
for more details. From strict concavity of the utility functions, we
have an unique optimal point $\bvec{r}^*$. Therefore, $\frac{d
  L(\bvec{x}(t)) }{dt}  <  0$ for  $\bvec{x}(t) \neq \bvec{r}^*$ and
$\frac{d L(\bvec{x}(t)) }{dt}=0$ at $\bvec{x}(t)=\bvec{r}^*$.  
\end{proof}
To conclude the proof, Lemmas~\ref{lm:unbiasedness}
and~\ref{lm:lemma_on_differetial_equation} along with the
condition~\ref{eq:condition_on_epsilon}  satisfy all the conditions
necessary to apply Theorem~2.1 in Chapter~5,~\cite{Kushner_book} which
states that $\bvec{\overline{R}}(t)$ converges to $\bvec{r}^*$ almost
surely.  
\subsection{Proof of Theorem~\ref{thm:optimality_of_minislot_homogeneous_policy}}
\label{pf:optimality_of_minislot_homogeneous_policy}
	The proof has the following  two steps. 
	\begin{enumerate}
		\item  We shall first consider a hypothetical \emph{non-casual} scenario and show that there exists an optimal joint scheduling policy with  minislot-homogeneous  URLLC placement policy which in general is a function of the aggregate URLLC load in an eMBB slot. We then upper bound the optimal value of $\mathcal{OP}_1$ by the solution to a hypothetical {non-causal} scenario described in the sequel. 
		\item Secondly, under Assumption~\ref{asm:assumption_on_convex_cost_functions} on the loss functions, we show that there exists an  URLLC placement policy policy which is  minislot-homogeneous  but  independent of the aggregate URLLC load for the hypothetical non-causal scenario.   We then conclude that there exists an optimal minislot-homogeneous  joint sceduling policy for $\mathcal{OP}_1$ as an upper bound for its value is attained by a minislot-homogeneous  joint scheduling policy. 
	\end{enumerate}
	The two steps are elaborated next. 
	\subsubsection{Hypothetical non-causal scenario}  
	First let us describe the \emph{non-causal} scenario.  At the beginning of each eMBB slot, first  the scheduler chooses $\phi^{\pi, s}$.  Next the total URLLC demand in each minislot is revealed, i.e., the realizations of $D(1), D(2), \ldots, D({\abs{\mathcal{M}}})$ are revealed. Therefore, this setting is not causal as it assumes knowledge about future URLLC demand realizations.  In general the URLLC placement under the non-causal setting is dependent on the minislot index $m$ and $\bvec{D}^{\brac{1:\abs{\mathcal{M}}}}$.  With slight abuse of notation, we shall denote it by $\gamma_{u,m}^s\brac{\bvec{D}^{\brac{1:\abs{\mathcal{M}}}}}$. The joint scheduling policy has to satisfy the constraints~\eqref{eq:constraint_on_phi},~\eqref{eq:constraint_on_gamma_1}, and~\eqref{eq:constraint_on_gamma_2}.   We have the following lemma on the \emph{non-causal} setting. 
	\begin{lemma}
		\label{lm:lemma_on_non_causal}
		There exists an optimal minislot-homogeneous  policy for the  non-casual setting
		such that the URLLC placement  depends only on the total URLLC demand in an eMBB slot, i.e., $\sum_{m=1}^{\abs{\mathcal{M}}}D_m$. 
	\end{lemma}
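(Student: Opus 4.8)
The plan is to reduce the non-causal problem to one that depends on the placement only through the \emph{aggregate} puncturing each user receives per slot, and then to apply Jensen's inequality using the convexity of $h^s_u(\cdot)$. Throughout I fix the state $s$ of $\mathcal{OP}_1$ and an arbitrary feasible eMBB allocation $\bvec{\phi}^{\pi,s}$ (committed before the demands are revealed); it then suffices to show that, for this $\bvec{\phi}^{\pi,s}$, the optimal non-causal placement can be taken minislot-homogeneous and a function of $\sum_m D(m)$ alone.

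First I would write the aggregate puncturing delivered to user $u$ under a non-causal placement as $P_u(\bvec{D}) := \sum_{m} \gamma^{s}_{u,m}\brac{\bvec{D}^{\brac{1:\abs{\mathcal{M}}}}} D(m)$, so that user $u$'s contribution to the objective is $w_u r^s_u \phi^{\pi,s}_u \expect{1 - h^s_u\brac{P_u(\bvec{D})/\phi^{\pi,s}_u}}$. Since the leading terms $\sum_u w_u r^s_u \phi^{\pi,s}_u$ are fixed once $\bvec{\phi}^{\pi,s}$ is, maximizing $\mathcal{OP}_1$ is equivalent to minimizing $\sum_u w_u r^s_u \phi^{\pi,s}_u \expect{h^s_u\brac{P_u(\bvec{D})/\phi^{\pi,s}_u}}$, which depends on the placement only through the joint law of $\brac{P_u(\bvec{D})}_u$. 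Note also that $\sum_u P_u(\bvec{D}) = \sum_m \brac{\sum_u \gamma^{s}_{u,m}} D(m) = \sum_m D(m)$ pathwise by \eqref{eq:constraint_on_gamma_1}, so the total puncturing is conserved.

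Next I would condition on the total slot demand $T := \sum_m D(m)$ and define the averaged targets $\bar{P}_u(\tau) := \expect{P_u(\bvec{D}) \mid T = \tau}$, together with the candidate minislot-homogeneous placement $\gamma_u(\tau) := \bar{P}_u(\tau)/\tau$ (choosing any feasible value on $\{\tau = 0\}$, which is immaterial as no puncturing then occurs). Two feasibility checks are required. Conservation \eqref{eq:constraint_on_gamma_1}: $\sum_u \gamma_u(\tau) = \tau^{-1}\expect{\sum_u P_u(\bvec{D}) \mid T=\tau} = \tau^{-1}\expect{T \mid T=\tau} = 1$. The box bound \eqref{eq:constraint_on_gamma_2}: because each $\gamma^{s}_{u,m} \le \phi^{\pi,s}_u/\brac{\abs{\mathcal{M}}(1-\delta)}$ almost surely, on $\{T=\tau\}$ we get $P_u(\bvec{D}) \le \phi^{\pi,s}_u \tau/\brac{\abs{\mathcal{M}}(1-\delta)}$, hence $\gamma_u(\tau) \le \phi^{\pi,s}_u/\brac{\abs{\mathcal{M}}(1-\delta)}$; nonnegativity is clear. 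Thus $\gamma_u(\tau)$ defines a feasible minislot-homogeneous placement depending only on $T$.

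Finally I would close the argument by comparing losses. The homogeneous placement delivers aggregate puncturing $\gamma_u(\tau)\,\tau = \bar{P}_u(\tau)$ on $\{T=\tau\}$, so its conditional loss is $h^s_u\brac{\expect{P_u(\bvec{D}) \mid T=\tau}/\phi^{\pi,s}_u}$, which by Jensen's inequality and the convexity of $h^s_u$ is at most $\expect{h^s_u\brac{P_u(\bvec{D})/\phi^{\pi,s}_u} \mid T=\tau}$. Weighting by the non-negative $w_u r^s_u \phi^{\pi,s}_u$, summing over $u$, and taking expectation over $\tau$ shows the homogeneous placement attains a loss no larger than that of the original non-causal policy, hence an objective value at least as large. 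Since it lies in the non-causal feasible set, an optimal non-causal policy may be taken minislot-homogeneous with placement a function of $T$ alone, as claimed. The only genuine subtlety I anticipate is the box-constraint check after averaging, which relies precisely on the almost-sure pointwise bound on $\gamma^{s}_{u,m}$ from \eqref{eq:constraint_on_gamma_2}.
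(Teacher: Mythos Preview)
Your proof is correct, but it takes a different route from the paper's. The paper argues \emph{pathwise}: for each realization $\bvec{d}$ with total $d$, it replaces the non-causal placement by the minislot-homogeneous factors $\nu_u^s(\bvec{d}) := P_u(\bvec{d})/d$, which reproduces exactly the same aggregate puncturing $P_u$ per user and hence exactly the same loss (no inequality needed). It then observes that once the policy is minislot-homogeneous, both the feasible set for $\bvec{\nu}$ and the per-realization objective $\sum_u w_u r_u^s \phi_u^{\pi,s}\bigl(1-h_u^s(\nu_u d/\phi_u^{\pi,s})\bigr)$ depend on $\bvec{d}$ only through $d$; therefore the pointwise-optimal $\bvec{\nu}$ can be chosen as a function of $d$ alone.

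By contrast, you average: you set $\gamma_u(\tau)=\tau^{-1}\expect{P_u(\bvec{D})\mid T=\tau}$ and invoke Jensen via the convexity of $h_u^s$ to show the averaged placement is at least as good. This is clean and the feasibility checks you give are right, but it uses the convexity hypothesis, whereas the paper's exchangeability-style argument does not (it works for any loss function of the aggregate puncturing). On the other hand, your construction is arguably more transparent about where the ``depends only on $T$'' property comes from, and it avoids the paper's implicit pointwise re-optimization step. Both arguments yield the lemma; they simply trade a structural symmetry observation for a convexity-based averaging.
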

	\begin{proof}
		Let $\brac{\tilde{\phi}^{\pi}, \tilde{\gamma}^{\pi, s}\brac{ \cdot}}$ be the decision variables under an optimal joint scheduling policy $\pi$ in the non-causal setting. Let $d(1)$, $d(2), \ldots, d({\abs{\mathcal{M}}})$ be realizations of $D(1), D(2), \ldots, D({\abs{\mathcal{M}}})$ such that $\sum_{m=1}^{\abs{\mathcal{M}}} d(m) =d$. Define the following:
		\begin{equation}
		\nu_u^s:=\frac{\sum_{m=1}^{\abs{\mathcal{M}}} \tilde{\gamma}_{u,m}^{\pi, s}\brac{ \bvec{d}^{\brac{1:\abs{\mathcal{M}}}}} d(m)}{d}.
		\end{equation}
		Note that with the definition of $\nu_u^s$, the total puncturing experienced by an eMBB user $u$ in an eMBB slot is $\nu_u^s d$.  From this one can construct an equivalent minislot-homogeneous  URLLC placement policy. For all minislots, use $\nu^s$ as the URLLC placement factor. This satisfies the constraints~\eqref{eq:constraint_on_phi},~\eqref{eq:constraint_on_gamma_1}, and~\eqref{eq:constraint_on_gamma_2}.  In general $\nu^s$ could depend on  $d(1),\, d(2), \, \ldots, d({\abs{\mathcal{M}}})$. However, we will show that the optimal solution  depends only on the sum $\sum_{m=1}^{\abs{\mathcal{M}}} d_m$. 
		
		Let $d'(1), \, d'(2), \ldots,\, d'({\abs{\mathcal{M}}})$ be such that $\sum_{m=1}^{\abs{\mathcal{M}}}d_m'=d$ and there exists an $m$ such that $d'(m)\neq d (m)$. Define the following:
		\begin{equation}
		\nu_u'^s:=\frac{\sum_{m=1}^{\abs{\mathcal{M}}} \tilde{\gamma}_{u,m}^{\pi, s}\brac{\bvec{d}'^{\brac{1:\abs{\mathcal{M}}}}} d_m'}{d}.
		\end{equation}
		Therefore, the total puncturing observed by $\nu_u'^sd$. Observe that $\nu'^s$ is also a feasible URLLC policy for the case when the URLLC demand realizations are $d(1)$, $d(2), \, \ldots, \, d({\abs{\mathcal{M}}})$. Similarly  $\nu^s$ is also a feasible URLLC placement policy for the case with   $d'(1)$, $d'(2), \, \ldots, \, d'({\abs{\mathcal{M}}})$. Therefore, the optimal solution has to be independent of the realizations of $D(1),  \, D(2), \ldots, D({\abs{\mathcal{M}}})$ and depends only on the sum $\sum_{m=1}^{\abs{\mathcal{M}}}D_m$.
	\end{proof}
	
	Therefore, we shall restrict ourselves to minislot-homogeneous  policies in the non-causal setting with the URLLC placement as a function of the total URLLC demand for that eMBB slot. With slight abuse of notation we shall denote a URLLC placement policy in this setting by $\gamma_u^s \brac{\cdot}$ with the only argument as the total URLLC demand in that eMBB slot.  This procedure is formally described next.
	\begin{enumerate}
		\item At the beginning of an eMBB slot, the joint scheduler chooses $\phi_u^{\pi, s}, u \in \mathcal{U}$ such that
		\begin{equation}
		\label{eq:constraint_on_phi_hyp}
		\sum_{u \in \mathcal{U}} \phi^{\pi, s}_u =1  \mbox{  and  } \phi^{\pi, s}_u \in \sbrac{0,1} \quad \forall u. 
		\end{equation} 
		\item The total URLLC demand $D=\sum_{m=1}^{\abs{\mathcal{M}}} D(m)$ in that eMBB slot is revealed.
		\item For an URLLC demand of $D$, $\gamma_u^{\pi, s}(D)$ is chosen such that 
		\begin{equation}
		\sum_{u \in \mathcal{U}}\gamma_u^{\pi, s}(D)=1, \quad \mbox{and} \quad \gamma_u^{\pi, s}(D) \in \sbrac{0,1}. 
		\end{equation} 
	\end{enumerate}     
	Let us denote the feasible policies for this hypothetical non-causal scenario by $\Pi^{\dagger}$. $\brac{\bvecgreek{\phi^{\pi, s}}, \bvecgreek{\gamma^{\pi, s}} }$ is chosen as the solution to the following optimization problem. 
	\begin{equation}
	\mathcal{OP}_2 : \quad
	\underset{\pi \in \Pi^{\dagger}}{\max}: \sum_{u \in \mathcal{U}} w_u g_u^{\pi, s}\brac{\phi_u^{\pi, s}, \gamma_u^{\pi, s}\brac{\cdot}},
	\end{equation}
	where $g_u^{\pi, s}\brac{\phi_u^{\pi, s}, \gamma_u^{\pi, s}\brac{\cdot}} = r_u^s \phi_u^{\pi, s} \expect{{1-h_u^s\brac{\frac{\gamma_u^{\pi, s}(D)D}{\phi_u^{\pi, s}}} }}$. 
	We have the following important lemma which states that the optimal value under the non-causal scenario is an upper bound to the optimal value under the causal and minislot-dependent  policy.
	\begin{lemma}
		\label{lm:lemma_on_upper_bound}
		\begin{multline}
		\underset{\pi \in \Pi^{\dagger}}{\max}: \sum_{u \in \mathcal{U}} w_u g_u^{\pi, s}\brac{\phi_u^{\pi, s}, \gamma_u^{\pi, s}\brac{\cdot}} \\ \geq \underset{\pi \in \tilde{\Pi}}{\max}: \sum_{u \in \mathcal{U}} w_u g_u^{\pi, s}\brac{\phi_u^{\pi, s}, \bvecgreek{\gamma}_{u}^{\pi, s}}.
		\end{multline}
	\end{lemma}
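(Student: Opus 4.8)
The plan is to prove this inequality by an information-relaxation argument: the non-causal scenario has strictly more information available to the scheduler than the causal one, so its optimal value can only be larger. Concretely, I would show that every causal, minislot-dependent policy $\pi \in \tilde{\Pi}$ embeds, without any change in objective value, into the \emph{full} non-causal class, and then route the comparison to $\Pi^{\dagger}$ through Lemma~\ref{lm:lemma_on_non_causal}.

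First I would fix an arbitrary $\pi \in \tilde{\Pi}$ with per-slot eMBB allocation $\phi_u^{\pi, s}$ satisfying~\eqref{eq:constraint_on_phi} and causal placement factors $\gamma_{u,m}^{\pi, s}(\bvec{d}^{\brac{1:m-1}})$ satisfying~\eqref{eq:constraint_on_gamma_1}--\eqref{eq:constraint_on_gamma_3}. I would then define a policy $\hat{\pi}$ in the full non-causal setting that uses the very same allocation $\phi^{\pi, s}$ and, in each minislot $m$, the very same factor $\gamma_{u,m}^{\pi, s}(\bvec{d}^{\brac{1:m-1}})$, simply declining to exploit the additional dependence on $D(m), \ldots, D(\abs{\mathcal{M}})$ that the non-causal scheduler is permitted. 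Since these factors depend only on the past demands $\bvec{d}^{\brac{1:m-1}}$, they constitute a feasible choice in the non-causal problem: the feasibility requirements~\eqref{eq:constraint_on_gamma_1} and~\eqref{eq:constraint_on_gamma_2} are imposed realization-by-realization and are identical in the two settings, and the allocation constraint~\eqref{eq:constraint_on_phi} transfers verbatim.

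Next I would verify that the objective is unchanged under this embedding. The per-user rate $g_u^{\hat{\pi}, s}$ depends on the placement only through the aggregate puncturing $\sum_{m=1}^{\abs{\mathcal{M}}} \gamma_{u,m}^{\pi, s}(\bvec{D}^{\brac{1:m-1}}) D_m$ appearing inside $\E[1 - h_u^s(\cdot)]$, and this random variable is realized identically under $\pi$ and under $\hat{\pi}$, since both apply the same placement factors to the same demand sequence on the same allocation. Hence $g_u^{\hat{\pi}, s} = g_u^{\pi, s}$ for every $u$, and the two weighted objectives coincide. Taking the supremum over $\tilde{\Pi}$, and noting that each corresponding $\hat{\pi}$ lies in the full non-causal feasible set, shows that the full non-causal optimum is at least $\max_{\pi \in \tilde{\Pi}} \sum_{u} w_u g_u^{\pi, s}$. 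Finally I would invoke Lemma~\ref{lm:lemma_on_non_causal}, which guarantees that this full non-causal optimum is attained by a policy in $\Pi^{\dagger}$, so the full non-causal optimal value equals $\max_{\pi \in \Pi^{\dagger}} \sum_u w_u g_u^{\pi, s}$, closing the chain of inequalities.

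I expect the argument to be conceptually direct, with the main care required in bookkeeping rather than in any deep obstacle. The two points to handle precisely are: confirming that the causal placement factors, when reinterpreted as non-causal ones, still satisfy the per-minislot cap~\eqref{eq:constraint_on_gamma_2} and the normalization~\eqref{eq:constraint_on_gamma_1} in the non-causal constraint set; and ensuring the identification of the $\Pi^{\dagger}$ optimum with the full non-causal optimum is routed through Lemma~\ref{lm:lemma_on_non_causal}, since $\Pi^{\dagger}$ is the restricted (total-demand-dependent, minislot-homogeneous) non-causal class and a generic causal $\pi$ does not embed directly into it.
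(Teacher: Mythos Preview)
Your proposal is correct and follows essentially the same route as the paper. The paper's proof is a one-line appeal to Lemma~\ref{lm:lemma_on_non_causal}, asserting that any feasible solution for $\mathcal{OP}_1$ yields a feasible solution for $\mathcal{OP}_2$; your write-up simply unpacks this by making the intermediate ``full non-causal'' class explicit, embedding $\tilde{\Pi}$ into it trivially, and then invoking Lemma~\ref{lm:lemma_on_non_causal} to identify that optimum with the $\Pi^{\dagger}$ optimum---which is in fact the cleaner way to state the argument, since a generic causal policy does not itself lie in $\Pi^{\dagger}$.
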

	\begin{proof}
		This directly follows from the proof of Lemma~\ref{lm:lemma_on_non_causal} where we have shown that any URLLC placement factor $\bvecgreek{\gamma}_{u}^{\pi, s}$ can be transformed into a minislot-homogeneous  policy which depend only on the total URLLC demand in an eMBB slot, and hence, any  feasible solution for $\mathcal{OP}_1$ is a feasible solution for $\mathcal{OP}_2$.
	\end{proof}
	
	\subsubsection{Existence of an optimal solution independent of the value of $ D $}  	
	In general the optimal URLLC placement  policy under $\mathcal{OP}_2$ may depend on the total URLLC demand in an eMBB slot. However, under the Assumption~\ref{asm:assumption_on_convex_cost_functions} it is independent of the total URLLC demand.  This is stated formally in the following lemma. 
	\begin{lemma}
		\label{lm:optimality_on_lemma_on_upper_bound}
		Under Assumption~\ref{asm:assumption_on_convex_cost_functions}, there exists an optimal  solution  $\brac{\bvec{\phi}^{*, s}, \bvec{\gamma}^{*, s}\brac{\cdot}}$ for $\mathcal{OP}_2$ with URLLC placement policy ($\bvec{\gamma}^{*, s}\brac{\cdot}$)  independent of $D$.
	\end{lemma}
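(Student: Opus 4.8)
The plan is to fix the eMBB allocation at its optimal value $\bvec{\phi}^{*,s}$ and show that, among all URLLC placement functions $\bvec{\gamma}^{s}(\cdot)$, a \emph{constant} one is optimal for the resulting inner problem. Since the term $\sum_{u} w_u r_u^s \phi_u^{*,s}$ in the $\mathcal{OP}_2$ objective does not involve $\bvec{\gamma}$, maximizing the objective is equivalent to minimizing the expected-loss functional
$$
\Lambda\brac{\bvec{\gamma}(\cdot)} := \sum_{u \in \mathcal{U}} w_u r_u^s \phi_u^{*,s} \, \expect{h_u^s\brac{\frac{\gamma_u^{s}(D) D}{\phi_u^{*,s}}}}.
$$
Because the support of $D$ is a finite discrete set (the hypothesis of Theorem~\ref{thm:optimality_of_minislot_homogeneous_policy}), this expectation is a finite sum $\sum_{d} P(D=d)\, J_d\brac{\bvec{\gamma}^{s}(d)}$, where $J_d(\bvec{\gamma}) := \sum_{u} w_u r_u^s \phi_u^{*,s}\, h_u^s\brac{\gamma_u d/\phi_u^{*,s}}$. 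Crucially, there is no constraint linking $\bvec{\gamma}^{s}(d)$ and $\bvec{\gamma}^{s}(d')$ for $d \neq d'$, so the minimization over the whole function $\bvec{\gamma}^{s}(\cdot)$ decouples into independent minimizations of $J_d$ for each value $d$ in the support.

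The crux is then to show that the minimizer of $J_d$ over the feasible set does not depend on $d$, and this is exactly where Assumption~\ref{asm:assumption_on_convex_cost_functions} enters. Let $d_{\max}$ be the largest value in the support. Feasibility (the load must not exceed the allocation, enforced via the peak-based constraint~\eqref{eq:constraint_on_gamma_2} together with Assumption~\ref{sharingfactor-assumption}) guarantees $\gamma_u d_{\max}/\phi_u^{*,s} \in [0,1]$. Writing $\gamma_u d/\phi_u^{*,s} = \brac{d/d_{\max}}\cdot\brac{\gamma_u d_{\max}/\phi_u^{*,s}}$ and invoking homogeneity of $h_u^s$ with exponent $p$ then gives
$$
h_u^s\brac{\frac{\gamma_u d}{\phi_u^{*,s}}} = \brac{\frac{d}{d_{\max}}}^{p} h_u^s\brac{\frac{\gamma_u d_{\max}}{\phi_u^{*,s}}},
$$
and hence $J_d(\bvec{\gamma}) = \brac{d/d_{\max}}^{p}\, J_{d_{\max}}(\bvec{\gamma})$. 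Since $\brac{d/d_{\max}}^{p}$ is a strictly positive scalar (for $d>0$) multiplying a function of $\bvec{\gamma}$ that does not depend on $d$, and since the constraint set on $\bvec{\gamma}^{s}(d)$ is governed by the peak load $1-\delta$ rather than the realized $d$ (hence identical across all $d$), the argmin of $J_d$ coincides with the argmin of $J_{d_{\max}}$ for every $d>0$. For $d=0$ the loss vanishes for all $\bvec{\gamma}$, so the common minimizer is optimal there as well.

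Assembling the pieces: let $\bvec{\gamma}^{*,s}$ be a minimizer of $J_{d_{\max}}$ over the feasible region, which exists by continuity of the $h_u^s$ and compactness of the simplex-type constraint set. The constant policy $\bvec{\gamma}^{s}(d) \equiv \bvec{\gamma}^{*,s}$ simultaneously minimizes $J_d$ for every $d$ in the support, hence minimizes $\Lambda(\cdot)$ and maximizes the $\mathcal{OP}_2$ objective; pairing it with $\bvec{\phi}^{*,s}$ yields the claimed optimal solution with a $D$-independent placement. I expect the main obstacle to be the bookkeeping around the domain and feasibility constraints --- specifically, verifying that $\gamma_u d_{\max}/\phi_u^{*,s} \in [0,1]$ so that homogeneity may legitimately be applied, and confirming that the feasible region for $\bvec{\gamma}$ truly does not vary with $d$ (being pinned by $1-\delta$ rather than the realized demand). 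Convexity of $h_u^s$ plays only a supporting role here (ensuring $p \geq 1$ and well-posedness); the scaling argument rests entirely on homogeneity.
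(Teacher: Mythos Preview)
Your proposal is correct and is in fact a more elementary route than the paper's. The paper also fixes $\bvec{\phi}^{*,s}$ and analyses the inner problem in $\bvec{\gamma}^{s}(\cdot)$, but instead of your direct scaling argument it writes the KKT conditions for each realization $d$, observes that homogeneity makes the stationarity condition read $-w_u r_u^s d^{p} h_u^{s\prime}(\gamma_u/\phi_u^{*,s}) + \beta(d) + \eta_u(d) - \nu_u(d) - \lambda_u(d) = 0$, and then shows that if $(\gamma^{*,s}(d),\beta(d),\eta(d),\nu(d),\lambda(d))$ satisfies KKT for $d$, then the same primal point together with multipliers rescaled by $(\tilde d/d)^{p}$ satisfies KKT for $\tilde d$. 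Your argument bypasses this machinery by noting that $J_d(\bvec{\gamma}) = (d/d_{\max})^{p} J_{d_{\max}}(\bvec{\gamma})$ over a feasible set that is identical for every $d$, so the argmin is trivially common. What your approach buys is that it avoids invoking sufficiency of KKT (hence does not lean on convexity for the optimality certificate) and makes the decoupling across realizations explicit; what the paper's approach buys is that it makes the role of the simplex multiplier $\beta(d)$ visible, which clarifies that the result hinges on a \emph{common} homogeneity exponent $p$ across users. Both proofs share this implicit assumption: if the exponents $p_u^s$ differed across $u$, your factorization $J_d = (d/d_{\max})^{p} J_{d_{\max}}$ would fail, and in the paper's version the common multiplier $\beta(\tilde d)$ could not be consistently rescaled.
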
 
	\begin{proof}
		If $\brac{\bvec{\phi}^{*, s}, \bvec{\gamma}^{*, s}\brac{\cdot}}$ is an optimal solution to $\mathcal{OP}_2$, then $\bvec{\gamma}^{*, s}\brac{\cdot}$ must also be an optimal solution to the following optimization problem in $\bvecgreek{\gamma}^s:=\brac{\gamma_1^s(\cdot), \gamma_2^s(\cdot), \ldots, \gamma_{\abs{\mathcal{U}}}^s(\cdot) }$. 
		\begin{align}
		\underset{\bvecgreek{\gamma}^s}{\max} \quad \sum_{u \in \mathcal{U}}&
		w_u
		g^s_u
		(
		\phi^{*, s}_u,
		\gamma^s_u\brac{\cdot}
		)
		, \\ 
		\text{s.t.} \quad 
		\phi_u^{*, s} & \geq \brac{1-\delta}\gamma_u^{s}(d) \quad  \forall u,\, d, \\
		\sum_{u \in \mathcal{U}} \gamma^{s}_u(d) &=1 \mbox{  and  } \gamma_u^s(d) \in \sbrac{0,1} \quad \forall u,\, d. \\
		\end{align}
		For any $d$ and $u$, from the K.K.T. conditions for the above optimization problem, we have that
		\begin{equation}
		\label{eq:KKT_1}
		-w_u r_u^s d^p h_u^{s'}\brac{\frac{\gamma_u^{*, s}(d) }{\phi_u^{*, s}}} + \beta(d) + \eta_u(d)  -\nu_u(d) -\lambda_u(d)=0.
		\end{equation}
		where $h_u^{s'}(x) = \frac{d h_u^{s}(y)}{dy}\Bigr|_{\substack{y=x}},   $$\beta(d)$ is an arbitrary constant (function of $d$) and $\eta_u(d)$, $\nu_u(d)$ and $\lambda_u(d)$ are constants such that 
		\begin{align}
		\lambda_u(d) \brac{{\phi}_u^{*, s}(d) - {\gamma}_u^{*, s} (1-\delta)}= 0 \quad & \text{and }  \quad \lambda_u(d) \geq 0 \quad \forall u, \\
		\eta_u(d)  {\gamma}_u^{*, s} (d) =0 \quad & \text{and}  \quad \eta_u(d) \geq  0 \quad \forall u, \\
		\nu_u(d) \brac{1-{\gamma}_u^{*, s}(d)} =0 \quad & \text{and}  \quad \nu_u(d) \geq  0 \quad \forall u.
		\end{align}
		Note that we have used the fact that for a homogeneous loss functions $ h_u^{s'} (dx)=d^p h_u^{s'}(x) $. 
		For any $\tilde{d} \neq d$, if we choose $\beta(\tilde{d})=\beta(d)\frac{\tilde{d}^p}{d^p}$, $\eta_u(\tilde{d})= \eta_u(d)\frac{\tilde{d}^p}{d^p} $, $\nu_u(\tilde{d})=\nu_u(d) \frac{\tilde{d}^p}{d^p} $, and $\lambda_u(\tilde{d})= \lambda_u(d) \frac{\tilde{d}^p}{d^p}$, then from~\eqref{eq:KKT_1} $\gamma_u^{*, s} (d)$ and $\phi_u^{*, s}$ satisfy the K.K.T. condition for $ \tilde{d} $
		\begin{equation}
		-w_u r_u^s  \tilde{d}^p h^{s'}\brac{\frac{\gamma_u^{*, s}(d)}{\phi_u^{*, s}}} + \beta(\tilde{d}) + \eta_u(\tilde{d})  -\nu_u(\tilde{d}) -\lambda_u(\tilde{d})=0.
		\end{equation} 
		Hence, $\gamma_u^{*, s} (d)$ and $\phi_u^s$ are optimal for $\tilde{d}$ too.  Hence, we have a constructed an optimal solution with URLLC placement policy independent of $D$.   
	\end{proof}
	
	We have shown in Lemma~\ref{lm:optimality_on_lemma_on_upper_bound} that there exists an optimal policy $\brac{\bvec{\phi^{*, s}}, \bvec{\gamma^{*, s}}}$ which is a minislot-homogeneous  policy and independent of the realization of $D$. In Lemma~\ref{lm:lemma_on_upper_bound}, we have also shown that the optimal value of $\mathcal{OP}_2$ is an upper bound for $\mathcal{OP}_1$. Hence, there exists a minislot-homogeneous  policy which achieves an upper bound for $\mathcal{OP}_1$. Therefore, there exists a minislot-homogeneous  policy which is optimal for $\mathcal{OP}_1$.  

\subsection{Proof of Theorem~\ref{thm:horizontal_vs_vertical}}
\label{pf:horizontal_vs_vertical}
	Let $\mathcal{S}_k$ be the set of all subsets with $k$ elements chosen from the set $\cbrac{1,2, \ldots, m_1+m_2}$. For example, if $m_1+m_2=3$ and $k=2$, then $\mathcal{S}_k =\cbrac{\cbrac{1,2}, \cbrac{2,3}, \cbrac{1,3}}$. Note that  $\abs{\mathcal{S}_k}= \binom{\abs{\mathcal{M}}}{k}$.  Using the above definitions, we can re-write the R.H.S. of~\eqref{eq:result_on_vertical_vs_horizontal} as follows:
	\begin{multline}
	\expect{h_1^s \brac{\sum_{m=1}^{m_1 + m_2 } \phi_1 D(m)}} \\ = \expect{h_1^s\brac{ \frac{1}{ \binom{m_1 + m_2}{m_1}} \sum_{q \in \mathcal{S}_{m_1}}\brac{\sum_{m \in  q} D(m)}}}.
	\end{multline}
	Using the above expression one can apply Jensen's inequality on the R.H.S. of~\eqref{eq:result_on_vertical_vs_horizontal}, we have that
	\begin{multline}
	\expect{h_1^s \brac{\sum_{m=1}^{m_1 + m_2} \phi_1 D(m)}} \\ \leq \frac{1}{ \binom{m_1 + m_2}{m_1}} \sum_{q \in \mathcal{S}_{m_1}} \expect{h_1^s\brac{  {\sum_{m \in  q} D(m)}}}. 
	\end{multline}
	Since $D_m$'s are i.i.d. the R.H.S. of the above expression is same as the L.H.S. of~\eqref{eq:result_on_vertical_vs_horizontal}. Hence, proved.

\subsection{Proof of Theorem~\ref{thm:tp-optimality-theorem}} 
\label{pf:tp-optimality-theorem}

  Clearly the probability of loss depends on the minislot demands and
  the users thresholds. If one relaxes the sequential constraint on
  URLLC allocations, one can consider aggregating the the minislot
  demands and pooling together the users superposition/puncturing
  thresholds. The probability of loss for this relaxed system is
  simply the probability the demand exceeds the size of the
  superposition/puncturing pool, i.e., The probability of loss under
  the pooled resources is given by
$$
P(D \geq \sum_{u \in {\cal U}} \phi_{u}^{s} t_{u}^{s}(\phi_{u}^{s})).
$$
This is clearly a lower bound for any placement policy. Note however
that the threshold proportional strategy meets this bound from
Corollary~\ref{cor:tp} (see
Equation~\eqref{eqn:tp-bound}) so it indeed minimizes the probability of
loss on a given eMBB slot.

\end{document}